\documentclass[12pt]{article}
\usepackage{amsmath,amssymb,amsthm}
\usepackage{txfonts,bm,pifont}  
\usepackage[dvipdfmx]{graphicx}
\usepackage[dvipdfm,bookmarks=true,bookmarksnumbered=true,colorlinks=true]{hyperref}
\AtBeginDvi{}
\setlength{\textheight}{23cm}
\setlength{\textwidth}{16.5cm}
\setlength{\oddsidemargin}{0pt}
\setlength{\evensidemargin}{0pt}
\setlength{\topmargin}{-1cm}
\theoremstyle{plain}
\newtheorem{thm}{Theorem}[section]
\newtheorem{prop}[thm]{Proposition}
\newtheorem{defn}[thm]{Definition}

\newtheorem{rem}[thm]{Remark}
\newcommand{\R}{\mathbb{R}}
\newcommand{\Z}{\mathbb{Z}}
\newcommand{\SO}{\mathrm{SO}}
\newcommand{\SU}{\mathrm{SU}}
\newcommand{\tr}{\operatorname{tr}}

\DeclareMathOperator{\sgn}{sgn}
\makeatletter
\@addtoreset{equation}{section} 

\makeatother
\begin{document}
\begin{center}
\begin{Large}
Discrete mKdV and Discrete Sine-Gordon Flows on \\ Discrete Space Curves \\[5mm] 
\end{Large}
\begin{normalsize}
Jun-ichi {\sc Inoguchi}\\
Department of Mathematical Sciences,
Yamagata University\\
Yamagata 990-8560,
Japan\\[2mm] 
Kenji {\sc Kajiwara}\\
Institute of Mathematics for Industry, Kyushu University\\
744 Motooka, Fukuoka 819-0395, Japan\\[2mm]
Nozomu {\sc Matsuura}\\
Department of Applied Mathematics, Fukuoka University\\
Nanakuma 8-19-1, Fukuoka 814-0180 , Japan\\[2mm]
Yasuhiro {\sc Ohta}\\
Department of Mathematics, Kobe University\\
Rokko, Kobe 657-8501, Japan
\end{normalsize}
\end{center}

\begin{abstract}
In this paper, we consider the discrete deformation of the discrete space curves with constant
torsion described by the discrete mKdV or the discrete sine-Gordon equations, and show that it is
formulated as the torsion-preserving equidistant deformation on the osculating plane which satisfies
the isoperimetric condition. The curve is reconstructed from the deformation data by using the
Sym-Tafel formula. The isoperimetric equidistant deformation of the space curves does not preserve
the torsion in general. However, it is possible to construct the torsion-preserving deformation by
tuning the deformation parameters. Further, it is also possible to make an arbitrary choice of the
deformation described by the discrete mKdV equation or by the discrete sine-Gordon equation at each
step. We finally show that the discrete deformation of discrete space curves yields the discrete $K$-surfaces.
\end{abstract}

\section{Introduction}
It is well-known that there are deep connections between the differential geometry and the theory of
the integrable systems, and various integrable differential or difference equations arise as the
compatibility condition of the geometric objects. A typical example is that the surfaces with
constant negative curvature in the Euclidean space (the $K$-surfaces) are described by the
sine-Gordon equation under the Chebychev net parameterization. We refer to \cite{Rogers-Schief:book}
for the detail of such connections.

In accordance with the connection between the differential geometry and the continuous integrable
systems, the studies of {\em discrete differential geometry} started from the mid 1990's in
order to develop its discrete analogue. One of the themes of this area is to construct the geometric
framework, which is consistent with the theory of the discrete integrable systems. Some of the
motivations to study the discrete differential geometry may be, for example, an expectation that
discrete systems may be more fundamental and have rich mathematical structures as was clarified in
the the theory of the integrable systems, or the development of theoretical infrastructure for the
visualization or the simulation of large deformation of the geometric objects. As to the references
to the discrete differential geometry, we refer to \cite{Sauer:book} for an early and embryonic
literature, and to \cite{Bobenko-Suris:book} as a textbook from modern perspective and motivation.

In the deformation theory of the space or plane curves, the Frenet frame and its deformation are
described by the system of linear partial differential equations. The modified KdV (mKdV) or the
nonlinear Schr\"odinger (NLS) equation and their hierarchies arise naturally as the compatibility
condition, as shown by various researchers including Hasimoto and Lamb
\cite{Doliwa-Santini:PLA,Goldstein-Petrich,Hasimoto,Lamb,Langer-Perline,Nakayama_Segur_Wadati:PRL}.
Then in the development of studies of the discrete differential geometry, various continuous
deformations of the discrete curves have been studied. For example, continuous deformations of the
discrete plane and space curves have been considered in
\cite{Doliwa-Santini:JMP,Hisakado-Nakayama-Wadati,Hoffmann:LN,Hoffmann-Kutz,IKMO:JPA}, and
\cite{Doliwa-Santini:JMP,Hisakado-Wadati, Nakayama:JPSJ2007,Nishinari}, respectively, where the
deformations described by the differential-difference analogue of the mKdV and the NLS equations are
formulated. However, the discrete deformation of the discrete curves is not studied well compared to
the continuous deformation. For the plane discrete curves, the isoperimetric deformation described
by the discrete mKdV equation has been studied in \cite{IKMO:KJM,Matsuura:IMRN}. For the 
discrete space curve, the deformation by the discrete sine-Gordon equation have been discussed in
\cite{Doliwa-Santini:dsG}, and the deformation by the discrete NLS equation is formulated in
\cite{Hoffmann:dNLS,Pinkall:dNLS}.

The purpose of this paper is to present the formulation of the discrete analogue of the
isoperimetric deformation by the mKdV equation, which is the most fundamental integrable deformation
of the space curves. In most of the studies on the discrete deformation of the discrete curves
mentioned above, the deformation is described by the Frenet frame, namely, the orthonormal frame
associated with the curve which consists of the tangent, principal normal and the binormal vectors.
This is because the equations for the Frenet frame are nothing but the auxiliary linear problem in
the theory of the integrable systems. However, the deformation of the curves should be described as
the deformation of the position vector.  To this end, we need to `integrate' the deformation equation
for the Frenet frame. However, this procedure is nontrivial for the discrete deformation, and it is
undone in most cases. In \cite{IKMO:KJM,Matsuura:IMRN}, the discrete deformation of the plane
discrete curves described by the discrete mKdV equation has been formulated as the isoperimetric
deformation of the curves.  In this paper, we show that a torsion-preserving isoperimetric and
equidistant deformation for the discrete space curves with constant torsion is described by the
discrete mKdV and the discrete sine-Gordon equations.

This paper is organized as follows. In Section 2, we give a short summary of the $\SO (3)$--$\SU
(2)$ correspondence and that of their Lie algebras $\mathfrak{so}(3)$--$\mathfrak{su}(2)$ which are
frequently used in this paper, for the introduction of notations and the readers' convenience.  We
discuss the torsion-preserving isoperimetric deformation of the space curves with constant torsion
described by the mKdV equation in Section 3. We introduce in Section 4 the discrete space curves and
its Frenet frame, and present the discrete Frenet-Serret formula satisfied by the Frenet frame and
reconstruction of the curve from the Frenet frame by the Sym-Tafel formula. In Section 5, we discuss
the torsion-preserving isoperimetric deformation of the discrete space curves with constant torsion
described the the semi-discrete mKdV equation. In Section 6, we present the torsion-preserving
isoperimetric and equidistant deformation of the discrete space curves with constant torsion which is
the main result of this paper. The proof of the results in Section 6 is given in Section 7.  In
Section 8, we show that the discrete deformation of discrete space curves in Section 6 yields the
discrete $K$-surfaces.

\section{$\SO (3)$ -- $\SU (2)$ Correspondence}
The orthonormal frame of curves in $\R^3$ is given by the matrix in $\SO(3)$.  We sometimes discuss
after transforming it to the matrix in $\SU(2)$ and vice versa. In this section, we give a short
summary of this method\cite{Rogers-Schief:book}.

We choose the basis of $\mathfrak{su} (2)$ as
\begin{equation}\label{basis-su2}
e_1
=
\frac{\sqrt{- 1}}{2}
\left[\begin{array}{cc}
1 & 0\\
0 & - 1
\end{array}\right],\quad
e_2
=
\frac{\sqrt{- 1}}{2}
\left[\begin{array}{cc}
0 & 1\\
1 & 0
\end{array}\right],\quad
e_3
=
\frac{1}{2}
\left[\begin{array}{cc}
0 & - 1\\
1 & 0
\end{array}\right],
\end{equation}
where $e_i$ ($i=1,2,3$) satisfy the commutation relation
\begin{equation}\label{eqn:su2_commutation}
 [e_1,e_2]=e_3,\quad [e_2,e_3]=e_1,\quad [e_3,e_1]=e_2.
\end{equation}
\begin{prop}[$\mathfrak{su}(2)$ -- $\mathbb{R}^3$ correspondence]\label{prop:su(2)_R3}\rm
For $x = {}^{t} \left[x_1, x_2, x_3\right] \in \R^3$, we define the linear map $f\colon \R^3 \to
\mathfrak{su} (2)$ by
\begin{equation}\label{su2_so3:f}
f \left(x\right)
=
x_1 e_1 + x_2 e_2 + x_3 e_3.
\end{equation}
Then $f$ gives an isomorphism of the vector spaces between $\mathfrak{su}(2)$ and $\R^3$.  For
arbitrary $X, Y \in \mathfrak{su} (2)$, we also define the scalar product and the vector product in
$\mathfrak{su} (2)$ by
\begin{equation}
\label{innerproduct-su2}
\langle X, Y\rangle = - 2 \tr \left(X Y\right),\quad
X \times Y = \left[X, Y\right],
\end{equation}
respectively. Then $\mathfrak{su} (2)$ is isomorphic to $\R^3$ as a metric Lie algebra.
\end{prop}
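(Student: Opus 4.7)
The plan is to verify the three structural claims in turn: $f$ is a linear isomorphism of vector spaces, it preserves the scalar product, and it preserves the bracket/cross product. Once these are in hand, the "metric Lie algebra" conclusion follows immediately.

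First I would check that $\{e_1,e_2,e_3\}$ is a basis of $\mathfrak{su}(2)$. Since $\mathfrak{su}(2)$ consists of the traceless anti-Hermitian $2\times 2$ matrices, it is a $3$-dimensional real vector space, and one verifies by a direct inspection of their matrix entries that $e_1,e_2,e_3$ are $\R$-linearly independent; hence they form a basis. The map $f$ is $\R$-linear by construction, and it sends the standard basis of $\R^3$ to this basis, so $f$ is a vector space isomorphism.

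Next I would establish the metric compatibility $\langle f(x),f(y)\rangle = x_1 y_1 + x_2 y_2 + x_3 y_3$. By bilinearity it suffices to compute $-2\tr(e_i e_j)$ for $i,j\in\{1,2,3\}$. A short matrix calculation using the explicit expressions in (\ref{basis-su2}) gives
\begin{equation*}
-2\,\tr(e_i e_j) = \delta_{ij},
\end{equation*}
which means $f$ carries the Euclidean inner product on $\R^3$ to the bilinear form $\langle\cdot,\cdot\rangle$ on $\mathfrak{su}(2)$. For the Lie-algebra/cross-product compatibility, the commutation relations (\ref{eqn:su2_commutation}) exactly match the relations $E_i\times E_j = \varepsilon_{ijk}E_k$ satisfied by the standard basis $E_1,E_2,E_3$ of $(\R^3,\times)$, so bilinearity of both brackets yields $f(x\times y) = [f(x),f(y)]$.

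Putting these together, $f$ is a linear bijection that intertwines both the metric and the bracket, hence an isomorphism of metric Lie algebras. The only step that requires any real calculation is the trace identity $-2\tr(e_i e_j) = \delta_{ij}$, and even this reduces to nine entry-wise multiplications of $2\times 2$ matrices; there is no conceptual obstacle. The proof is essentially a bookkeeping exercise, and the only mild subtlety is remembering to normalize the scalar product with the factor $-2$ so that the orthonormality of $\{e_1,e_2,e_3\}$ comes out correctly given the factors of $\frac{1}{2}$ in the definitions (\ref{basis-su2}).
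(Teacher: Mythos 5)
Your proof is correct and complete. The paper itself offers no proof of this proposition --- it is stated as a standard fact, with the $\SO(3)$--$\SU(2)$ machinery of Section 2 attributed to \cite{Rogers-Schief:book} --- and your verification (linear independence of $e_1,e_2,e_3$, the trace identity $-2\tr(e_ie_j)=\delta_{ij}$, and matching the commutation relations \eqref{eqn:su2_commutation} against the cross-product relations of the standard basis of $\R^3$) is exactly the routine computation that the authors leave to the reader. The only cosmetic caveat is that your notation $E_1,E_2,E_3$ for the standard basis of $\R^3$ collides with the paper's use of $E_i$ for the basis of $\mathfrak{so}(3)$ in the subsequent proposition; context makes your meaning clear, but a different symbol would avoid confusion.
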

\begin{prop}[$\SU(2)$ -- $\SO(3)$ correspondence]\label{prop:SO(3)_SU(2)}\rm\hfill
\begin{enumerate}
\item For each $\phi \in \mathrm{SU} (2)$ we define a matrix $\varPhi$ by
\begin{equation}\label{eqn:varPhi}
\phi\, f \left(x\right) \phi^{- 1}
=
f \left(\varPhi x\right),\quad
x \in \R^3.
\end{equation}
Then $\varPhi \in \SO (3)$.
\item We write a matrix $\phi\in\mathrm{SU}(2)$ as
\begin{equation}\label{eqn:phi}
\phi
=
\left[\begin{array}{cc}
\alpha & \beta\\
- \beta^\ast & \alpha^\ast
\end{array}\right],\quad
\left|\alpha\right|^2 + \left|\beta\right|^2 = 1,
\end{equation}
where $*$ denotes the complex conjugate, then
$\varPhi\in\mathrm{SO}(3)$ defined in (1) is expressed as
\begin{equation}\label{eqn:varPhi_and_phi}
\varPhi
=
\left[\begin{array}{ccc}
\left|\alpha\right|^2 - \left|\beta\right|^2 &
2 \Re \left(\alpha \beta^\ast\right) &
- 2 \Im \left(\alpha \beta^\ast\right)\\
- 2 \Re \left(\alpha \beta\right) &
\Re \left(\alpha^2 - \beta^2\right) &
- \Im \left(\alpha^2 + \beta^2\right)\\
- 2 \Im \left(\alpha \beta\right) &
\Im \left(\alpha^2 - \beta^2\right) &
\Re \left(\alpha^2 + \beta^2\right)
\end{array}\right].
\end{equation}
\item Conversely, for given $\varPhi = \left[\varPhi_{i j}\right]\in\mathrm{SO}(3)$, the
corresponding $\phi\in\mathrm{SU}(2)$ is determined up to the sign as
\begin{equation}\label{eqn:phi_and_varPhi}
\phi
=
\left[\begin{array}{cc}
\alpha & \beta\\
- \beta^\ast & \alpha^\ast
\end{array}\right],\quad
\left[\begin{array}{c}
\alpha\\
\beta
\end{array}\right]
\;=\;
\pm
\dfrac{1}{2 \sqrt{1 + \tr \varPhi}}
\left[\begin{array}{c}
1 + \tr \varPhi
+ \sqrt{- 1}
\left(\varPhi_{32} - \varPhi_{23}\right)\\
\varPhi_{12} - \varPhi_{21}
+ \sqrt{- 1}
\left(\varPhi_{13} - \varPhi_{31}\right)
\end{array}\right].
\end{equation}
 \end{enumerate}
\end{prop}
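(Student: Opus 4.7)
The plan is to prove the three parts in order, using the isomorphism $f\colon\R^{3}\to\mathfrak{su}(2)$ from Proposition~\ref{prop:su(2)_R3}. For (1), I first observe that conjugation by $\phi\in\SU(2)$ maps $\mathfrak{su}(2)$ to itself: unitarity gives $(\phi X\phi^{-1})^{*}=\phi X^{*}\phi^{-1}=-\phi X\phi^{-1}$, and tracelessness is preserved by cyclicity of trace. The same cyclicity argument gives
\[
\langle \phi X\phi^{-1},\phi Y\phi^{-1}\rangle=-2\tr(\phi XY\phi^{-1})=-2\tr(XY)=\langle X,Y\rangle,
\]
so through $f$ the induced map $\varPhi$ is an isometry of $\R^{3}$, i.e.\ $\varPhi\in\mathrm{O}(3)$. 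To upgrade to $\SO(3)$ I would invoke connectedness of $\SU(2)\cong S^{3}$ together with continuity of $\phi\mapsto\varPhi$ and the fact that $\phi=I$ maps to $\varPhi=I$, forcing the image into the identity component.

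For (2), my approach is a direct computation: with $\phi^{-1}=\phi^{*}$, I evaluate $\phi e_{i}\phi^{-1}$ for $i=1,2,3$ using the explicit form (\ref{eqn:phi}) and read off the coefficients in the basis $\{e_{1},e_{2},e_{3}\}$ to obtain the $i$th column of $\varPhi$. Each entry appears as a quadratic expression in $\alpha,\beta$ and their conjugates, which I simplify using $|\alpha|^{2}+|\beta|^{2}=1$ to match (\ref{eqn:varPhi_and_phi}). For (3), I would invert (\ref{eqn:varPhi_and_phi}) by extracting $\Re\alpha$ from the trace and the remaining components of $\alpha,\beta$ from antisymmetric combinations of off-diagonal entries. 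Direct computation yields
\[
\tr\varPhi=4(\Re\alpha)^{2}-1,\quad\varPhi_{32}-\varPhi_{23}=4\Re(\alpha)\Im(\alpha),
\]
\[
\varPhi_{12}-\varPhi_{21}=4\Re(\alpha)\Re(\beta),\quad\varPhi_{13}-\varPhi_{31}=4\Re(\alpha)\Im(\beta).
\]
Packaging these into complex numbers and dividing by $2\sqrt{1+\tr\varPhi}=4|\Re\alpha|$ reproduces exactly (\ref{eqn:phi_and_varPhi}), with the overall $\pm$ matching $\sgn(\Re\alpha)$ and reflecting the standard $2$-to-$1$ covering $\SU(2)\to\SO(3)$.

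The main obstacle I foresee is not conceptual but the degenerate case $\tr\varPhi=-1$ in (3), where the denominator $\sqrt{1+\tr\varPhi}$ vanishes and $\Re\alpha=0$ (rotations by angle $\pi$). There the formula (\ref{eqn:phi_and_varPhi}) should be understood on the open dense chart $\tr\varPhi>-1$, or one instead solves for the nonzero component among $\Im\alpha,\Re\beta,\Im\beta$ directly from the diagonal of $\varPhi$ via $\Re(\alpha^{2}\pm\beta^{2})$ and $|\alpha|^{2}-|\beta|^{2}$. Aside from this edge case, parts (2) and (3) are mechanical bookkeeping with careful sign tracking (owing to the $\sqrt{-1}$ in $e_{1},e_{2}$ but not in $e_{3}$), and the only substantive content is the isometry-plus-connectedness argument in (1).
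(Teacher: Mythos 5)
Your proof is correct; note, though, that the paper itself gives no proof of this proposition --- Section 2 is explicitly a ``short summary'' of standard $\SU(2)$--$\SO(3)$ background cited to Rogers--Schief, so there is nothing to compare your argument against except the statement itself. Your route is the natural one and all the key identities check out: conjugation preserves anti-Hermiticity (via $\phi^{*}=\phi^{-1}$) and tracelessness, hence preserves $\mathfrak{su}(2)$ and the form $-2\tr(XY)$, and the connectedness of $\SU(2)\cong S^{3}$ together with $\phi=I\mapsto\varPhi=I$ pins down $\det\varPhi=+1$. For (2), reading off the columns of $\varPhi$ from $\phi e_{i}\phi^{-1}=\sum_{j}\varPhi_{ji}e_{j}$ does reproduce \eqref{eqn:varPhi_and_phi} (e.g.\ $\phi e_{1}\phi^{-1}=\tfrac{\sqrt{-1}}{2}\bigl[\begin{smallmatrix}|\alpha|^{2}-|\beta|^{2} & -2\alpha\beta\\ -2\alpha^{\ast}\beta^{\ast} & |\beta|^{2}-|\alpha|^{2}\end{smallmatrix}\bigr]$ gives the first column), and your four identities $\tr\varPhi=4(\Re\alpha)^{2}-1$, $\varPhi_{32}-\varPhi_{23}=4\Re(\alpha)\Im(\alpha)$, $\varPhi_{12}-\varPhi_{21}=4\Re(\alpha)\Re(\beta)$, $\varPhi_{13}-\varPhi_{31}=4\Re(\alpha)\Im(\beta)$ are all correct, so the two bracketed quantities in \eqref{eqn:phi_and_varPhi} equal $4\Re(\alpha)\,\alpha$ and $4\Re(\alpha)\,\beta$ and the division by $2\sqrt{1+\tr\varPhi}=4|\Re\alpha|$ yields $\pm(\alpha,\beta)$ as claimed. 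The degenerate locus $\tr\varPhi=-1$ (rotations by $\pi$, $\Re\alpha=0$) is a genuine limitation of formula \eqref{eqn:phi_and_varPhi} as printed, not of your argument; the paper does not address it either, and your suggested fallback (solving for $\Im\alpha,\Re\beta,\Im\beta$ from the diagonal entries) is the standard remedy.
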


There exists an isomorphism between the Lie algebras $\mathfrak{su}(2)$ and $\mathfrak{so}(3)$ which is
consistent with the $\mathrm{SU}(2)$--$\mathrm{SO}(3)$ correspondence in Proposition
\ref{prop:SO(3)_SU(2)}.

\begin{prop}[$\mathfrak{su}(2)$ -- $\mathfrak{so}(3)$ correspondence]\rm 
We define the basis of $\mathfrak{so}(3)$ $E_i$ ($i=1,2,3$) as
\begin{equation}
E_1
=
\left[\begin{array}{ccc}
 0 & 0 & 0\\
0 & 0 & - 1\\
0 & 1 & 0
\end{array}\right],\quad
E_2
=
\left[\begin{array}{ccc}
 0 & 0 & 1\\
0 & 0 & 0\\
- 1 & 0 & 0
\end{array}\right],\quad
E_3
=
\left[\begin{array}{ccc}
 0 & - 1 & 0\\
1 & 0 & 0\\
0 & 0 & 0
\end{array}\right].
\end{equation} 
Then, $\mathfrak{su}(2)$ and $\mathfrak{so} (3)$ are isomorphic by
 the correspondence $e_i\leftrightarrow E_i\, \left(i = 1, 2, 3\right)$.
\end{prop}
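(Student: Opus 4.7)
The plan is to verify three things in sequence: that the given $E_i$ really do form a basis of $\mathfrak{so}(3)$, that the linear map $e_i \mapsto E_i$ preserves the Lie bracket, and that this Lie algebra isomorphism is compatible with the group isomorphism of Proposition \ref{prop:SO(3)_SU(2)}.

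First I would observe that each $E_i$ is a real skew-symmetric $3\times 3$ matrix, hence lies in $\mathfrak{so}(3)$, and that $E_1, E_2, E_3$ are manifestly linearly independent; since $\dim \mathfrak{so}(3) = 3$, they constitute a basis. Next, by direct matrix multiplication I would compute
\begin{equation*}
[E_1, E_2] = E_3,\quad [E_2, E_3] = E_1,\quad [E_3, E_1] = E_2,
\end{equation*}
which match the commutation relations \eqref{eqn:su2_commutation} for $e_1, e_2, e_3$. Hence the linear extension of $e_i \mapsto E_i$ is a Lie algebra isomorphism $\mathfrak{su}(2) \to \mathfrak{so}(3)$.

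For the compatibility claim, the point is that differentiating the defining relation \eqref{eqn:varPhi} at the identity recovers the Lie algebra map. Take a smooth curve $\phi(t) \in \SU(2)$ with $\phi(0) = I$ and $\phi'(0) = e_i$; the corresponding $\varPhi(t) \in \SO(3)$ satisfies $\phi(t) f(x) \phi(t)^{-1} = f(\varPhi(t) x)$. Differentiating at $t = 0$ gives $[e_i, f(x)] = f(\varPhi'(0) x)$, so after identifying $\R^3$ with $\mathfrak{su}(2)$ via $f$, the matrix $\varPhi'(0) \in \mathfrak{so}(3)$ represents the operator $\operatorname{ad}(e_i) = [e_i, \cdot]$ in the basis $\{e_1, e_2, e_3\}$. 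Using \eqref{eqn:su2_commutation} one reads off the columns of $\operatorname{ad}(e_1)$ as $0, e_3, -e_2$, giving exactly $E_1$; the same check for $i = 2, 3$ yields $\operatorname{ad}(e_i) = E_i$, which is the desired consistency.

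I expect no substantial obstacle: both the verification of the bracket relations and the differentiation argument are routine. The only thing one has to be a little careful about is the convention: the correspondence is determined by the commutation table, so as long as the $E_i$ are defined so that $[E_i, E_j]$ reproduces \eqref{eqn:su2_commutation}, nothing more is needed. The harder work has already been absorbed into Proposition \ref{prop:SO(3)_SU(2)}, which guarantees that the group-level conjugation really produces an element of $\SO(3)$; the Lie-algebra statement is then essentially its infinitesimal shadow.
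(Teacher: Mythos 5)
Your argument is correct: the bracket computations $[E_1,E_2]=E_3$, $[E_2,E_3]=E_1$, $[E_3,E_1]=E_2$ do match \eqref{eqn:su2_commutation}, and your differentiation of \eqref{eqn:varPhi} correctly identifies $E_i$ with the matrix of $\operatorname{ad}(e_i)$ in the basis $\{e_1,e_2,e_3\}$, which is exactly the sense in which the correspondence is ``consistent'' with Proposition \ref{prop:SO(3)_SU(2)}. The paper itself offers no proof of this proposition --- it is stated as a standard fact (with the general background deferred to \cite{Rogers-Schief:book}) --- so there is nothing to compare against; your verification is the routine one and is complete.
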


\section{Isoperimetric Deformation of Space Curves by mKdV Equation}

Let $\gamma(x,t)$ be a family of the arc-length parameterized space curves.
Here, $x$ is the arc-length at each time $t$. For each $t$, we define the tangent vector 
$T (x, t)$, the principal normal vector $N (x, t)$ and the binormal vector $B (x, t)$ by
\begin{equation}\label{TNB:continuous}
 T(x,t)=\gamma'(x,t),\quad N(x,t)=\frac{\gamma''(x,t)}{|\gamma''(x,t)|},\quad
B(x,t) = T(x,t)\times N(x,t),
\end{equation}
respectively. We also define the curvature $\kappa (x, t)$, the torsion $\lambda (x, t)$ by
\begin{equation}
 \kappa(x,t)=|\gamma''(x,t)|,\quad \lambda(x,t)=-\langle N(x,t),B'(x,t)\rangle,
\end{equation}
respectively. Here $'=\partial/\partial x$，$\dot{}=\partial /\partial t$.
We assume that the torsion is a constant with respect to $x$, namely, 
$\lambda (x, t) = \lambda(t)$, and define the deformation of the curves by
\begin{equation}
\dot{\gamma}
=\label{def:motion}
\left(\frac{\kappa^2}{2} - 3 \lambda^2\right) T
+ \kappa' N
- 2 \lambda \kappa B.
\end{equation}
Then we have the following\cite{Lamb}．
\begin{prop}\label{prop:smooth}\rm\hfill
\begin{enumerate}
\item The arc-length $x$ and the torsion $\lambda$ do not depend on $t$. Namely,
\eqref{def:motion} gives a torsion-preserving isoperimetric deformation of curves.
\item The curvature $\kappa$ satisfies the mKdV equation
\begin{equation}
\dot{\kappa}
=\label{mkdv}
\dfrac{\,3\,}{2} \kappa^2 \kappa'
+ \kappa'''.
\end{equation}
\item The Frenet frame $\varPhi = \left[T, N, B\right]$ satisfies
\begin{align}
& \varPhi'=\varPhi L,\quad 
L=\left[\begin{array}{ccc}
0&-\kappa & 0\\
\kappa & 0 &-\lambda\\
0 & \lambda &0\end{array}\right], \label{Frenet-Serret:continuous}\\
&\dot\varPhi = \varPhi M,\quad
M=\left[
\begin{array}{ccc}
0& - \frac{\kappa^3}{2}+\lambda^2\kappa - \kappa''     & \lambda \kappa'\\
\frac{\kappa^3}{2}-\lambda^2\kappa + \kappa'' & 0 & \lambda\left(-\frac{\kappa^2}{2}+\lambda^2\right)\\
-\lambda \kappa' & \lambda\left(\frac{\kappa^2}{2}-\lambda^2\right) & 0 
\end{array}
\right].
\label{deformation:continuous}
\end{align}
\end{enumerate}
\end{prop}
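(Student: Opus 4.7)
The plan is to verify the three assertions by direct computation based on the Frenet–Serret formulas $T' = \kappa N$, $N' = -\kappa T + \lambda B$, $B' = -\lambda N$ together with the hypothesis $\lambda' = 0$.

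For the arc-length preservation in (1), I would first compute $\dot T = \partial_x \dot\gamma$ by differentiating the right-hand side of \eqref{def:motion} in $x$ and substituting Frenet–Serret. The $T$-component turns out to vanish identically: the term $\kappa\kappa' T$ coming from $\partial_x\bigl((\kappa^2/2-3\lambda^2)T\bigr)$ is exactly cancelled by the $-\kappa\kappa' T$ produced by $\kappa' N' = \kappa'(-\kappa T + \lambda B)$, while the contributions in $N$ and $B$ combine to
\begin{equation*}
\dot T = \bigl(\tfrac{\kappa^3}{2} - \lambda^2\kappa + \kappa''\bigr) N - \lambda\kappa'\,B.
\end{equation*}
Since $\dot T \perp T$, $|T|$ is conserved; as $|T|=1$ initially, the arc-length parameterization is preserved.

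Next I would derive (2) and most of (3) simultaneously. Because the frame $(T,N,B)$ remains orthonormal, $M := \varPhi^{-1}\dot\varPhi$ is skew-symmetric, and its first column is read off from the formula for $\dot T$ above. To obtain the second column I would differentiate $T' = \kappa N$ in $t$ and equate with $\partial_x\dot T$ expanded once more via Frenet–Serret. Matching the $N$-component (and using $\langle\dot N,N\rangle=0$) yields
\begin{equation*}
\dot\kappa = \alpha' - \beta\lambda, \qquad \alpha = \tfrac{\kappa^3}{2}-\lambda^2\kappa+\kappa'',\ \ \beta=-\lambda\kappa',
\end{equation*}
which, after a short simplification that depends crucially on $\lambda'=0$, collapses to $\dot\kappa = \tfrac{3}{2}\kappa^2\kappa' + \kappa'''$, proving (2). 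Matching the $T$- and $B$-components then gives $\dot N = -\alpha T + \mu B$ with $\mu = \lambda(\kappa^2/2 - \lambda^2)$; skew-symmetry of $M$ fills in the remaining entries and reproduces the matrix displayed in \eqref{deformation:continuous}.

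Finally, for torsion preservation in (1), I would differentiate the identity $\lambda = \langle N', B\rangle$ in $t$ and substitute the expressions for $\dot N$ and $\dot B$ just obtained. A brief computation shows that $\langle \partial_x\dot N, B\rangle = \mu' = \lambda\kappa\kappa'$ and $\langle N',\dot B\rangle = -\lambda\kappa\kappa'$, so the two contributions cancel and $\dot\lambda = 0$. The individual manipulations are routine; the only points demanding care are the logical order—arc-length preservation must be secured before one may differentiate $\gamma''=\kappa N$ in $t$ to obtain $\dot\kappa$ and the lower rows of $M$, and torsion preservation is then deduced from the explicit form of $M$—and the $T$-component cancellation in the first step, which is the algebraic identity that singles out the specific coefficients $\kappa^2/2-3\lambda^2$, $\kappa'$, $-2\lambda\kappa$ in the ansatz \eqref{def:motion}.
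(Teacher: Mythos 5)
Your proposal is correct and follows essentially the same route as the paper: compute $\dot T=\partial_x\dot\gamma$ via Frenet--Serret, read off the mKdV equation from the $N$-component of $\partial_x\dot T$, and verify isoperimetricity from $\dot T\perp T$ and torsion preservation by differentiating the defining inner product for $\lambda$ in $t$. The only cosmetic differences are that you obtain $\dot B$ from skew-symmetry of $\varPhi^{-1}\dot\varPhi$ rather than from $B=T\times N$, and you use $\lambda=\langle N',B\rangle$ in place of $\lambda=-\langle N,B'\rangle$; the substance of the computation is identical.
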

\begin{proof}
First, we note that $x$ being the arc-length parameter is equivalent to $\langle
T,T\rangle=1$. Differentiation of both sides by $x$ yields $\langle T',T\rangle=0$, which
implies that $T$ and $N$ are orthogonal.  From this and \eqref{TNB:continuous} we have $N\times
B=T$, $B\times T=N$. Moreover, \eqref{Frenet-Serret:continuous} is nothing but the Frenet-Serret
formula, which follows immediately from the definitions of the Frenet frame, the curvature and the
torsion. We show (2) by noticing those notes. Differentiating both sides of $\kappa^2=\langle
T',T'\rangle$, we have by using \eqref{Frenet-Serret:continuous}
\begin{displaymath}
 \dot\kappa \kappa = \langle \dot{T}',T'\rangle = \kappa\langle \dot{T}',N\rangle.
\end{displaymath}
Differentiating both sides of \eqref{def:motion} twice and noticing
\eqref{Frenet-Serret:continuous}, we have
\begin{align}
&\dot{T}=\left(\frac{\kappa^3}{2}-\lambda^2\kappa+\kappa''\right)N - \lambda\kappa' B,\label{dotT}\\
& \dot{T}' = -\kappa\left(\frac{\kappa^3}{2}-\lambda^2\kappa+\kappa''\right)T
+\left(\frac{3}{2}\kappa^2\kappa'+\kappa'''\right)N 
+ \lambda\kappa\left(\frac{\kappa^2}{2}-\lambda^2\right)B,
\label{dotT'}
\end{align}
from which we obtain the mKdV equation for $\kappa$
\begin{displaymath}
\dot\kappa=\langle \dot{T}',N\rangle=\frac{3}{2}\kappa^2\kappa'+\kappa'''.
\end{displaymath}

We next show (1). Independence of the arc-length $x$ from $t$ is equivalent to $\langle
T,T\rangle=1$ for all $t$. Thus it is sufficient to show $\langle T, \dot{T}\rangle=0$ which follows
from differentiation of $\langle T,T\rangle=1$ by $t$, but it follows immediately from \eqref{dotT}.
It is easily shown that the torsion $\lambda$ does not depend on $t$ as follows. Differentiating
both sides of $\lambda=-\langle N,B'\rangle$ by $t$, we have
\begin{equation}\label{dotlambda}
 \dot\lambda=-\langle \dot{N},B'\rangle - \langle N,\dot{B}'\rangle =\lambda\langle \dot{N},N\rangle
- \langle N,\dot{B}'\rangle.
\end{equation}
Then differentiating both sides of $N=\frac{T'}{\kappa}$ by $t$ and rewriting it by using
\eqref{mkdv}, \eqref{Frenet-Serret:continuous} and \eqref{dotT'}, we obtain
\begin{equation}\label{dotN}
 \dot{N}=-\kappa\left(\frac{\kappa^3}{2}-\lambda^2\kappa+\kappa''\right)T 
+ \lambda\kappa\left(\frac{\kappa^2}{2}-\lambda^2\right)B.
\end{equation}
Moreover, differentiating both sides of $B=T\times N$ by $t$ and using \eqref{dotT} and \eqref{dotN}
we get
\begin{equation}\label{dotB}
 \dot{B}=\lambda\kappa'T - \lambda\left(\frac{\kappa^2}{2}-\lambda^2\right)N.
\end{equation}
Further, differentiating both sides of \eqref{dotB} by $x$ and using 
\eqref{Frenet-Serret:continuous} we obtain
\begin{equation}\label{dotB'}
 \dot{B}' = \lambda\left(\frac{\kappa^3}{2}-\lambda^2\kappa+\kappa''\right)T 
- \lambda^2\left(\frac{\kappa^2}{2}-\lambda^2\right)B.
\end{equation}
Substituting \eqref{dotN} and \eqref{dotB'} into \eqref{dotlambda} immediately yields
$\dot{\lambda}=0$. Finally，(3) is derived from \eqref{dotT}, \eqref{dotN} and \eqref{dotB}.
\end{proof}
\begin{rem}\rm
In \cite{Lamb}, the deformation of the curves \eqref{def:motion} is introduced under the assumption
of isoperimetricity ($\dot{x}=0$) and preservation of the constant torsion
($\lambda'=\dot\lambda=0$). Then the mKdV equation \eqref{mkdv} for the curvature and the equations
for the Frenet frame \eqref{Frenet-Serret:continuous}, \eqref{deformation:continuous} are derived
under this assumption. On the other hand, Proposition \ref{prop:smooth} claims that isoperimetricity
and preservation of the constant torsion follow from \eqref{def:motion}.
\end{rem}
If we lift $\varPhi$ to an $\mathrm{SU} (2)$-valued function $\phi$ by using Proposition
\ref{prop:SO(3)_SU(2)}, we see that $\phi$ satisfies
\begin{gather}
\phi'
=\label{akns.mkdv.x}
\phi L,\quad
L
=
\dfrac{1}{2}
\left[\begin{array}{cc}
\sqrt{- 1} \lambda & - \kappa\\
\kappa & - \sqrt{- 1} \lambda
\end{array}\right],\\
\dot{\phi}
=\label{akns.mkdv.t}
\phi M,\quad
M
=
\dfrac{1}{2}
\left[\begin{array}{cc}\smallskip
\sqrt{- 1} \lambda
\left(\frac{\kappa^2}{2} - \lambda^2\right) &
- \frac{\kappa^3}{2} + \lambda^2 \kappa- \kappa'' 
+ \sqrt{- 1} \lambda \kappa'\\
\frac{\kappa^3}{2} - \lambda^2 \kappa + \kappa'' 
+  \sqrt{- 1} \lambda \kappa' &
- \sqrt{- 1} \lambda
\left(\frac{\kappa^2}{2} - \lambda^2\right)
\end{array}\right].
\end{gather}
The integrability condition (compatibility condition) $\dot{L} - M'- \left[L, M\right] = 0$ for the system of
partial differential equations \eqref{Frenet-Serret:continuous}--\eqref{deformation:continuous} or 
\eqref{akns.mkdv.x}--\eqref{akns.mkdv.t} yields the mKdV equation. In particular, 
\eqref{akns.mkdv.x}--\eqref{akns.mkdv.t} coincides with the AKNS representation of the mKdV equation\cite{AKNS}.
Moreover, the torsion $\lambda$ corresponds to the spectral parameter.
\begin{prop}[The Sym-Tafel formula\cite{Sym}]\rm\label{prop:sym:space-curve}
For a solution of the mKdV equation \eqref{mkdv} $\kappa = \kappa \left(x, t\right)$ and
a constant $\lambda$, let $\phi = \phi \left(x, t,\lambda\right)\in\mathrm{SU}(2)$ be a solution of the system of the partial differential equations
\eqref{akns.mkdv.x}--\eqref{akns.mkdv.t}.  For the function $S = \left[S_{i j}\right]\in\mathfrak{su}(2)$ determined by the Sym-Tafel formula
\begin{equation}
S
=\label{sym-tafel}
\left(\dfrac{\partial}{\partial \lambda} \phi\right)
\phi^{- 1}
\end{equation}
we put 
\begin{equation*}
\gamma
=
f^{- 1} \left(S\right)
=
2
\left[\begin{array}{c}
\Im S_{11}\\
\Im S_{21}\\
\Re S_{21}
\end{array}\right]
\end{equation*}
according to the isomorphism $f\colon \R^3 \to \mathfrak{su} (2)$ defined by \eqref{su2_so3:f}.
Then for each $t$, $\gamma$ is a space curve parameterized by the arc-length $x$, and moreover, the
curvature and the torsion are given by $\kappa$ and $\lambda$, respectively. Namely, $\gamma$ gives
the torsion-preserving isoperimetric deformation of the space curves with constant torsion described
by the mKdV equation.
\end{prop}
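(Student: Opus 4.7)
The plan is to use the $\SU(2)$--$\SO(3)$ correspondence to transport the whole computation to $\mathfrak{su}(2)$ and then extract the Frenet data of $\gamma$ by reading off successive $x$-derivatives of $S$. First I would compute $S'$ from the definition \eqref{sym-tafel}, commuting $\partial_x$ with $\partial_\lambda$ and using $\phi' = \phi L$ together with $(\phi^{-1})' = -L\phi^{-1}$. The $\partial_\lambda\phi$ terms cancel and one is left with the clean identity $S' = \phi\,(\partial_\lambda L)\,\phi^{-1}$. Reading $\partial_\lambda L = e_1$ directly off \eqref{akns.mkdv.x}, this becomes $S' = \phi e_1 \phi^{-1}$, which by \eqref{eqn:varPhi} is $f$ applied to the first column of the associated $\varPhi \in \SO(3)$. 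Hence $\gamma' = f^{-1}(S')$ is a column of an orthogonal matrix, so $|\gamma'|=1$ and $\gamma$ is arc-length parameterized with tangent $T = \gamma'$.

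Differentiating once more yields $S'' = \phi[L, e_1]\phi^{-1}$. Writing $L = \lambda e_1 + \kappa e_3$, which one reads directly from \eqref{akns.mkdv.x}, and applying the commutation relations \eqref{eqn:su2_commutation}, the bracket collapses to $[L, e_1] = \kappa e_2$, so $\gamma'' = \kappa N$ with $N = f^{-1}(\phi e_2 \phi^{-1})$. This simultaneously gives $|\gamma''| = \kappa$ and identifies the principal normal as $N$. For the torsion I would use that $f$ sends the cross product to the Lie bracket by \eqref{innerproduct-su2}, so the binormal corresponds to $\phi[e_1, e_2]\phi^{-1} = \phi e_3 \phi^{-1}$; differentiating in $x$ and using $[L, e_3] = -\lambda e_2$ gives $B' = -\lambda N$, whence $-\langle N, B'\rangle = \lambda$ as claimed.

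The entire argument is carried out at fixed $t$, so the consistency of the $t$-evolution with the mKdV flow need not be rechecked here: it is already encoded in Proposition~\ref{prop:smooth} via the compatibility of \eqref{akns.mkdv.x}--\eqref{akns.mkdv.t}. No single step is genuinely hard; the main ``obstacle'' is really just careful bookkeeping of the correspondence between the columns of $\varPhi$ and the basis $\{e_1, e_2, e_3\}$, and of the $\sqrt{-1}$ signs in \eqref{basis-su2}. The substantive geometric content is concentrated in the single identity $\partial_\lambda L = e_1$: this is what forces the Sym--Tafel tangent to have unit length, and it is the structural reason why the spectral parameter $\lambda$ appearing in \eqref{akns.mkdv.x}--\eqref{akns.mkdv.t} must coincide with the geometric torsion.
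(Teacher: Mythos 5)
Your treatment of the $x$-direction is correct and is essentially the paper's own argument: differentiate \eqref{sym-tafel} in $x$ to get $S'=\phi\,(\partial_\lambda L)\,\phi^{-1}=\phi e_1\phi^{-1}$, then iterate with the brackets $[L,e_1]=\kappa e_2$ and $[L,e_3]=-\lambda e_2$ to read off $T$, $N$, $B$, the unit speed, the curvature $\kappa$ and the torsion $\lambda$. The decomposition $L=\lambda e_1+\kappa e_3$ and the identification of the binormal via $[e_1,e_2]=e_3$ are exactly right.

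However, there is a genuine gap in how you dispose of the $t$-direction. You claim the consistency of the time evolution ``need not be rechecked'' because it is ``already encoded in Proposition~\ref{prop:smooth}.'' That proposition goes in the opposite direction: it starts from a curve deformed by \eqref{def:motion} and derives the linear system \eqref{deformation:continuous}; it says nothing about a curve \emph{defined} by the Sym--Tafel formula from a solution $\phi$ of \eqref{akns.mkdv.x}--\eqref{akns.mkdv.t}. What your argument establishes is only that each fixed-$t$ slice is an arc-length parameterized curve with curvature $\kappa(\cdot,t)$ and torsion $\lambda$; it does not establish that the velocity $\dot\gamma$ of the family is the specific vector field \eqref{def:motion} (a priori the slices could differ from that flow by a $t$-dependent rigid motion, which would still be isoperimetric and torsion-preserving). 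The paper closes this by the one-line computation
\begin{equation*}
\dot{S}
=\phi \left(\dfrac{\partial}{\partial \lambda} M\right)\phi^{- 1}
=\phi\left\{\left(\dfrac{\kappa^2}{2} - 3 \lambda^2\right) e_1+ \kappa' e_2- 2 \lambda \kappa e_3\right\}\phi^{- 1},
\end{equation*}
which, read through $T=\phi e_1\phi^{-1}$, $N=\phi e_2\phi^{-1}$, $B=\phi e_3\phi^{-1}$, is precisely \eqref{def:motion}. You have all the machinery to do this (it is the exact $t$-analogue of your computation of $S'$), but as written the proposal omits the step that actually identifies the deformation, so the final clause of the proposition is not proved.
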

\begin{proof}
First, we note the isomorphism between $\mathfrak{su}(2)$ and $\R^3$ given in Proposition \ref{prop:su(2)_R3}.
Since we have 
\begin{equation*}
S'
=
\phi \left(\dfrac{\partial}{\partial \lambda} L\right) \phi^{- 1}
=
\phi\, e_1 \phi^{- 1}
\end{equation*}
by differentiating the Sym-Tafel formula \eqref{sym-tafel} by $x$, 
we see $\left|S'\right| = 1$. Putting $T=S'$ and differentiating by $x$ once more, we get
$T'= \phi \left[L, e_1\right] \phi^{- 1}= \kappa\, \phi e_2 \phi^{- 1}$.
So taking the Frenet-Serret formula (\ref{Frenet-Serret:continuous}) into account, we put
$N = \phi e_2 \phi^{- 1}$. From $N'= - \kappa T + \lambda\, \phi e_3 \phi^{- 1}$
we also put $B = \phi e_3 \phi^{- 1}$. Then it holds that $T \times N = B$ and $B' = - \lambda N$.
Therefore we have shown that $\left[T, N, B\right]$ satisfies the Frenet-Serret formula.
Next, differentiating the Sym-Tafel formula \eqref{sym-tafel} by $t$, we have
\begin{equation*}
\dot{S}
=
\phi \left(\dfrac{\partial}{\partial \lambda} M\right)
\phi^{- 1}
=
\phi
\left\{\left(\dfrac{\kappa^2}{2} - 3 \lambda^2\right) e_1
+ \kappa' e_2
- 2 \lambda \kappa e_3\right\}
\phi^{- 1},
\end{equation*}
which coincides with the definition of the isoperimetric deformation 
\eqref{def:motion}.
\end{proof}
Suppose that we are going to reconstruct the isoperimetric deformation of space curves from the
specified values of the curvature and the constant torsion. Since the curvature is determined from
the second derivative of the curve, it is necessary to integrate twice in order to reconstruct the
curves. The Sym-Tafel formula \eqref{sym-tafel} claims that if we know all of the entries of the
matrix $\phi$ explicitly (namely, first integration has been performed by a certain method and we
have the explicit form of the tangent vector), then the position vectors of curves at each time are
obtained without the second integration.
\section{Discrete Space Curve}
\label{section:space_discrete_curve}
In this section, we introduce the discrete space curve and its Frenet frame, and discuss the
Frenet-Serret formula and the Sym-Tafel formula.

\begin{defn}[\cite{Eyring,Sauer:book}]\rm\hfill
\begin{enumerate}
 \item For a map $\gamma\colon \Z \to \R^3,\,n \mapsto \gamma_n$, if any consecutive three points
$\gamma_{n +1}$, $\gamma_n$ and $\gamma_{n - 1}$ are not colinear, then 
we call $\gamma$ a {\em discrete space curve}.
 \item For a discrete space curve $\gamma$ we set
\begin{equation}
\epsilon_n = \left|\gamma_{n + 1} - \gamma_n\right|
\end{equation}
and introduce
\begin{equation}
T_n = \dfrac{\gamma_{n + 1} - \gamma_n}{\epsilon_n},\quad
N_n = B_n \times T_n,\quad
B_n = \dfrac{T_{n - 1} \times T_n}{\left|T_{n - 1} \times T_n\right|}
\end{equation}
which we call the {\em tangent vector}, the {\em (principal) normal vector} and the {\em binormal vector} respectively.
We also call the matrix valued function $\varPhi = \left[T, N, B\right]\colon \Z \to \SO (3)$ the {\em Frenet frame}
of $\gamma$.
\end{enumerate}
\end{defn}
By definition, we see that the Frenet frame $\varPhi_n=[T_n,N_n,B_n]$ satisfies the following difference equation:
\begin{equation}
\varPhi_{n + 1}
=\label{eq:frenet-serret:d.space-curve}
\varPhi_n
R_1 \left(- \nu_{n + 1}\right)
R_3 \left(\kappa_{n + 1}\right),
\end{equation}
where $R_1, R_3$ are rotation matrix given by
\begin{equation}\label{eqn:rotation_matrix}
R_1 \left(\theta\right)
=
\left[\begin{array}{ccc}
1 & 0 & 0\\
0 & \cos \theta & - \sin \theta\\
0 & \sin \theta & \cos \theta
\end{array}\right],\quad
R_3 \left(\theta\right)
=
\left[\begin{array}{ccc}
\cos \theta & - \sin \theta & 0\\
\sin \theta & \cos \theta & 0\\
0 & 0 & 1
\end{array}\right],
\end{equation}
and $\kappa\colon \Z \to \left(0, \pi\right)$, 
$\nu\colon \Z \to \left[- \pi, \pi\right)$ are the angles defined by
\begin{equation}\label{kappa_and_nu:discrete}
\langle T_n, T_{n - 1}\rangle = \cos \kappa_n,\quad
\langle B_n, B_{n - 1}\rangle = \cos \nu_n,\quad
\langle B_n, N_{n - 1}\rangle = \sin \nu_n,
\end{equation}
respectively. We call \eqref{eq:frenet-serret:d.space-curve}--\eqref{kappa_and_nu:discrete} the
Frenet-Serret formula in the same manner as the space smooth curve.

If we lift the Frenet frame $\varPhi$ to an $\SU (2)$-valued function $\phi$ according to
Proposition \ref{prop:SO(3)_SU(2)}, then $\phi$ satisfies
\begin{gather}
\phi_{n + 1}
=\label{eq:frenet-serret:d.space-curve:su2}
\phi_n L_n,\\
L_n
=\label{eq:frenet-serret:d.space-curve:su2-L}
\pm
\left[\begin{array}{cc}\medskip
e^{-\frac{\sqrt{-1}}{2}\nu_{n+1}}\cos \dfrac{\kappa_{n + 1}}{2} 
& -e^{-\frac{\sqrt{-1}}{2}\nu_{n+1}}\sin \dfrac{\kappa_{n + 1}}{2}\\
e^{\frac{\sqrt{-1}}{2}\nu_{n+1}}\sin \dfrac{\kappa_{n + 1}}{2} 
&e^{\frac{\sqrt{-1}}{2}\nu_{n+1}}\cos \dfrac{\kappa_{n + 1}}{2}
\end{array}\right].
\end{gather}
\begin{defn}\rm
We call the function defined by
\begin{equation}
\lambda_n = \dfrac{\sin \nu_{n+1}}{\epsilon_n}\label{torsion} 
\end{equation}
the {\em torsion} of the discrete space curve.
\end{defn}
In the following, we consider the discrete space curve whose torsion is a constant
\begin{equation}\label{const-torsion}
 \lambda_n = \lambda.
\end{equation}
Then, the following proposition holds.
\begin{prop}[The Sym-Tafel formula]\rm
For $\lambda\in\R$ and functions $\nu_n\in[-\pi,\pi)$, $\kappa_n\in(0, \pi)$, let
$\phi_n\in\mathrm{SU}(2)$ be a solution of the difference equations
\eqref{eq:frenet-serret:d.space-curve:su2}-- \eqref{const-torsion}.  Let $f$ be the isomorphism
defined by \eqref{su2_so3:f} and set
\begin{equation}\label{S:discrete}
\gamma_n
=
f^{- 1} \left(S_n\right),\quad
S_n
=
- \left(\dfrac{\partial}{\partial \lambda} \phi_n\right)
{\phi_n}^{- 1}.
\end{equation}
Then $\gamma$ satisfies the Frenet-Serret formula
\eqref{eq:frenet-serret:d.space-curve}--\eqref{kappa_and_nu:discrete}. Namely, $\gamma$ is the 
discrete space curve with the constant torsion $\lambda$, and the distance between the contiguous
points $\gamma_n$ and $\gamma_{n+1}$ is $\epsilon_n$.  Moreover, the angle between the
contiguous tangent vectors $T_{n-1}$ and $T_{n}$ is given by $\kappa_n$, and the angle between the
contiguous binormal vectors $B_{n-1}$ and $B_{n}$ is  $\nu_n$.
\end{prop}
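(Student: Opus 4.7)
The plan is to imitate the smooth Sym--Tafel proof (Proposition~\ref{prop:sym:space-curve}), with the continuous derivative $\partial_x$ replaced by the discrete shift $n\mapsto n+1$. The three things to verify are: (i) the edges $\gamma_{n+1}-\gamma_n$ have length $\epsilon_n$ and point along the unit tangent; (ii) the frame vectors obtained by conjugating $e_1,e_2,e_3$ by $\phi_n$, which by Proposition~\ref{prop:SO(3)_SU(2)} are the columns of $\varPhi_n$, coincide with the Frenet frame of the reconstructed discrete curve; (iii) the angles between consecutive tangents and binormals are $\kappa_n$ and $\nu_n$, respectively.

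The first step is the algebraic identity
\[
S_{n+1}-S_n=-\phi_n(\partial_\lambda L_n)L_n^{-1}\phi_n^{-1},
\]
obtained by differentiating $\phi_{n+1}=\phi_nL_n$ in $\lambda$ and using $\phi_{n+1}^{-1}=L_n^{-1}\phi_n^{-1}$. The matrix $L_n$ of \eqref{eq:frenet-serret:d.space-curve:su2-L} factors as $L_n=\exp(-\nu_{n+1}e_1)\exp(\kappa_{n+1}e_3)$, and since $e_1$ commutes with $\exp(-\nu_{n+1}e_1)$,
\[
(\partial_\lambda L_n)L_n^{-1}=-(\partial_\lambda\nu_{n+1})\,e_1.
\]
The constant-torsion constraint $\sin\nu_{n+1}=\lambda\epsilon_n$ in \eqref{torsion}--\eqref{const-torsion} determines $\partial_\lambda\nu_{n+1}$, and after simplification this yields $S_{n+1}-S_n=\epsilon_n\,\phi_ne_1\phi_n^{-1}$. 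Proposition~\ref{prop:SO(3)_SU(2)} gives $\phi_ne_1\phi_n^{-1}=f(T_n)$ with $T_n$ the first column of $\varPhi_n$, so that $\gamma_{n+1}-\gamma_n=\epsilon_n T_n$, proving both that the edge length is $\epsilon_n$ and that $T_n$ is the unit tangent of~$\gamma$.

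Next, I would set $N_n:=f^{-1}(\phi_ne_2\phi_n^{-1})$ and $B_n:=f^{-1}(\phi_ne_3\phi_n^{-1})$, the remaining columns of $\varPhi_n$. Orthonormality of $\{T_n,N_n,B_n\}$ and $T_n\times N_n=B_n$ then follow from the $\mathfrak{su}(2)$ commutation relations~\eqref{eqn:su2_commutation} through the isomorphism of Proposition~\ref{prop:su(2)_R3}. The recursion $\phi_{n+1}=\phi_nL_n$ translates through the $\SU(2)$--$\SO(3)$ correspondence into $\varPhi_{n+1}=\varPhi_nR_1(-\nu_{n+1})R_3(\kappa_{n+1})$, which is exactly the discrete Frenet--Serret formula~\eqref{eq:frenet-serret:d.space-curve}. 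Reading the angles directly from this recursion gives $\langle T_{n-1},T_n\rangle=\cos\kappa_n$, $\langle B_{n-1},B_n\rangle=\cos\nu_n$ and $\langle N_{n-1},B_n\rangle=\sin\nu_n$, and the uniqueness of the geometric Frenet frame then forces $B_n=(T_{n-1}\times T_n)/|T_{n-1}\times T_n|$ and $N_n=B_n\times T_n$, so the frame manufactured from $\phi_n$ agrees with the Frenet frame of $\gamma$.

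The principal obstacle is the explicit evaluation of $\partial_\lambda\nu_{n+1}$ under the constant-torsion constraint: one must pin down which of $\epsilon_n,\nu_{n+1},\kappa_{n+1}$ is treated as $\lambda$-independent when viewing $\lambda$ as the spectral parameter, and then confirm that the scalar produced by $(\partial_\lambda L_n)L_n^{-1}$ is exactly $\epsilon_n$ rather than some trigonometric modification. Once this scaling is secured, the remaining steps---orthonormality, the matching of $\kappa_n,\nu_n$, and agreement with the geometric Frenet frame---amount to bookkeeping in the language of the $\SU(2)$--$\SO(3)$ correspondence set up in Section~2.
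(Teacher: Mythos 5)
Your proposal follows the paper's proof essentially step for step: the paper likewise computes $S_{n+1}-S_n=-\phi_n(L_n)_\lambda L_n^{-1}\phi_n^{-1}=\epsilon_n\phi_ne_1\phi_n^{-1}$, sets $T_n=\phi_ne_1\phi_n^{-1}$, $N_n=\phi_ne_2\phi_n^{-1}$, $B_n=\phi_ne_3\phi_n^{-1}$, and verifies the Frenet--Serret recursion by conjugating $e_1,e_2,e_3$ with $L_n$. However, the step you defer as ``the principal obstacle'' is exactly the point on which the proposition stands or falls, so it cannot be left open. The resolution is that the $\lambda$-independent data are $\kappa_n$ and the combination $a_n=\left(1+\tan^2\frac{\nu_{n+1}}{2}\right)\epsilon_n$ of \eqref{eqn:a_and_epsilon_nu} (the paper introduces $a_n$ only after this proposition, ``for later convenience,'' but it is what makes the derivative well defined here): holding $a_n$ fixed one has $\nu_{n+1}=2\arctan\frac{a_n\lambda}{2}$ and $\epsilon_n=\frac{a_n}{1+\frac{a_n^2\lambda^2}{4}}$, whence
\[
\frac{\partial\nu_{n+1}}{\partial\lambda}=\frac{a_n}{1+\frac{a_n^2\lambda^2}{4}}=\epsilon_n
\]
exactly, and your identity $(\partial_\lambda L_n)L_n^{-1}=-(\partial_\lambda\nu_{n+1})\,e_1$ then gives $S_{n+1}-S_n=\epsilon_n\phi_ne_1\phi_n^{-1}$ as claimed. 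Had you instead differentiated the constraint $\sin\nu_{n+1}=\lambda\epsilon_n$ with $\epsilon_n$ held fixed, you would obtain $\partial_\lambda\nu_{n+1}=\epsilon_n/\cos\nu_{n+1}$, the reconstructed edge length would come out as $\epsilon_n/\cos\nu_{n+1}$, and the statement would fail; so the choice of what is frozen is not cosmetic.

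The remaining steps are sound and coincide with the paper's, but I would replace the appeal to ``uniqueness of the geometric Frenet frame'' by the explicit check the paper performs: from $\varPhi_{n}=\varPhi_{n-1}R_1(-\nu_n)R_3(\kappa_n)$ one reads off
\[
T_{n-1}\times T_n=\sin\kappa_n\left(\cos\nu_n\,B_{n-1}+\sin\nu_n\,N_{n-1}\right)=\sin\kappa_n\,B_n,
\]
and since $\kappa_n\in(0,\pi)$ gives $\sin\kappa_n>0$, the algebraically defined $B_n=\phi_ne_3\phi_n^{-1}$ is the geometric binormal with the correct orientation, after which $N_n=B_n\times T_n$ and the angle relations \eqref{kappa_and_nu:discrete} follow by direct inner products. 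This is precisely the content of the paper's expansion of $T_{n+1}$, $N_{n+1}$, $B_{n+1}$ in the frame at $n$.
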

\begin{proof}
Taking the difference of $S$ with respect to $n$, we have from 
\eqref{eq:frenet-serret:d.space-curve:su2-L}--\eqref{S:discrete}
$S_{n + 1} - S_n = - \phi_n \left(L_n\right)_\lambda L_n^{- 1} \phi_n^{- 1} = \epsilon_n \phi_n e_1 \phi_n^{- 1}$. Then we put
\begin{equation}\label{Sym_T}
T_n =\frac{S_{n +1} - S_n}{\epsilon_n} =  \phi_n e_1 \phi_n^{- 1}.
\end{equation}
Since we have
\begin{align*}
T_{n+1}
&=\phi_{n+1} e_1 {\phi_{n+1}}^{-1}
=\phi_{n}L_n e_1 {L_n}^{-1}{\phi_{n}}^{-1}\\
&=\cos \kappa_{n+1} T_n 
+ \cos \nu_{n+1}\sin\kappa_{n+1} \phi_n e_2 {\phi_n}^{-1}
- \sin \nu_{n+1}\sin\kappa_{n+1}\phi_n e_3 {\phi_n}^{-1},
\end{align*}
we also put 
\begin{equation}\label{Sym_NB}
N_n =\phi_n e_2 {\phi_n}^{-1},\quad
B_n =\phi_n e_3 {\phi_n}^{-1}.
\end{equation}
Further, $N_{n+1}$, $B_{n+1}$ can be obtained as
\begin{align*}
& N_{n+1} = -\sin\kappa_{n+1} T_n +\cos \nu_{n+1}\cos
 \kappa_{n+1} N_n - \sin\nu_{n+1}\cos \kappa_{n+1} B_n\, ,\\
& B_{n+1} = \sin\nu_{n+1}N_n + \cos\nu_{n+1}B_n\, ,
\end{align*}
by the similar calculation, from which we see that
$\left[T_n, N_n, B_n\right]$ satisfies the Frenet-Serret formula.
\end{proof}
For the later convenience of the notation, we put
\begin{equation}\label{eqn:a_and_epsilon_nu}
 a_n = \left(1+\tan^2\frac{\nu_{n+1}}{2}\right)\epsilon_n.
\end{equation}
Then we have
\begin{align}
 & \epsilon_n = \frac{a_n}{1+\frac{a_n^2\lambda^2}{4}},\label{epsilon}\\
 & \nu_{n+1} = 2\arctan\frac{a_n\lambda}{2}.
\end{align}

\section{Continuous Isoperimetric Deformation of Discrete Space Curves}\label{section:semi-discrete}
In this section, we consider a continuous isoperimetric deformation for the discrete space curves
with a constant torsion described by the semi-discrete mKdV equation. We consider a family of
discrete space curves $\gamma_n (t)$ with the parameter $t$, and define $\epsilon_n (t)$, $\kappa_n
(t)$, $\nu_n (t)$ and $\lambda_n(t)$ as given in Section \ref{section:space_discrete_curve}. Here we
assume that $\epsilon_n(t)=\epsilon(t)$ and $\nu_n(t)=\nu(t)$ do not depend on $n$. Then $a_n(t)=a(t)$ is independent of
$n$. Also, the torsion $\lambda(t)$ of each curve $\gamma$ is a constant with respect to $n$. In the
following we omit writing $t$ explicitly.  Now we determine the direction of the deformation
$\dot{\gamma}_n$ of each vertex of the discrete curves as
\begin{equation}\label{def:motion-sd}
\dot{\gamma}_n
=
\dfrac{\,\epsilon\,}{a}
\left(\cos \nu T_n
- \cos \nu \tan \frac{\kappa_n}{2} N_n
+ \sin \nu \tan \frac{\kappa_n}{2} B_n\right).
\end{equation}
\begin{prop}\rm
Suppose that a family of discrete space curves $\gamma_n(t)$ is deformed according to (\ref{def:motion-sd}).
Then we have the following:
\begin{enumerate}
 \item $\epsilon$, $a$, $\nu$ and $\lambda$ are constants with respect to $t$.
 \item $\kappa$ varies according to the semi-discrete mKdV equation
\begin{equation}
\dot{\kappa}_n
=\label{sdmkdv}
\dfrac{\,1\,}{a}
\left(\tan \dfrac{\kappa_{n + 1}}{2}
- \tan \dfrac{\kappa_{n - 1}}{2}\right).
\end{equation}
 \item The Frenet frame $\varPhi = \left[T, N, B\right]$ satisfies the following system of equations.
\begin{align}
& \varPhi_{n+1} = \varPhi_n L_n,\quad L_n = 
R_1 \left(- \nu\right)
R_3 \left(\kappa_{n + 1}\right),\label{eq:frenet-serret:d.space-curve_SO(3)}\\
&\dot{\varPhi}_{n} = \varPhi_n M_n,\ 
M_n=\frac{1}{a}
 \left[\begin{array}{ccc}\smallskip
0&-\cos\nu\tan\frac{\kappa_n}{2} - \tan\frac{\kappa_{n+1}}{2} & \sin\nu\tan\frac{\kappa_n}{2}\\
\smallskip
\cos\nu\tan\frac{\kappa_n}{2} + \tan\frac{\kappa_{n+1}}{2} & 0 & \sin \nu\\
-\sin\nu\tan\frac{\kappa_n}{2} & -\sin \nu & 0
\end{array}\right].  \label{eq:sdmKdV:M_SO(3)} 
\end{align}
\end{enumerate}
\end{prop}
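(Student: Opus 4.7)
The strategy parallels the smooth case (Proposition \ref{prop:smooth}): I would first extract $\dot T_n$ directly from the deformation law \eqref{def:motion-sd}, then propagate that information through the discrete Frenet--Serret formula \eqref{eq:frenet-serret:d.space-curve} to drive the evolution of $\kappa_n$, $\nu_{n+1}$, and $\epsilon_n$. The starting point is $\gamma_{n+1}-\gamma_n=\epsilon_n T_n$; differentiating in $t$ yields
\[
V_{n+1}-V_n = \dot\epsilon_n T_n + \epsilon_n \dot T_n,\qquad V_n := \dot\gamma_n,
\]
and since $|T_n|=1$ forces $\langle T_n,\dot T_n\rangle=0$, the tangential component of $V_{n+1}-V_n$ equals $\dot\epsilon_n$, while the two transverse components determine $\dot T_n$.

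The decisive computation is to expand $T_{n+1},N_{n+1},B_{n+1}$ in the orthonormal basis $\{T_n,N_n,B_n\}$ via \eqref{eq:frenet-serret:d.space-curve}, substitute into \eqref{def:motion-sd} for $V_{n+1}$, and simplify. The trigonometric identity
\[
\sin\kappa - \tan\tfrac{\kappa}{2}\,\cos\kappa = \tan\tfrac{\kappa}{2}
\]
causes the $T_n$-coefficient of $V_{n+1}-V_n$ to collapse to zero---this is the discrete counterpart of isoperimetricity---so $\dot\epsilon_n=0$, while the $N_n$ and $B_n$ coefficients yield the explicit formula
\[
\dot T_n = \frac{1}{a}\left[\left(\cos\nu\tan\tfrac{\kappa_n}{2} + \tan\tfrac{\kappa_{n+1}}{2}\right)N_n - \sin\nu\tan\tfrac{\kappa_n}{2}\,B_n\right],
\]
which is exactly the first column of the matrix $M_n$ in \eqref{eq:sdmKdV:M_SO(3)}.

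The semi-discrete mKdV equation \eqref{sdmkdv} then follows by $t$-differentiating $\cos\kappa_n = \langle T_{n-1},T_n\rangle$ and substituting the expressions just obtained for $\dot T_{n-1}$ and $\dot T_n$; after re-expressing $T_{n-1}$ in the frame at $n$ via \eqref{eq:frenet-serret:d.space-curve}, the cross terms proportional to $\cos\nu\tan(\kappa_n/2)$ cancel, leaving $\dot\kappa_n = \tfrac{1}{a}(\tan\tfrac{\kappa_{n+1}}{2}-\tan\tfrac{\kappa_{n-1}}{2})$. For $\dot N_n$ and $\dot B_n$, orthonormality of the Frenet frame forces $M_n\in\mathfrak{so}(3)$, so antisymmetry already pins down every entry except the $(2,3)$-component; this last one is fixed by $t$-differentiating $\cos\nu_{n+1} = \langle B_n,B_{n+1}\rangle$ (equivalently, by the compatibility condition $\dot L_n = L_n M_{n+1}-M_n L_n$ for \eqref{eq:frenet-serret:d.space-curve_SO(3)} and \eqref{eq:sdmKdV:M_SO(3)}), which simultaneously produces $\dot\nu=0$ and completes the identification of $M_n$. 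Finally, $\dot a=\dot\lambda=0$ follow immediately from the defining relations \eqref{eqn:a_and_epsilon_nu} and \eqref{torsion} once $\dot\epsilon=\dot\nu=0$ is known.

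The main obstacle is the cancellation forcing $\dot\epsilon_n=0$: it hinges on the specific normalization $\epsilon/a$ in \eqref{def:motion-sd}, so the delicate part is to confirm that the three-term combination inside the bracket of \eqref{def:motion-sd} is precisely the one whose $T_n$-component telescopes out, rather than a merely convenient ansatz. The secondary challenge is the trigonometric bookkeeping for $\dot\kappa_n$, and then verifying via antisymmetry of $M_n$ that $\dot\nu_{n+1}=0$ is consistent with the already-determined $\dot T_n$ and $\dot B_n$; the interleaved shifts $n\mapsto n\pm 1$ in the arguments of the half-angle tangents are what makes this bookkeeping nontrivial.
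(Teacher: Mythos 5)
Your plan reproduces the paper's argument for the bulk of the proposition. The paper likewise starts from $\gamma_{n+1}-\gamma_n=\epsilon T_n$, expands $\dot\gamma_{n+1}-\dot\gamma_n$ in the frame at $n$ via \eqref{eq:frenet-serret:d.space-curve_SO(3)}, observes that the $T_n$-component vanishes (giving $\dot\epsilon=0$ and $\dot T_n$ equal to the first column of $M_n$), and obtains \eqref{sdmkdv} by differentiating $\cos\kappa_n=\langle T_n,T_{n-1}\rangle$; your description of the cancellation of the $\cos\nu\tan\frac{\kappa_n}{2}$ cross terms is exactly what happens. (One small misattribution: the identity $\sin\kappa-\tan\frac{\kappa}{2}\cos\kappa=\tan\frac{\kappa}{2}$ is what produces the clean $N_n$-coefficient $\tan\frac{\kappa_{n+1}}{2}$; the vanishing of the $T_n$-coefficient comes from the companion identity $\cos\kappa+\tan\frac{\kappa}{2}\sin\kappa=1$.)

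The weak link is your route to $\dot\nu=0$ and the remaining entry of $M_n$. Antisymmetry plus the known first column leaves one unknown per site, say $q_n:=(M_n)_{23}$, and differentiating $\cos\nu=\langle B_n,B_{n+1}\rangle$ yields (after expressing $B_{n+1}$ and $B_n$ in the appropriate frames) a single relation of the form $q_n-\cos\kappa_{n+1}\,q_{n+1}-\sin\kappa_{n+1}\,(M_{n+1})_{13}=-\dot\nu$, i.e.\ one scalar equation per $n$ coupling $q_n$, $q_{n+1}$ and $\dot\nu$. This is underdetermined: it cannot by itself ``simultaneously produce $\dot\nu=0$ and complete the identification of $M_n$.'' The paper closes this step differently and more directly: since $B_n=(T_{n-1}\times T_n)/\sin\kappa_n$ is an explicit function of quantities whose $t$-derivatives are already known, $\dot B_n$ is computed outright by the product rule; the coefficient of $B_n$ in the result is proportional to $-\dot\kappa_n+\frac{1}{a}(\tan\frac{\kappa_{n+1}}{2}-\tan\frac{\kappa_{n-1}}{2})$ and vanishes by the already-established equation \eqref{sdmkdv} (see \eqref{dot_B_n_0}), which pins down the full third column of $M_n$; only then is $\dot\nu=0$ obtained by substituting $\dot B_n$ and $\dot B_{n+1}$ into $-\dot\nu\sin\nu=\langle\dot B_n,B_{n+1}\rangle+\langle B_n,\dot B_{n+1}\rangle$. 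Your parenthetical fallback via the compatibility condition $\dot L_n=L_nM_{n+1}-M_nL_n$ could be made to work (it yields three independent scalar equations per $n$), but it is not carried out and requires care to avoid circularity, since $M_{n+1}$ contains the same unknowns. You should replace that step with the direct computation of $\dot B_n$ from its cross-product definition.
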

\begin{proof}
We note that \eqref{eq:frenet-serret:d.space-curve_SO(3)} is nothing but the Frenet-Serret formula 
\eqref{eq:frenet-serret:d.space-curve}, which follows directly from the definition of the discrete space curve.
We first show (2). Differentiating $\cos\kappa_n = \langle T_n,T_{n-1}\rangle$ (see \eqref{kappa_and_nu:discrete}) by $t$ yields
\begin{equation}\label{dot_kappa}
 -\dot\kappa_n\sin\kappa_n = \langle \dot T_n,T_{n-1}\rangle + \langle T_n,\dot T_{n-1}\rangle.
\end{equation}
Noticing that \eqref{def:motion-sd} is expressed as
\begin{displaymath}
 \dot\gamma_n = \frac{\epsilon}{a}\varPhi_n
\left[
\begin{array}{c}\cos\nu \\ -\cos\nu\tan\frac{\kappa_n}{2}\\ \sin\nu\tan\frac{\kappa_n}{2}\end{array}
\right],
\end{displaymath}
we have by using \eqref{eq:frenet-serret:d.space-curve_SO(3)} 
\begin{align}
\dot\gamma_{n+1} - \dot\gamma_n 
& = \frac{\epsilon}{a}\varPhi_n
\left\{
L_n
\left[
\begin{array}{c}\cos\nu \\ -\cos\nu\tan\frac{\kappa_{n+1}}{2}\\ \sin\nu\tan\frac{\kappa_{n+1}}{2}\end{array}
\right]
-
\left[
\begin{array}{c}\cos\nu \\ -\cos\nu\tan\frac{\kappa_n}{2}\\ \sin\nu\tan\frac{\kappa_n}{2}\end{array}
\right]\right\}\\
&=\frac{\epsilon}{a}\varPhi_n
\left[
\begin{array}{c}
 0 \\
 \cos\nu\tan\frac{\kappa_n}{2}+\tan\frac{\kappa_{n+1}}{2}\\
 -\sin\nu\tan\frac{\kappa_n}{2}
\end{array}
\right],
\label{dot_T_n_1}
\end{align}
or
\begin{equation}
 \dot T_n = \frac{1}{a}\varPhi_n
\left[
\begin{array}{c} 0 \\ \cos\nu\tan\frac{\kappa_n}{2}+\tan\frac{\kappa_{n+1}}{2}\\ -\sin\nu\tan\frac{\kappa_n}{2}\end{array}
\right].
\label{dot_T_n}
\end{equation}
Also, from \eqref{dot_T_n},
\begin{equation}\label{T_n-1}
 T_{n-1}=\varPhi_nL_{n-1}^{-1}
\left[\begin{array}{c} 1 \\ 0 \\ 0 \end{array}\right]
=\varPhi_n
\left[\begin{array}{c} \cos\kappa_n \\ -\sin\kappa_n \\ 0 \end{array}\right],
\end{equation}
and 
\begin{align}
\dot T_{n-1} &=\frac{1}{a}\varPhi_{n-1}
\left[
\begin{array}{c} 0 \\ \cos\nu\tan\frac{\kappa_{n-1}}{2}+\tan\frac{\kappa_{n}}{2}\\ -\sin\nu\tan\frac{\kappa_{n-1}}{2}\end{array}
\right]
=\frac{1}{a}\varPhi_{n}L_{n-1}^{-1}
\left[
\begin{array}{c} 0 \\ \cos\nu\tan\frac{\kappa_{n-1}}{2}+\tan\frac{\kappa_{n}}{2}\\ -\sin\nu\tan\frac{\kappa_{n-1}}{2}\end{array}
\right]\nonumber\\
&=\frac{1}{a}\varPhi_{n}
\left[
\begin{array}{c}
 \sin\kappa_n\tan\frac{\kappa_{n-1}}{2} + \cos\nu\sin\kappa_n\tan\frac{\kappa_{n}}{2}\\
\cos\kappa_n\tan\frac{\kappa_{n-1}}{2} +\cos\nu\cos\kappa_n\tan\frac{\kappa_{n}}{2}\\
\sin\nu\tan\frac{\kappa_n}{2}  \end{array}
\right], \label{dot_T_n-1}
\end{align}
we obtain
\begin{align*}
 \langle \dot T_n,T_{n-1}\rangle + \langle T_n,\dot T_{n-1}\rangle
=\frac{1}{a}\sin\kappa_n\left(\tan\frac{\kappa_{n-1}}{2} - \tan\frac{\kappa_{n+1}}{2}\right).
\end{align*}
Therefore the semi-discrete mKdV equation \eqref{sdmkdv} for $\kappa_n$ is derived immediately from
\eqref{dot_kappa}. We next show (1). Differentiating $\epsilon^2=\langle \gamma_{n+1}-\gamma_n,\gamma_{n+1}-\gamma_n\rangle$
by $t$ and noticing \eqref{dot_T_n_1}, we have
\begin{align*}
2\epsilon\dot\epsilon&=2\langle \dot\gamma_{n+1}-\dot\gamma_n,\gamma_{n+1}-\gamma_n\rangle 
=2\epsilon\langle \dot\gamma_{n+1}-\dot\gamma_n,T_n\rangle = 0,
\end{align*}
which implies $\dot\epsilon=0$.  In order to show $\dot\nu=0$, we
differentiate both sides of $\cos\nu=\langle B_n,B_{n+1}\rangle$ to get
\begin{equation}
 -\dot\nu\sin\nu = \langle \dot B_{n},B_{n+1}\rangle + \langle B_{n}, \dot B_{n+1}\rangle .\label{dot_nu}
\end{equation}
Since $B_n=\frac{T_{n-1}\times T_n}{\left|T_{n-1}\times T_n\right|}
=\frac{T_{n-1}\times T_n}{\sin\kappa_n}$, we have
\begin{displaymath}
 \dot B_n = \frac{d}{dt}\left(\frac{1}{\sin\kappa_n}\right)T_{n-1}\times T_n
+ \frac{1}{\sin\kappa_n}\dot T_{n-1}\times T_n 
+ \frac{1}{\sin\kappa_n} T_{n-1}\times \dot T_n .
\end{displaymath}
By using \eqref{dot_T_n}--\eqref{dot_T_n-1} and noticing
\begin{equation}
 T_n\times N_n = B_n,\quad N_n\times B_n = T_n,\quad B_n\times T_n=N_n,
\end{equation}
we obtain
\begin{align}
 \dot B_n& =\frac{1}{a}\varPhi_n
\left[
\begin{array}{c}
\sin\nu\tan\frac{\kappa_n}{2}\\
\sin\nu\\
\frac{a}{\tan\kappa_n}\left\{-\dot\kappa_n + \frac{1}{a}\left(\tan\frac{\kappa_{n+1}}{2} - \tan\frac{\kappa_{n-1}}{2}\right)\right\}\end{array}
\right] \label{dot_B_n_0}\\
& =\frac{1}{a}\varPhi_n
\left[
\begin{array}{c}
\sin\nu\tan\frac{\kappa_n}{2}\\
\sin\nu\\
0
\end{array}
\right].
\label{dot_B_n}
\end{align}
Further, since we see that
\begin{equation}
 B_{n+1}=\varPhi_n L_n
\left[
\begin{array}{c}
0\\
0\\
1
\end{array}
\right]
=\varPhi_n
\left[
\begin{array}{c}
0\\
\sin\nu\\
\cos\nu
\end{array}
\right],
\label{B_n+1}
\end{equation}
\begin{align}
 \dot B_{n+1} &= \frac{1}{a}\varPhi_n L_n
\left[
\begin{array}{c}
\sin\nu\tan\frac{\kappa_{n+1}}{2}\\
\sin\nu\\
0
\end{array}
\right]
=
\frac{1}{a}\varPhi_n 
\left[
\begin{array}{c}
-\sin\nu\tan\frac{\kappa_{n+1}}{2}\\
\sin\nu\cos\nu\\
-\sin^2\nu
\end{array}
\right],
\end{align}
we obtain from \eqref{dot_nu}
\begin{displaymath}
 -\dot\nu \sin\nu
= \langle \dot B_{n},B_{n+1}\rangle + \langle B_{n}, \dot B_{n+1}\rangle
= \frac{\sin^2\nu}{a} - \frac{\sin^2\nu}{a} =0,
\end{displaymath}
which proves $\dot\nu=0$. Finally, (3) can be shown from \eqref{dot_T_n}, \eqref{dot_B_n} and that $M_n\in\mathfrak{so}(3)$.
\end{proof}
\begin{rem}\rm
The integrability condition $\dot{L}_n - L_n M_{n + 1} + M_n L_n = 0$ of the system of differential
and difference equations \eqref{eq:frenet-serret:d.space-curve_SO(3)}, \eqref{eq:sdmKdV:M_SO(3)}
yields the semi-discrete mKdV equation \eqref{sdmkdv}.  Moreover in \eqref{dot_B_n_0}, since
$M_n\in\mathfrak{so}(3)$, if we express $\dot B_n$ by the linear combination of $T_n$, $N_n$, $B_n$,
the coefficient of $B_n$ must be $0$. The semi-discrete mKdV equation \eqref{sdmkdv} also follows
from this condition.
\end{rem}

The Sym-Tafel formula acts as the formula for reconstruction of the discrete curves.  If we
lift $\varPhi= \left[T, N, B\right]$ to an $\mathrm{SU} (2)$-valued function $\phi$, then
$\phi$ satisfies
\begin{align}
&\phi_{n + 1}
=\label{akns.sdmkdv.n}
\phi_n L_n,\quad
L_n
=
\pm
\left[\begin{array}{cc}
e^{-\frac{\sqrt{- 1}}{2} \nu}
\cos \dfrac{\kappa_{n + 1}}{2} &
-e^{-\frac{\sqrt{- 1}}{2} \nu}
\sin \dfrac{\kappa_{n + 1}}{2}\\
e^{\frac{\sqrt{- 1}}{2} \nu}
\sin \dfrac{\kappa_{n + 1}}{2} &
e^{\frac{\sqrt{- 1}}{2} \nu}
\cos \dfrac{\kappa_{n + 1}}{2}
\end{array}\right]
,\\
&\dot{\phi}_n
=\label{akns.sdmkdv.t}
\phi_n M_n,\quad
M_n
=
\dfrac{1}{2 a}
\left[\begin{array}{cc}
- \sqrt{- 1} \sin \nu &
- e^{- \sqrt{- 1} \nu}
\tan \dfrac{\kappa_n}{2}
- \tan \dfrac{\kappa_{n + 1}}{2}\\
e^{\sqrt{- 1} \nu}
\tan \dfrac{\kappa_n}{2}
+ \tan \dfrac{\kappa_{n + 1}}{2} &
\sqrt{- 1} \sin \nu
\end{array}\right].
\end{align}
The following proposition is a consequence of application of the Sym-Tafel formula to the system of
differential and difference equations \eqref{akns.sdmkdv.n} and \eqref{akns.sdmkdv.t}.
\begin{prop}\label{prop:sym:sd.space-curve}\rm
Let $a > 0$. For a solution $\kappa = \kappa_n (t)$ of the semi-discrete mKdV equation
\eqref{sdmkdv} and a constant $\nu$, let $\phi = \phi_n (t)$ be a solution to the system of the
differential and difference equations \eqref{akns.sdmkdv.n}--\eqref{akns.sdmkdv.t}. Moreover, let
$f$ be the isomorphism defined in \eqref{su2_so3:f}, and let
\begin{equation}
\gamma_n
=
f^{- 1} \left(S_n\right),\quad
S_n
=
- \left(\dfrac{\partial}{\partial \lambda} \phi_n\right)
{\phi_n}^{- 1},\quad
\lambda = \dfrac{2}{a} \tan \dfrac{\nu}{2}.
\end{equation}
Then, for each $t$, $\gamma$ is a discrete curve of which the distance between the contiguous vertices
is a constant
\begin{equation}
\epsilon
=
\dfrac{a}{1+ \frac{a^2\lambda^2}{4}},
\end{equation}
the angle between the contiguous tangent vectors is $\kappa$, and the angle between the contiguous
binormal vectors is $\nu$. Namely, $\gamma$ is the torsion-preserving isoperimetric deformation of
the discrete space curves with constant torsion described by the semi-discrete mKdV equation.
\end{prop}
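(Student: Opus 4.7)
The plan is to mirror the proof of Proposition \ref{prop:sym:space-curve} in the semi-discrete setting, with the shift $n\mapsto n+1$ playing the role of $\partial/\partial x$ while $\partial/\partial t$ acts unchanged. As preparation I would record two identities: first, $\nu = 2\arctan(a\lambda/2)$ gives $d\nu/d\lambda = a\cos^2(\nu/2) = a/(1+a^2\lambda^2/4) = \epsilon$; second, the step matrix factors as $L_n = D_\nu R_{\kappa_{n+1}}$, where $D_\nu = \operatorname{diag}(e^{-\sqrt{-1}\nu/2},e^{\sqrt{-1}\nu/2})$ and $R_{\kappa_{n+1}}$ is the $\kappa_{n+1}/2$-rotation block.

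\textbf{Spatial side.} Writing $\phi_{n+1}=\phi_n L_n$ and differentiating the Sym--Tafel definition \eqref{S:discrete} in $\lambda$, the Leibniz rule yields $S_{n+1}-S_n = -\phi_n(\partial_\lambda L_n)L_n^{-1}\phi_n^{-1}$. Since $\partial_\nu D_\nu = -e_1 D_\nu$ and $d\nu/d\lambda = \epsilon$, one obtains $(\partial_\lambda L_n)L_n^{-1} = -\epsilon\, e_1$, and hence
$$S_{n+1}-S_n = \epsilon\,\phi_n e_1 \phi_n^{-1}.$$
Setting $T_n=\phi_n e_1\phi_n^{-1}$, $N_n=\phi_n e_2\phi_n^{-1}$, $B_n=\phi_n e_3\phi_n^{-1}$ and using the $\mathfrak{su}(2)$ inner product in \eqref{innerproduct-su2}, $|T_n|=1$, so after applying $f^{-1}$ the vertex spacing is exactly $\epsilon$. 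Expanding $\phi_n L_n e_i L_n^{-1}\phi_n^{-1}$ for $i=1,2,3$ exactly as in Section \ref{section:space_discrete_curve} shows that $[T,N,B]$ satisfies the discrete Frenet--Serret formula \eqref{eq:frenet-serret:d.space-curve}--\eqref{kappa_and_nu:discrete} with tangent angle $\kappa$ and binormal angle $\nu$; consequently the torsion equals $\sin\nu/\epsilon = \lambda$.

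\textbf{Time side.} Differentiating $S_n$ in $t$ using $\dot\phi_n = \phi_n M_n$ and swapping $\partial_t$ with $\partial_\lambda$, the two cross terms cancel, leaving $\dot S_n = -\phi_n(\partial_\lambda M_n)\phi_n^{-1}$. Because $\kappa$ is $\lambda$-independent, $\partial_\lambda M_n = \epsilon\,\partial_\nu M_n$. Decomposing $\partial_\nu M_n$ from \eqref{akns.sdmkdv.t} in the basis $\{e_1,e_2,e_3\}$ of \eqref{basis-su2} --- the key algebraic computation --- yields
$$\partial_\nu M_n = -\frac{1}{a}\left(\cos\nu\,e_1 - \cos\nu\tan\frac{\kappa_n}{2}\,e_2 + \sin\nu\tan\frac{\kappa_n}{2}\,e_3\right).$$
Conjugating by $\phi_n$ converts $e_1,e_2,e_3$ into $T_n,N_n,B_n$ and, after applying $f^{-1}$, reproduces precisely the deformation rule \eqref{def:motion-sd}. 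This identifies $\dot\gamma_n$ with the prescribed semi-discrete isoperimetric flow.

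\textbf{Main obstacle.} There is no genuine conceptual difficulty beyond what already appears in Proposition \ref{prop:sym:space-curve}; the entire argument reduces to the two bookkeeping identities $d\nu/d\lambda = \epsilon$ and the $\{e_1,e_2,e_3\}$-expansion of $\partial_\nu M_n$. Once these are in hand, the spatial and time computations close the proof by direct translation of the smooth and the purely spatial discrete arguments.
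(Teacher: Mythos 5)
Your proposal is correct and follows essentially the same route as the paper: the paper's own proof consists only of the time-derivative computation $\dot S_n=-\phi_n(\partial_\lambda M_n)\phi_n^{-1}$ matched against \eqref{def:motion-sd}, delegating the spatial structure to the earlier discrete Sym--Tafel proposition whose proof is exactly your ``spatial side'' paragraph. Your bookkeeping is sound --- in fact you correctly carry the chain-rule factor $d\nu/d\lambda=\epsilon$ (so that $\dot S_n$ acquires the prefactor $\epsilon/a$ appearing in \eqref{def:motion-sd}), which the paper's displayed formula silently omits.
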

\begin{proof}
Differentiating $S$ by $t$ yields
\begin{equation*}
\dot{S}_n
\;=\;
- \phi_n \left(\dfrac{\partial}{\partial \lambda} M_n\right)
\phi_n^{- 1}
\;=\;
\phi_n
\left(\cos \nu\, e_1
- \cos \nu \tan \frac{\kappa_n}{2} e_2
+ \sin \nu \tan \frac{\kappa_n}{2} e_3\right)
\phi_n^{- 1},
\end{equation*}
which coincides with the definition of the isoperimetric deformation \eqref{def:motion-sd} from \eqref{Sym_T}, \eqref{Sym_NB}.
\end{proof}
The above result implies that it is possible to obtain the position vectors of the discrete space
curves without integration (summation), if we have the entries of the matrix $\phi$ explicitly.
\begin{rem}\rm\hfill
\begin{enumerate}
\item Doliwa and Santini\cite{Doliwa-Santini:PLA,Doliwa-Santini:JMP,Doliwa-Santini:dsG} discussed
the deformation of the curves restricted on the sphere $S^{N-1}(r)$ of radius $r$ in
$\mathbb{R}^N$. In particular, for the case of $N=3$, they derived the deformation described by the
semi-discrete mKdV equation \eqref{sdmkdv} as one of the isoperimetric deformations of the discrete
curves. It is known that the smooth curves on $S^2(\frac{1}{|\lambda|})$ is equivalent to the smooth
curves with constant torsion in $\mathbb{R}^3$, and the explicit correspondence between them is
given as follows.
Suppose that $x$ is the arc-length and $\Gamma(x)$ is a curve on $S^2(\frac{1}{|\lambda|})$, then
$\gamma(x)=\lambda\int \Gamma(x)\times \Gamma'(x)dx$ is a curve with the torsion $\lambda$ in $\mathbb{R}^3$. 
Conversely, suppose that $\gamma(x)$ is a curve with the torsion $\lambda$ in $\mathbb{R}^3$, and 
$B(x)$ is the binormal vector of $\gamma(x)$, then 
$\Gamma(x)=\pm \frac{1}{\lambda}B(x)$ is a curve on $S^2(\frac{1}{|\lambda|})$\cite{Bates-Melko,Koenigs}.

A similar correspondence exists for the discrete curves.
In fact, suppose that $\Gamma_n$ is a discrete curve (with a certain condition) on $S^2(\frac{1}{|\lambda|})$, then 
\begin{equation}\label{gamma_Gamma:discrete}
\gamma_n=\lambda\sum_{k}^n\Gamma_{k-1}\times \Gamma_k
\end{equation}
is a discrete curve with the constant torsion $\lambda$ in $\mathbb{R}^3$, and conversely, suppose
that $\gamma_n$ is a discrete curve with the constant torsion $\lambda$ in $\mathbb{R}^3$ and $B_n$ 
is the binormal vector of $\gamma_n$, 
\begin{equation}\label{Gamma_gamma:discrete}
\Gamma_n = \pm\frac{1}{\lambda} B_n
\end{equation}
is a discrete curve on $S^2(\frac{1}{|\lambda|})$. We refer to Appendix \ref{section:app_A} for the
details.  Note that it is necessary to perform summation to derive the isoperimetric deformation
\eqref{def:motion-sd} of $\gamma_n$ from the isoperimetric deformation of $\Gamma_n$ described by
the semi-discrete mKdV equation in \cite{Doliwa-Santini:JMP}, as is seen from
\eqref{gamma_Gamma:discrete}.
\item Generalizing the system of equations \eqref{eq:frenet-serret:d.space-curve_SO(3)},
\eqref{eq:sdmKdV:M_SO(3)} by introducing appropriate effect of bend and twist, it is possible to
       construct a discrete model of elastic curve\cite{Nishinari}．
\end{enumerate}
\end{rem}

\section{Discrete Isoperimetric Deformation of Discrete Space Curves}

In this section we discuss the discrete isoperimetric deformation of the discrete space curve
$\gamma$.  We write the deformed curve as $\overline{\gamma}$, and we also express the data such as the
Frenet frame or the torsion associated with $\overline{\gamma}$ by putting
$\overline{\phantom{\epsilon}}$. For example, for the data $\epsilon_n$, $\nu_n$, $\kappa_n$,
$\lambda_n$, $a_n$ associated with $\gamma$, the corresponding data for $\overline{\gamma}$ are
denoted as $\overline{\epsilon}_n$, $\overline{\nu}_n$, $\overline{\kappa}_n$,
$\overline{\lambda}_n$, $\overline{a}_n$, respectively. Now we assume that the torsion of $\gamma$
is a constant, namely,
\begin{equation}
\dfrac{2}{a_n} \tan \dfrac{\nu_{n + 1}}{2}
=\lambda\quad (\text{const.})~.
\end{equation}
Then we have the following proposition.
\begin{prop}\label{prop:dmkdvflow}\rm
For a discrete space curve $\gamma$ with the constant torsion $\lambda$, we define the
new discrete space curve $\overline{\gamma}$ by
\begin{gather}
\overline{\gamma}_n
=\label{def:isoperi0}
\gamma_n
+
\delta \left(\cos w_n T_n + \sin w_n N_n\right),\\
\delta =\label{def:isoperi-c0}
\dfrac{b}{1 + \frac{b^2\lambda^2}{4}}>0, \\
w_{n + 1} 
= \label{def:isoperi-w0}
- \kappa_{n + 1} + 2\arctan\dfrac{b + a_n}{b - a_n}\tan \dfrac{w_n}{2}.
\end{gather}
Here we choose the constants $b>0$ and $w_0$ such that the sign of
$\sin(w_{n+1}+\kappa_{n+1}-w_{n-1})$ is constant for all $n$. Then we have:
\begin{enumerate}
 \item (Isoperimetricity) $\overline{\gamma}$ satisfies
\begin{equation}
\overline{\epsilon}_n=\left|\overline{\gamma}_{n + 1} - \overline{\gamma}_n\right|
= \left|\gamma_{n + 1} - \gamma_n\right|
= \epsilon_n,
\end{equation}
and thus the deformation $\gamma \mapsto \overline{\gamma}$ defined by
\eqref{def:isoperi0} is an isoperimetric deformation.
 \item (Preservation of the torsion) It follows that
\begin{equation}
 \overline{\nu}_{n}=\nu_n.
\end{equation}
Therefore, we have that $\overline{a}_n=a_n$ from (1), and that the torsion 
$\overline{\lambda}_n=\frac{\sin\overline{\nu}_{n+1}}{\overline{\epsilon}_n}$ of
$\overline{\gamma}$ is a constant with respect to $n$, whose value is equal to $\lambda$. Namely,
the deformation $\gamma \mapsto \overline{\gamma}$ preserves the torsion.
\item (Deformation of the Frenet frame) The Frenet frame 
$\overline{\varPhi} = \left[\overline{T},\overline{N}, \overline{B}\right]$ of 
$\overline{\gamma}$ satisfies either of the following:
in case of $\sin(w_{n+1}+\kappa_{n+1}-w_{n-1})>0$, 
\begin{equation}
\overline{\varPhi}_n
=
\varPhi_n
R_3 \left(w_n\right)
R_1 \left(\mu\right)
R_3 \left(w_{n + 1} + \kappa_{n + 1}\right),\quad
\mu = - 2 \arctan \dfrac{b\lambda}{2},
\end{equation}
or in case of $\sin(w_{n+1}+\kappa_{n+1}-w_{n-1})<0$,
\begin{equation}
\overline{\varPhi}_n
=
\varPhi_n
R_3 \left(w_n\right)
R_1 \left(\mu\right)
R_3 \left(-w_{n + 1} - \kappa_{n + 1}\right),\quad
\mu =  2 \arctan \dfrac{2}{b\lambda}.
\end{equation}
\end{enumerate}
\end{prop}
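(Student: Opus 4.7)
I would first establish part (3), from which parts (1) and (2) follow via the explicit Frenet frame formula. The starting point is to expand \eqref{def:isoperi0} by one step,
\begin{equation*}
\bar\gamma_{n+1}-\bar\gamma_n = \epsilon_n T_n + \delta(\cos w_{n+1}\, T_{n+1} + \sin w_{n+1}\, N_{n+1}) - \delta(\cos w_n\, T_n + \sin w_n\, N_n),
\end{equation*}
and use the discrete Frenet--Serret formula \eqref{eq:frenet-serret:d.space-curve} to re-express $T_{n+1}, N_{n+1}$ in the basis $\{T_n,N_n,B_n\}$. Collecting coefficients, this becomes
\begin{equation*}
\bar\gamma_{n+1}-\bar\gamma_n = [\epsilon_n + \delta(\cos u-\cos v)]\,T_n + \delta[\cos\nu_{n+1}\sin u - \sin v]\,N_n - \delta\sin\nu_{n+1}\sin u\, B_n,
\end{equation*}
where $u:=w_{n+1}+\kappa_{n+1}$ and $v:=w_n$. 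All three parts of the proposition are read off from this single expression.

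\textbf{Part (1): isoperimetricity.} Squaring and using $\sin^2\nu_{n+1}+\cos^2\nu_{n+1}=1$, the condition $|\bar\gamma_{n+1}-\bar\gamma_n|^2=\epsilon_n^2$ reduces to the single scalar identity
\begin{equation*}
\epsilon_n(\cos u-\cos v) + \delta\bigl[1 - \cos u\cos v - \cos\nu_{n+1}\sin u\sin v\bigr] = 0.
\end{equation*}
Equation \eqref{def:isoperi-w0} is equivalent to $\tan(u/2)=\frac{b+a_n}{b-a_n}\tan(v/2)$, and I would substitute the tangent half-angle expressions for $\cos u,\sin u$ and $\cos v,\sin v$, together with $\epsilon_n = a_n/(1+a_n^2\lambda^2/4)$, $\delta=b/(1+b^2\lambda^2/4)$, $\cos\nu_{n+1}=(1-a_n^2\lambda^2/4)/(1+a_n^2\lambda^2/4)$, and $\sin\nu_{n+1}=\lambda\epsilon_n$. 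After clearing denominators the two terms simplify to rational expressions that are exact negatives of each other, yielding the identity and hence $\bar\epsilon_n=\epsilon_n$.

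\textbf{Parts (3) and (2): Frenet frame and torsion.} Dividing the displayed formula by $\epsilon_n$ gives $\bar T_n$ explicitly. Choosing $\mu=-2\arctan(b\lambda/2)$, so that $\sin\mu=-\delta\lambda$ and $\cos\mu=(1-b^2\lambda^2/4)/(1+b^2\lambda^2/4)$, direct matrix multiplication shows the first column of $R_3(w_n)R_1(\mu)R_3(u)$ equals the coefficient vector of $\bar T_n$. For $\bar B_n$ I would form $\bar T_{n-1}\times\bar T_n$ from the analogous expression at $n-1$; after simplifying with \eqref{def:isoperi-w0} at indices $n$ and $n-1$, its magnitude factors as $|\sin(w_{n+1}+\kappa_{n+1}-w_{n-1})|$ times a manifestly positive quantity, so the unit binormal $\bar B_n$ inherits the sign of $\sin(w_{n+1}+\kappa_{n+1}-w_{n-1})$. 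In the positive case this $\bar B_n$ coincides with the third column of $R_3(w_n)R_1(\mu)R_3(u)$; in the negative case, the sign flip in $\bar B_n$ corresponds to replacing $\mu\mapsto\mu+\pi$ (so that $\cos\mu\mapsto-\cos\mu$ and $\sin\mu\mapsto-\sin\mu$, giving the alternative $\mu=2\arctan(2/(b\lambda))$) together with $R_3(u)\mapsto R_3(-u)$, which is precisely the second formula in part (3). The constant-sign hypothesis is exactly what guarantees a consistent single decomposition for all $n$. Finally, reading off $\bar B_n=\sin w_n\sin\mu\,T_n-\cos w_n\sin\mu\,N_n+\cos\mu\,B_n$ and the analogous $\bar B_{n-1}$ (rewritten in the frame at $n$ via one inverse Frenet--Serret step), a direct computation of $\langle\bar B_n,\bar B_{n-1}\rangle$ using once more \eqref{def:isoperi-w0} collapses to $\cos\nu_n$, proving $\bar\nu_n=\nu_n$ and hence $\bar a_n=a_n$, $\bar\lambda_n=\lambda$. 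The main obstacle is the tangent half-angle identity in part (1), which is a mechanical but bookkeeping-intensive manipulation; the conceptually delicate point is the sign dichotomy for $\bar B_n$, which is what gives rise to the two alternative Frenet frame formulas of part (3).
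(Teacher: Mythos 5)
Your proposal follows essentially the same route as the paper's proof: the same expansion of $\overline{\gamma}_{n+1}-\overline{\gamma}_n$ in the frame at $n$, the same reduction of isoperimetricity to a single scalar identity killed by \eqref{def:isoperi-w0} via tangent half-angle substitutions, the same cross-product computation $\overline{T}_{n-1}\times\overline{T}_n$ producing the factor $\sin(w_{n+1}+\kappa_{n+1}-w_{n-1})$ and hence the sign dichotomy, and the same identification of the frame transition matrix with $R_3(w_n)R_1(\mu)R_3(\pm(w_{n+1}+\kappa_{n+1}))$ (your observation that the second case is $\mu\mapsto\mu+\pi$ combined with $R_3(u)\mapsto R_3(-u)$ is a correct and slightly cleaner way to see it).

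One step is incomplete as written: in part (2) you conclude $\overline{\nu}_n=\nu_n$ from $\langle\overline{B}_n,\overline{B}_{n-1}\rangle=\cos\nu_n$ alone. Since $\nu_n\in[-\pi,\pi)$, the cosine only determines $\nu_n$ up to sign, and the sign is exactly what fixes the sign of the torsion $\lambda_n=\sin\nu_{n+1}/\epsilon_n$; so this check cannot distinguish preservation of $\lambda$ from flipping it to $-\lambda$. You must also verify $\langle\overline{B}_n,\overline{N}_{n-1}\rangle=\sin\nu_n$ (the paper checks $\langle\overline{B}_{n+1},\overline{N}_n\rangle=\sin\nu_{n+1}$), which is mechanical given that you already have $\overline{N}_n$ as the second column of your rotation product. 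With that addition the argument is complete.
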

The proof of Proposition \ref{prop:dmkdvflow} will be given in Section \ref{subsection_main1}.  The
procedure to obtain $\overline{\gamma}$ from $\gamma$ can be also expressed as follows: (1) Take an
arbitrary point on $\gamma$, for example, $\gamma_0$, and move it to an arbitrary point on the plane
(the osculating plane) spanned by $T_0$ and $N_0$ (or $T_{-1}$ and $T_0$), and we denote the
distance that the point has moved by $\delta$. Here, for the torsion $\lambda$ of $\gamma$, $\delta$
is chosen so that $0<\delta<1/\lambda$ is satisfied. (2) Draw a circle with the radius $\delta$ on
the osculating plane of each point. (3) Move other points to the position satisfying the following
three conditions; (a) it is on the circle (equidistant condition), (b) it preserves the arc-length
(isoperimetric condition) (c) it lies on the lower half plane of the osculating plane (opposite side
of $N$ with respect to $T$).
\begin{figure}[ht]
\begin{center}
  \includegraphics[scale=0.25]{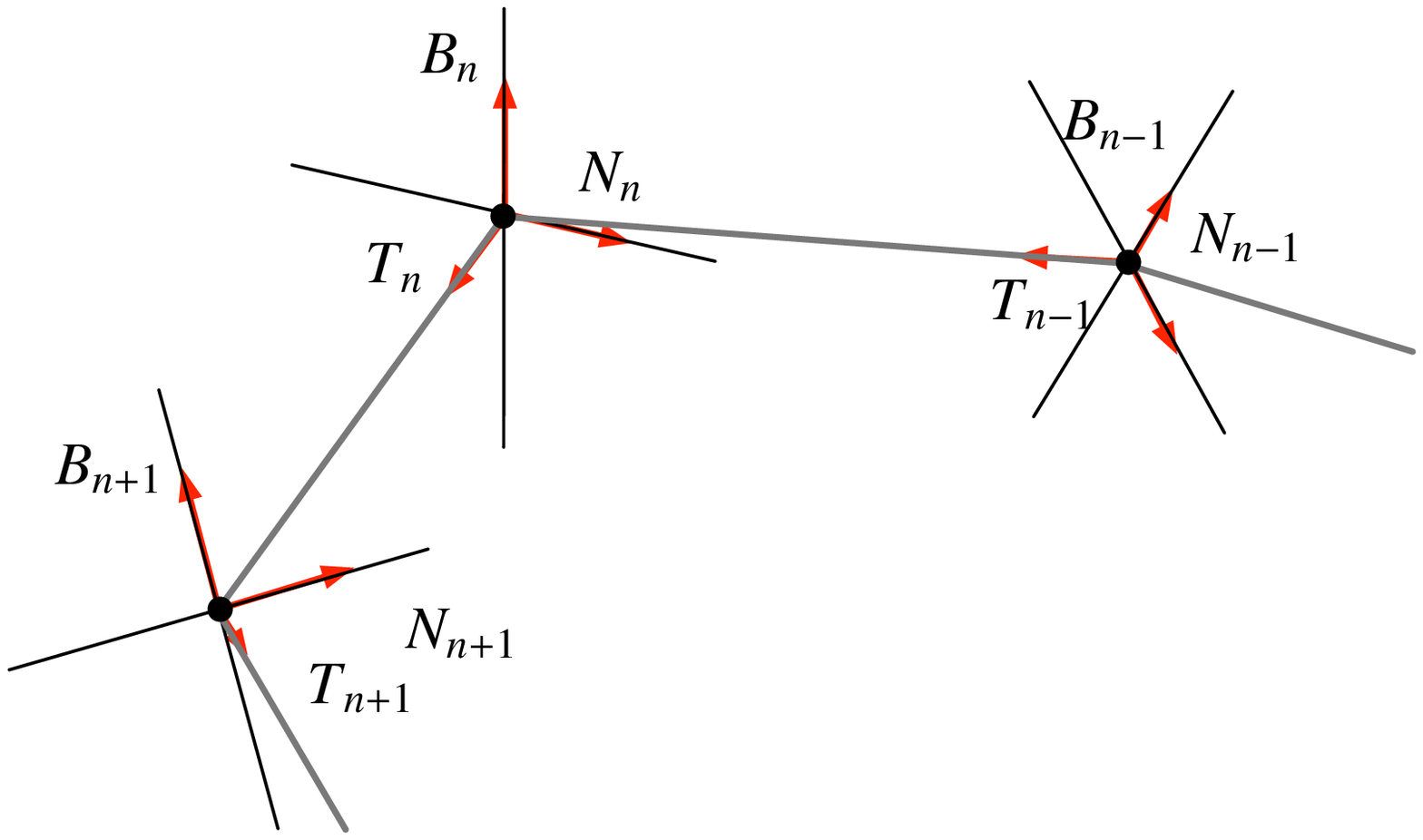}
  \includegraphics[scale=0.25]{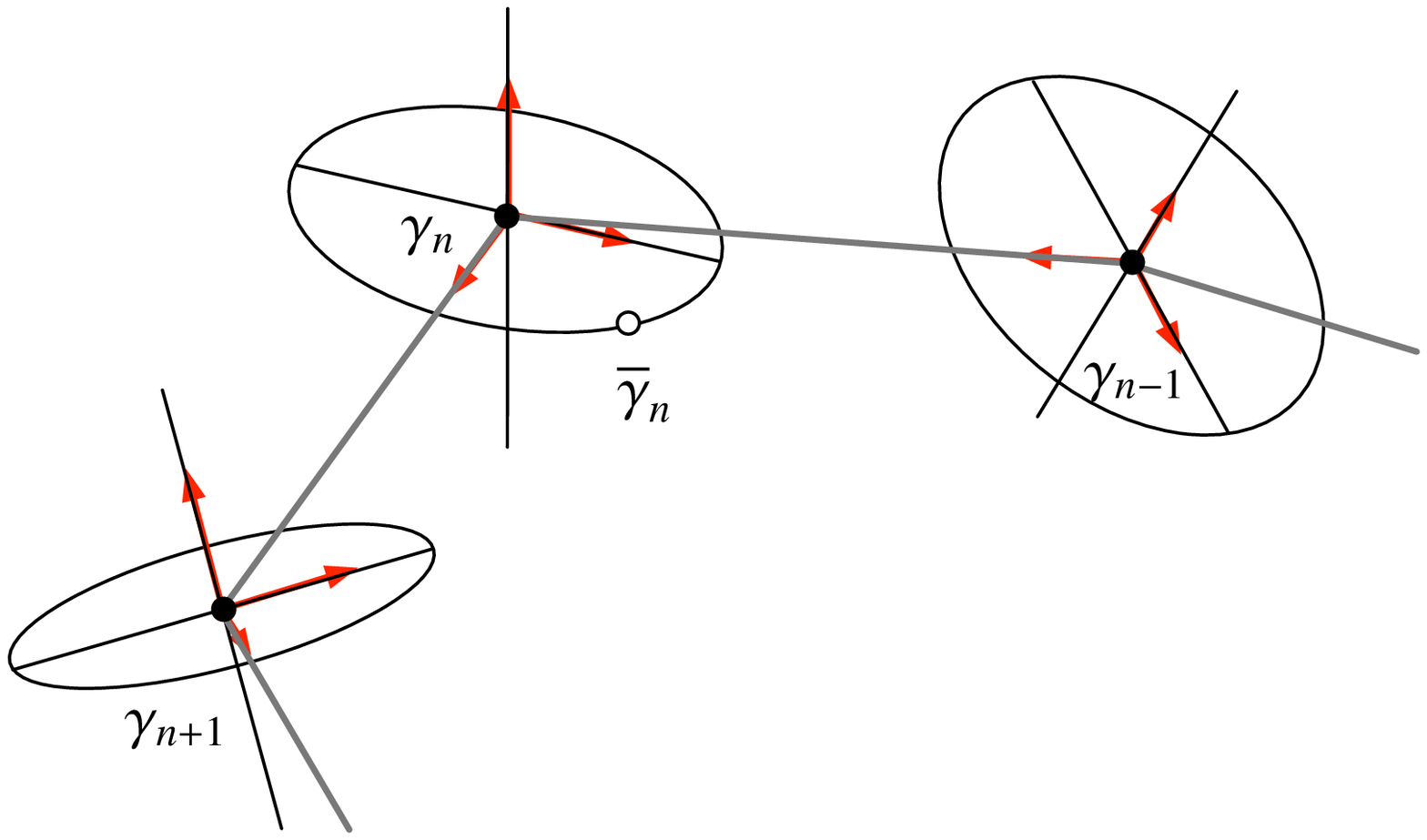}
  \includegraphics[scale=0.25]{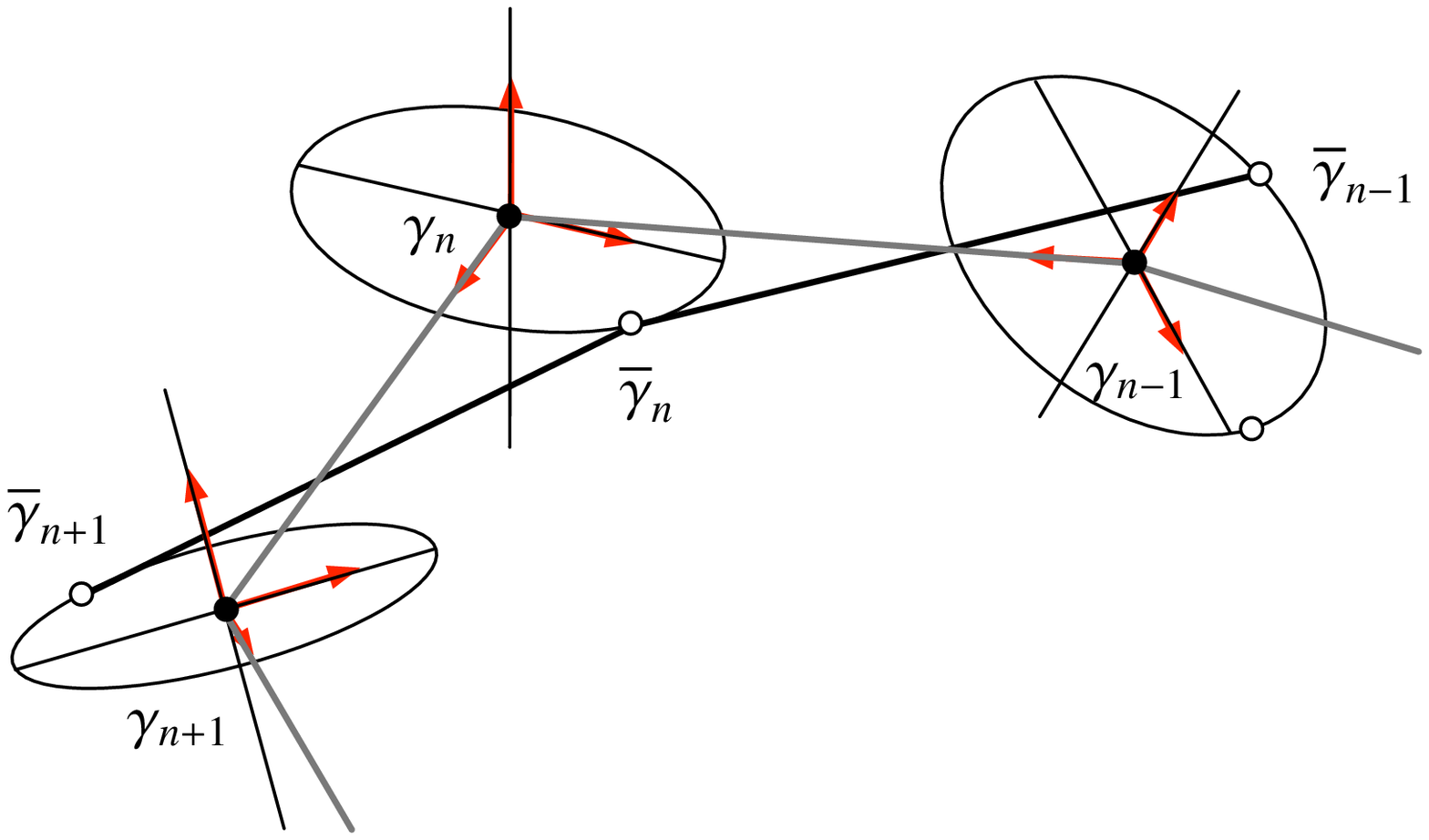} \caption{Deformation of a curve. Left: a
  discrete curve and its Frenet frame, middle: step (1) and (2), right:
  step(3)}\label{fig:deformation}
\end{center}
 \end{figure}
The condition that $\sin(w_{n+1}+\kappa_{n+1}-w_{n-1})$ is a constant for all $n$ is equivalent to
that $\overline{B}_n$ lies on the same side with respect to the plane spanned by
$\overline{\gamma}_n-\gamma_n$ and $B_n$ for all $n$ (see Fig.\ref{fig:torsion_const}), and this is
the necessary and sufficient condition that the deformation is torsion-preserving. We will discuss
this point in Section \ref{section:torsion_preserving}.
\begin{figure}[ht]
\begin{center}
  \includegraphics[scale=0.4]{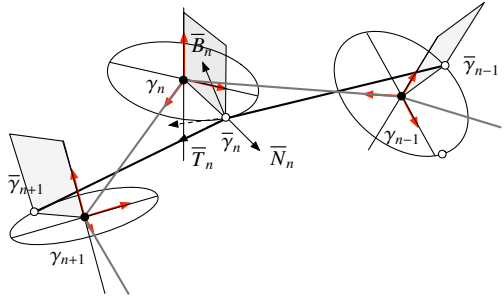}
\caption{The necessary and sufficient condition for the preservation of the torsion.
$\overline{B}$ should lie on the same side with respect to the plane spanned by $\overline{\gamma}-\gamma$ and $B$ (gray planes).}\label{fig:torsion_const}
\end{center}
\end{figure}
%
Repeating the construction in Proposition \ref{prop:dmkdvflow} yields the sequence of the 
discrete space curves with constant torsion $\gamma^0 = \gamma,\, \gamma^1 = \overline{\gamma},\, \gamma^2
= \overline{\gamma^1}, \ldots, \gamma^m=\overline{\gamma^{m-1}},\ldots$. Correspondingly, we write
the data $\kappa$, $T$, $N$, $B$ associated with the discrete curve as $\kappa^m$, $T^m$, $N^m$,
$B^m$, respectively. We also write the data of deformation $\delta$, $b$, $\mu$ as $\delta_m$,
$b_m$, $\mu_m$, respectively.  The following theorem is derived immediately by applying Proposition
\ref{prop:dmkdvflow} successively.
\begin{thm}\label{thm:dmkdvflow}\rm
Let $\gamma^0$ be a discrete space curve with the constant torsion $\lambda$.
We define the sequence of discrete space curves $\gamma^m$ by
\begin{gather}
\gamma^{m + 1}_n
=\label{def:isoperi}
\gamma^m_n
+
\delta_m
\left(\cos w^m_n T^m_n + \sin w^m_n N^m_n\right),\\
\delta_m
=\label{def:isoperi-c}
\dfrac{b_m}{1 + \frac{b_m^2\lambda^2}{4}}>0\, ,\\
w^m_{n + 1} 
=\label{def:isoperi-w}
-  \kappa^m_{n + 1} 
+ 2\arctan\dfrac{b_m + a_n}{b_m - a_n}\tan \dfrac{w^m_n}{2},
\end{gather}
where $a_n$ is determined from the data $\epsilon$, $\nu$ of the initial curve $\gamma^0$ by
\eqref{eqn:a_and_epsilon_nu}. We require that the sequences $b_m>0$ and $w_0^m$ should be
chosen so that the sign of $\sin(w_{n+1}^m+\kappa_{n+1}^m-w_{n-1}^m)$ is constant for all $n$.
Then we have the following:
\begin{enumerate}
\item (Isoperimetricity and preservation of the torsion) For all $m$, 
$\gamma^m$ satisfies
\begin{equation}
\left|\gamma^m_{n + 1} - \gamma^m_n\right|
=\epsilon_n,
\end{equation}
and the torsion of $\gamma^m$ is $\lambda$.
Namely, \eqref{def:isoperi} gives a isoperimetric and torsion-preserving deformation.
\item (Deformation of the Frenet frame)
The Frenet frame $\varPhi^m_n = \left[T^m_n, N^m_n, B^m_n\right]$ satisfies
\begin{equation}
\varPhi^m_{n + 1} = \varPhi^m_n L^m_n,\quad
\varPhi^{m + 1}_n = \varPhi^m_n M^m_n. \label{Phi}
\end{equation}
Here,
\begin{equation}
L^m_n
=\label{L}
R_1 \left(- \nu_{n + 1}\right)
R_3 \left(\kappa^m_{n + 1}\right),\quad
\nu_{n + 1} = 2 \arctan \dfrac{a_n \lambda}{2},
\end{equation}
and $M_n^m$ is given by either of the following for each $m$: in case of
$\sin(w_{n+1}^m+\kappa_{n+1}^m-w_{n-1}^m)>0$,
\begin{equation}
M^m_n
=\label{M}
R_3 \left(w^m_n\right)
R_1 \left(\mu_m\right)
R_3 \left(w^m_{n + 1} + \kappa^m_{n + 1}\right),\quad 
\mu_m=- 2 \arctan \dfrac{b_m\lambda}{2},
\end{equation}
\noindent or in case of $\sin(w_{n+1}^m+\kappa_{n+1}^m-w_{n-1}^m)<0$,
\begin{equation}
M^m_n
=\label{M_sG}
R_3 \left(w^m_n\right)
R_1 \left(\mu_m\right)
R_3 \left(-w^m_{n + 1} - \kappa^m_{n + 1}\right),\quad 
\mu_m= 2 \arctan \dfrac{2}{b_m\lambda}.
\end{equation}
%
Choosing the matrix $M$ as \eqref{M}, then $L, M$ are the Lax pair of the discrete mKdV equation. Namely,
the compatibility condition $L_n^m M_{n+1}^m = M_n^m L_{n}^{m+1}$ for the system of difference equation 
with respect to $\varPhi^m_n$ \eqref{Phi} gives
\begin{equation}\label{compatibility_dmKdV}
 w_{n+1}^m-w_{n-1}^m = \kappa_{n}^{m+1}-\kappa_{n+1}^m.
\end{equation}
Then \eqref{def:isoperi-w} and \eqref{compatibility_dmKdV} yield the discrete mKdV equation 
\begin{equation}\label{dmkdv}
\dfrac{w^{m + 1}_{n + 1}}{2} - \dfrac{w^m_n}{2}
=
 \arctan \left(\dfrac{b_{m + 1} + a_n}{b_{m + 1} - a_n}\tan \dfrac{w^{m + 1}_n}{2}\right)
 - \arctan \left(\dfrac{b_m + a_{n + 1}}{b_m - a_{n + 1}}
\tan \dfrac{w^m_{n + 1}}{2}\right).
\end{equation}
Choosing the matrix $M$ as \eqref{M_sG}, then $L, M$ are the Lax pair of the discrete sine-Gordon equation.
Namely, the compatibility condition of \eqref{Phi} gives
\begin{equation}\label{compatibility_sG}
 w_{n+1}^m-w_{n-1}^m = -\kappa_{n}^{m+1}-\kappa_{n+1}^m,
\end{equation}
and then \eqref{def:isoperi-w} and \eqref{compatibility_sG} yields the discrete sine-Gordon equation
\begin{equation}\label{dsG}
\dfrac{w^{m + 1}_{n + 1}}{2} + \dfrac{w^m_n}{2}
=
 \arctan \left(\dfrac{b_{m + 1} + a_n}{b_{m + 1} - a_n}
\tan \dfrac{w^{m + 1}_n}{2}\right)
 + \arctan \left(\dfrac{b_m + a_{n + 1}}{b_m - a_{n + 1}}
\tan \dfrac{w^m_{n + 1}}{2}\right).
\end{equation}
\end{enumerate}
\end{thm}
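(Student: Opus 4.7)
The plan is to build the proof as a straightforward induction on $m$ using Proposition \ref{prop:dmkdvflow}, then identify $L^m_n$ and $M^m_n$ as a Lax pair and extract the dmKdV and dsG equations from the resulting zero-curvature condition.

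First, I would establish part (1) and the two frame equations in \eqref{Phi} by induction on $m$. The base case $m=0$ holds by hypothesis: $\gamma^0$ has edge lengths $\epsilon_n = a_n/(1+a_n^2\lambda^2/4)$ and constant torsion $\lambda$, and its Frenet--Serret formula takes the form $\varPhi^0_{n+1} = \varPhi^0_n L^0_n$ with $L^0_n$ as in \eqref{L}, where $\nu_{n+1} = 2\arctan(a_n\lambda/2)$ is obtained from \eqref{torsion}. For the inductive step, the sign hypothesis on $\sin(w^m_{n+1} + \kappa^m_{n+1} - w^m_{n-1})$ allows us to apply Proposition \ref{prop:dmkdvflow} to $\gamma^m$; it immediately delivers that $\gamma^{m+1}$ has the same edge lengths $\epsilon_n$ and the same binormal angles $\nu_{n+1}$, hence the same constant torsion $\lambda$, and also provides $\varPhi^{m+1}_n = \varPhi^m_n M^m_n$ with $M^m_n$ given by \eqref{M} or \eqref{M_sG} according to the sign.

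Next, to produce \eqref{compatibility_dmKdV} and \eqref{compatibility_sG}, I would compute $\varPhi^{m+1}_{n+1}$ in two ways and equate the results, obtaining the zero-curvature relation $L^m_n M^m_{n+1} = M^m_n L^{m+1}_n$. Using $R_3(\alpha)R_3(\beta)=R_3(\alpha+\beta)$, in the mKdV case the left side reduces to
\[
R_1(-\nu_{n+1})\, R_3(w^m_{n+1} + \kappa^m_{n+1})\, R_1(\mu_m)\, R_3(w^m_{n+2} + \kappa^m_{n+2}),
\]
and the right side to
\[
R_3(w^m_n)\, R_1(\mu_m)\, R_3(w^m_{n+1} + \kappa^m_{n+1})\, R_1(-\nu_{n+1})\, R_3(\kappa^{m+1}_{n+1}).
\]
I would lift to $\mathrm{SU}(2)$ via Proposition \ref{prop:SO(3)_SU(2)}, where each $R_1$ and $R_3$ becomes an explicit diagonal or anti-diagonal $2\times 2$ matrix exponential, so that the five-factor products can be multiplied out and matched entry by entry. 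Matching the phases of the diagonal entries isolates the angle identity $w^m_{n+1} - w^m_{n-1} = \kappa^{m+1}_n - \kappa^m_{n+1}$, which is \eqref{compatibility_dmKdV}; the sine-Gordon case is identical but with the sign flip in the third rotation, producing \eqref{compatibility_sG}.

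Finally, I would feed the compatibility relation into \eqref{def:isoperi-w}. Rewriting \eqref{def:isoperi-w} as $(w^m_{n+1} + \kappa^m_{n+1})/2 = \arctan\!\bigl(\tfrac{b_m+a_n}{b_m-a_n}\tan(w^m_n/2)\bigr)$ and applying it at $(m+1,n)$ and $(m,n+1)$, then eliminating $\kappa^{m+1}_{n+1}+\kappa^m_{n+2}$ by \eqref{compatibility_dmKdV} or \eqref{compatibility_sG}, yields \eqref{dmkdv} and \eqref{dsG} respectively. The hardest step is the five-factor rotation calculation for the compatibility condition: although conceptually routine, one must carry out the comparison cleanly, and in particular verify that the half-angle parameters $\mu_m/2$, $\nu_{n+1}/2$ together with the arctangent in \eqref{def:isoperi-w} combine via the tangent addition formula exactly as required, while keeping track of the sign branches that distinguish the mKdV and sine-Gordon cases.
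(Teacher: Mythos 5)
Your proposal is correct and follows essentially the paper's own route: the paper proves this theorem in one line by applying Proposition \ref{prop:dmkdvflow} successively in $m$, which is exactly your induction, and the five-factor rotation computation you outline for the compatibility condition is the routine verification that the paper leaves implicit inside the theorem statement. One small nitpick: in the mKdV case the quantity eliminated via \eqref{compatibility_dmKdV} is the difference $\kappa^{m+1}_{n+1}-\kappa^m_{n+2}$ rather than the sum (the sum occurs only in the sine-Gordon case), but this does not affect the argument.
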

\begin{rem}\rm\hfill
\begin{enumerate}
\item The deformation of the form \eqref{def:isoperi}, \eqref{def:isoperi-c} has been obtained in
       \cite{Calini-Ivey} as the B\"acklund transformation of the space smooth curve with constant
       torsion.
\item The discrete sine-Gordon equation \eqref{dsG} can be rewritten as the well-known
       form\cite{Hirota:dsG}
\begin{equation}
 \sin\frac{\theta_{n+1}^{m+1}-\theta_n^{m+1}-\theta_{n+1}^m+\theta_n^m}{4} = \frac{a_n}{b_m}
\sin\frac{\theta_{n+1}^{m+1}+\theta_n^{m+1}+\theta_{n+1}^m+\theta_n^m}{4}
\end{equation}
by the dependent variable transformation
\begin{equation}\label{theta_sG}
 w_n^m = -\frac{\theta_n^{m+1}+\theta_{n+1}^m}{2}.
\end{equation}
\end{enumerate}
\end{rem}
It is also possible for this case to reconstruct the discrete curves from the Frenet frame by 
using the Sym-Tafel formula. Lifting the Frenet frame $\varPhi=[T,N,B]\in\mathrm{SO}(3)$
to the $\mathrm{SU}(2)$-valued function $\phi$ by using Proposition \ref{prop:SO(3)_SU(2)},
then $\phi$ satisfies
\begin{align}
&\phi_{n + 1}^m
=\label{akns.dmkdv.n}
\phi_n^m L_n^m,\quad
L_n^m
=
\pm
\left[\begin{array}{cc}
 e^{-\frac{\sqrt{- 1}}{2} \nu_{n+1}} & 0\\
0 & e^{\frac{\sqrt{- 1}}{2} \nu_{n+1}}
\end{array}\right]
R\left(\frac{\kappa_{n+1}^m}{2}\right),\quad
\nu_{n+1}=2\arctan\frac{a_{n}\lambda}{2},
\end{align}
and 
\begin{equation}\label{akns.dmkdv.m}
\begin{split}
\phi_n^{m+1}
=
\phi_n^m M_n^m,\quad
&M_n^m
=\pm
R\left(\frac{w_n^m}{2}\right)
\left[\begin{array}{cc}
 e^{\frac{\sqrt{- 1}}{2} \mu_{m}} & 0\\
0 & e^{-\frac{\sqrt{- 1}}{2} \mu_{m}}
\end{array}\right]
R\left(\frac{w_{n+1}^m+\kappa_{n+1}^m}{2}\right),\\
&\mu_{m}=-2\arctan\frac{b_{m}\lambda}{2} ,
\end{split}
\end{equation}
or
\begin{equation}\label{akns.dsG.m}
\begin{split}
\phi_n^{m+1}
=
\phi_n^m M_n^m,\quad
&M_n^m
=\pm
R\left(\frac{w_n^m}{2}\right)
\left[\begin{array}{cc}
 e^{\frac{\sqrt{- 1}}{2} \mu_{m}} & 0\\
0 & e^{-\frac{\sqrt{- 1}}{2} \mu_{m}}
\end{array}\right]
R\left(-\frac{w_{n+1}^m+\kappa_{n+1}^m}{2}\right),\\
&\mu_{m}=2\arctan\frac{2}{b_{m}\lambda},
 \end{split}
\end{equation}
where
\begin{equation}
 R(\theta)=\left[\begin{array}{cc}\cos\theta & -\sin\theta\\ \sin\theta & \cos\theta\end{array}\right].
\end{equation}
\begin{thm}\label{thm:dmKdV_dsG}\rm
For a constant $\lambda$ and sequences $a_n$, $b_m$, $w_n^0$, $w_0^m$, we define the functions $w$
and $\kappa$ by \eqref{dmkdv} and \eqref{def:isoperi-w} respectively, and let $\phi$ be a solution
to the system of difference equations \eqref{akns.dmkdv.n}, \eqref{akns.dmkdv.m}.
Then let $f$ be the isomorphism defined by \eqref{su2_so3:f} and put
\begin{equation}\label{eqn:Sym_dmKdV}
\gamma^m_n
= f^{- 1} \left(S^m_n\right),\quad
S^m_n
= - \left(\dfrac{\partial}{\partial \lambda} \phi^m_n\right)
\left(\phi^m_n\right)^{- 1}.
\end{equation}
Then, for each $m$, $\gamma$ is a discrete space curve with the following properties: (1) the
distances between the contiguous vertices are given by
\begin{equation}
\left|\gamma_{n+1}^m-\gamma_n^m\right|=\epsilon_n
=
\dfrac{a_n}{1+ \frac{a_n^2\lambda^2}{4}},\quad
\left|\gamma_{n}^{m+1}-\gamma_n^m\right|=\delta_m
=
\dfrac{b_m}{1+ \frac{b_m^2\lambda^2}{4}},
\end{equation}
(2) the angles between the contiguous tangent vectors are $\kappa_n^m$, (3) the angles between the
contiguous binormal vectors are $\nu_n^m$, (4) the angles between $\gamma_n^{m+1}-\gamma_n^m$ and
$\gamma_{n+1}^m-\gamma_n^m$ are $w_n^m$.  Namely, $\gamma$ is a torsion-preserving isoperimetric and
equidistant deformation of the discrete space curves with constant torsion described by the discrete
mKdV equation.  Similarly, if we determine the function $w$ and $\kappa$ by \eqref{dsG} and
\eqref{def:isoperi-w} respectively, and let $\phi$ be a solution to the system of difference
equations \eqref{akns.dmkdv.n}, \eqref{akns.dsG.m}, then \eqref{eqn:Sym_dmKdV} gives a
torsion-preserving isoperimetric and equidistant deformation of the discrete space curves with
constant torsion described by the discrete sine-Gordon equation.
\end{thm}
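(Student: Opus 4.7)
I would adapt the Sym--Tafel arguments already given in Sections 3 and 4 to the fully-discrete setting, and then reduce the geometric assertions to Proposition \ref{prop:dmkdvflow} applied inductively (i.e.\ Theorem \ref{thm:dmkdvflow}). The common mechanism is that differentiating $S_n^m = -(\partial_\lambda \phi_n^m)(\phi_n^m)^{-1}$ in $n$ or in $m$ yields identities of the form $S_{n+1}^m - S_n^m = -\phi_n^m\,(\partial_\lambda L_n^m)(L_n^m)^{-1}(\phi_n^m)^{-1}$ and the analogous one in $m$; each edge vector of $\gamma^m$ is then read off, via $f^{-1}$, from an explicit element of $\mathfrak{su}(2)$.

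\textbf{The $n$-direction.} Factoring $L_n^m = \exp(-\nu_{n+1} e_1)\,R(\kappa_{n+1}^m/2)$ for a suitable choice of sign, only the first factor depends on $\lambda$, and the identity $\partial_\lambda \nu_{n+1} = a_n/(1+a_n^2\lambda^2/4) = \epsilon_n$ gives $(\partial_\lambda L_n^m)(L_n^m)^{-1} = -\epsilon_n e_1$. Hence $\gamma_{n+1}^m - \gamma_n^m = \epsilon_n T_n^m$ with $T_n^m := \phi_n^m e_1 (\phi_n^m)^{-1}$. Setting $N_n^m := \phi_n^m e_2 (\phi_n^m)^{-1}$ and $B_n^m := \phi_n^m e_3 (\phi_n^m)^{-1}$ and computing $L_n^m e_i (L_n^m)^{-1}$ exactly as in the proof of the discrete Sym--Tafel formula in Section 4 verifies the Frenet--Serret formula \eqref{eq:frenet-serret:d.space-curve}--\eqref{kappa_and_nu:discrete} with the prescribed tangent angle $\kappa_n^m$ and binormal angle $\nu_{n+1}$, settling items (1)--(3) in the $n$-direction.

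\textbf{The $m$-direction and conclusion.} For the dmKdV choice \eqref{akns.dmkdv.m} I would factor $M_n^m = R(w_n^m/2)\exp(\mu_m e_1)R((w_{n+1}^m+\kappa_{n+1}^m)/2)$ with $\mu_m = -2\arctan(b_m\lambda/2)$ and use $\partial_\lambda \mu_m = -b_m/(1+b_m^2\lambda^2/4) = -\delta_m$. Since $R(\theta) = \exp(2\theta e_3)$ and $[e_3,e_1]=e_2$, $[e_3,e_2]=-e_1$, the Hadamard expansion gives $R(w/2) e_1 R(-w/2) = \cos w\,e_1 + \sin w\,e_2$, so $(\partial_\lambda M_n^m)(M_n^m)^{-1} = -\delta_m(\cos w_n^m\,e_1 + \sin w_n^m\,e_2)$ and therefore $\gamma_n^{m+1} - \gamma_n^m = \delta_m(\cos w_n^m\,T_n^m + \sin w_n^m\,N_n^m)$. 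This is exactly the geometric deformation of Proposition \ref{prop:dmkdvflow}, so iterating that proposition (Theorem \ref{thm:dmkdvflow}) delivers all four claimed properties of $\gamma^m$ at once, with (4) being immediate since $w_n^m$ is built into the formula as the angle between $\gamma_n^{m+1}-\gamma_n^m$ and $T_n^m$. For the dsG choice \eqref{akns.dsG.m}, the factorization $M_n^m = R(w_n^m/2)\exp(\mu_m e_1)R(-(w_{n+1}^m+\kappa_{n+1}^m)/2)$ has $\mu_m = 2\arctan(2/(b_m\lambda))$, which again satisfies $\partial_\lambda \mu_m = -\delta_m$; since the rightmost rotation is $\lambda$-independent it drops out of $(\partial_\lambda M_n^m)(M_n^m)^{-1}$, and the whole derivation repeats verbatim. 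The main technical hurdle is verifying the two derivative identities $\partial_\lambda \nu_{n+1} = \epsilon_n$ and $\partial_\lambda \mu_m = -\delta_m$ (the latter being the subtle point, because the dsG case uses a different branch of $\arctan$) and the Hadamard calculation $R(w/2) e_1 R(-w/2) = \cos w\,e_1 + \sin w\,e_2$; everything else is bookkeeping, plus an appeal to Proposition \ref{prop:dmkdvflow}.
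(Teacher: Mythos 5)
Your proposal is correct and follows essentially the same route as the paper: the paper's proof likewise computes $S_n^{m+1}-S_n^m=-\phi_n^m(M_n^m)_\lambda(M_n^m)^{-1}(\phi_n^m)^{-1}=\delta_m(\cos w_n^m\,T_n^m+\sin w_n^m\,N_n^m)$ and identifies this with the deformation of Proposition \ref{prop:dmkdvflow}, handling the $n$-direction by the discrete Sym--Tafel proposition of Section 4 exactly as you do. The derivative identities $\partial_\lambda\nu_{n+1}=\epsilon_n$ and $\partial_\lambda\mu_m=-\delta_m$ (in both branches) and the conjugation $R(w/2)e_1R(-w/2)=\cos w\,e_1+\sin w\,e_2$ all check out.
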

In Theorem \ref{thm:dmkdvflow} (2), it is the necessary and sufficient condition for the deformation
of the discrete curve $\gamma^m\mapsto \gamma^{m+1}$ being torsion-preserving that the sign of
$\sin(w_{n+1}^m+\kappa_{n+1}^m-w_{n-1}^m)$ is constant as a function of $n$ at each $m$.  On the
other hand, given an initial curve $\gamma_n^0$ (accordingly $a_n$, $\kappa_n^0$ and $\lambda$), the
deformation of the curve is uniquely determined if we specify $w_0^m$ and $b_m$. Therefore, the
necessary and sufficient condition for the preservation of the torsion may be described as the
condition of $w_0^m$ and $b_m$ for given $a_n$, $\kappa_n^0$ and $\lambda$. However, it is
practically difficult to write down this condition.  Instead, we can show the following proposition
as a sufficient condition.
\begin{prop}\label{prop:torsion_preservation}\rm\hfill
\begin{enumerate}
 \item In the discrete deformation of the discrete curves in Theorem \ref{thm:dmkdvflow},
if we choose $b_m>0$ to satisfy either of the following conditions at each $m$,
then the deformation is torsion-preserving.
\begin{equation}
 \begin{split}
\mbox{(i)}&\quad  b_m > \max\left\{
\frac{a_{\rm max}}{\tan\frac{\kappa_{\rm min}^m}{4}},
\frac{\Delta a}{2\cos\frac{\kappa_{\max}^m}{2}}
\left(1+\sqrt{1+\frac{4a_{\min}a_{\max}}{(\Delta a)^2}\cos^2\frac{\kappa_{\rm max}^m}{2}}\right)
\right\}\\
&\quad\mbox{or}\\
\mbox{(ii)}&\quad b_m < \min\left\{
a_{\rm min}\tan\frac{\kappa_{\rm min}^m}{4},
\frac{\Delta a}{2\cos\frac{\kappa_{\max}^m}{2}}
\left(-1+\sqrt{1+\frac{4a_{\min}a_{\max}}{(\Delta a)^2}\cos^2\frac{\kappa_{\rm max}^m}{2}}\right)
\right\},
 \end{split}
\end{equation}
where
\begin{equation}
 \kappa_{\mathrm{min}}^m=\min_{n}\kappa_n^m,\ 
 \kappa_{\mathrm{max}}^m=\max_{n}\kappa_n^m,\ 
a_{\rm max}=\max_{n} a_n,\quad a_{\rm min}=\min_{n} a_n,\ 
\Delta a = a_{\rm max}-a_{\rm min} .
\end{equation}
 \item In the case of (i), the Frenet frame $\Phi$ is deformed according to \eqref{Phi}, \eqref{L},
       \eqref{M_sG}, and in the case of (ii) it is deformed according to \eqref{Phi}, \eqref{L},
       \eqref{M}. Namely, (i) and (ii) correspond to the deformation by the discrete sine-Gordon
       equation \eqref{dsG} and discrete mKdV equation \eqref{dmkdv} respectively.
\end{enumerate}
\end{prop}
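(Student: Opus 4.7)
The plan is to translate the torsion-preservation criterion—stated after Theorem~\ref{thm:dmKdV_dsG}, namely that $\sin(w_{n+1}^m+\kappa_{n+1}^m-w_{n-1}^m)$ has constant sign in $n$ at each fixed $m$—into a statement about an invariant interval for the sequence $w_n^m$, and then to verify that conditions (i) and (ii) on $b_m$ produce such an interval. I would drop the superscript $m$ throughout the step $m\mapsto m+1$ and treat the sequences $a_n$ and $\kappa_n$ as the given data.

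First, I would use \eqref{def:isoperi-w} in both its forward and inverted forms,
\begin{equation*}
\tan\frac{w_{n+1}+\kappa_{n+1}}{2}=\frac{b+a_n}{b-a_n}\tan\frac{w_n}{2},\qquad
\tan\frac{w_{n-1}}{2}=\frac{b-a_{n-1}}{b+a_{n-1}}\tan\frac{w_n+\kappa_n}{2},
\end{equation*}
to express $w_{n+1}+\kappa_{n+1}-w_{n-1}$ as a function of the single value $w_n$ together with $a_{n-1}$, $a_n$, $\kappa_n$, $\kappa_{n+1}$ and $b$. Then I would apply the half-angle formula $\sin\theta=2\tan(\theta/2)/(1+\tan^2(\theta/2))$ and reduce $\sin(w_{n+1}+\kappa_{n+1}-w_{n-1})$ to a rational expression in $\tan(w_n/2)$ whose sign is governed by the sign of a specific linear-in-$\tan(w_n/2)$ numerator.

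Second, I would locate two invariant intervals for the one-step map $w_n\mapsto w_{n+1}=-\kappa_{n+1}+2\arctan\bigl(\tfrac{b+a_n}{b-a_n}\tan\tfrac{w_n}{2}\bigr)$ on which the numerator has definite sign, one near $w=0$ (giving $\sin(\cdot)<0$, hence the sine-Gordon choice \eqref{M_sG}) and one near $w=\pi$ (giving $\sin(\cdot)>0$, hence the mKdV choice \eqref{M}). The correspondence between the sign and the choice of $M^m_n$ can be checked by comparing with the small-parameter limits of the rotation formulas in Proposition~\ref{prop:dmkdvflow}. On each invariant interval, the sufficient condition breaks naturally into two parts: (a) the sign of the numerator must stay strict—this produces the bound $b_m>a_{\max}/\tan(\kappa_{\min}^m/4)$ in (i), respectively $b_m<a_{\min}\tan(\kappa_{\min}^m/4)$ in (ii); (b) the map must carry the interval into itself for every admissible $a_n\in[a_{\min},a_{\max}]$ and $\kappa_n\in[\kappa_{\min}^m,\kappa_{\max}^m]$—this produces the threshold involving $\Delta a$ and $\cos(\kappa_{\max}^m/2)$, which after clearing denominators is a quadratic in $b_m$ whose larger (respectively smaller) positive root gives the stated bound.

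The main obstacle will be the second estimate, the invariance of the interval. The map depends simultaneously on $a_n$ (through the M\"obius factor $\tfrac{b+a_n}{b-a_n}$) and on $\kappa_{n+1}$ (through the shift); since $\kappa_n$ is itself determined by the compatibility relations \eqref{compatibility_dmKdV}--\eqref{compatibility_sG}, the estimate has to be uniform in all admissible $\kappa$ and $a$. The symmetric appearance of $\tfrac{\Delta a}{2\cos(\kappa_{\max}^m/2)}\bigl(\mp 1+\sqrt{1+\tfrac{4a_{\min}a_{\max}}{(\Delta a)^2}\cos^2\tfrac{\kappa_{\max}^m}{2}}\bigr)$ in the two conditions is a strong hint that the key inequality reduces, after the substitutions above, to a quadratic of the form $b_m^2\cos^2(\kappa_{\max}^m/2) \mp (\Delta a)b_m\cos(\kappa_{\max}^m/2) - a_{\min}a_{\max}\cos^2(\kappa_{\max}^m/2) > 0$ (with suitable sign), whose roots give exactly these expressions; once this quadratic is identified, part (2) of the proposition—i.e.\ the identification with the sine-Gordon versus mKdV case—follows directly from the sign analysis in the first two steps.
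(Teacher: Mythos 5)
Your opening move matches the paper's: substitute the forward and inverted forms of \eqref{def:isoperi-w} so that $\sin(w_{n+1}^m+\kappa_{n+1}^m-w_{n-1}^m)$ becomes $\sin\bigl(2\arctan(cW)-2\arctan(\underline{c}\,\tfrac{W+K}{1-WK})\bigr)$ with $W=\tan\tfrac{w_n^m}{2}$, $K=\tan\tfrac{\kappa_n^m}{2}$, $c=\tfrac{b_m+a_n}{b_m-a_n}$-type M\"obius coefficients, and then expand via $\sin(2\arctan x-2\arctan y)=\tfrac{2(1+xy)(x-y)}{(1+x^2)(1+y^2)}$. But from there your plan has a genuine gap, in two respects. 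First, the numerator you obtain is not governed by a linear factor in $W$: it is (up to a positive denominator) the product of two \emph{quadratics} in $W$, namely $\underline{c}W^2+K(\underline{c}-c)W+c$ and $KW^2+(c\underline{c}-1)W+Kc\underline{c}$. The paper's key idea — which your proposal is missing — is to demand that both quadratics have negative discriminant, so that each factor has a fixed sign for \emph{every} real $W$. The two discriminant inequalities $K^2(c-\underline{c})^2<4c\underline{c}$ and $(c\underline{c}-1)^2<4K^2c\underline{c}$ translate into bounds on $c\underline{c}$ and $c/\underline{c}$; the first yields the $a_{\max}/\tan\tfrac{\kappa^m_{\min}}{4}$ (resp.\ $a_{\min}\tan\tfrac{\kappa^m_{\min}}{4}$) threshold and the second yields the $\Delta a$ threshold (your guessed quadratic in $b_m$ is the right one, but it arises from this discriminant, not from any mapping-of-intervals estimate). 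They also force all $c_n$ to share a sign, which is the true source of the dichotomy: case (i) is $b_m>a_{\max}$ (all $c_n>0$, overall sign negative, sine-Gordon) and case (ii) is $b_m<a_{\min}$ (all $c_n<0$, overall sign positive, mKdV) — not a dichotomy between orbits near $w=0$ and near $w=\pi$.

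Second, and more seriously, the invariant-interval strategy cannot prove the proposition as stated even if carried out. The hypotheses of Proposition \ref{prop:torsion_preservation} constrain only $b_m$; no condition is imposed on $w_0^m$. An argument that the one-step map $w_n\mapsto w_{n+1}$ preserves some interval on which the sign is definite would only establish torsion preservation for initial data $w_0^m$ lying in that interval, i.e.\ a strictly weaker statement with an extra unstated hypothesis. The discriminant argument sidesteps this entirely: once both quadratics are sign-definite for all $W\in(-\infty,\infty)$, the sign of $\sin(w_{n+1}^m+\kappa_{n+1}^m-w_{n-1}^m)$ is constant in $n$ no matter what orbit the $w_n^m$ actually follow, which is exactly the uniformity the proposition requires. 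You would need to replace your steps (a)/(b) with this discriminant analysis to close the argument.
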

\section{Proof of Main Results}
\subsection{Proof of Proposition \ref{prop:dmkdvflow}}\label{subsection_main1}
\subsubsection{Isoperimetricity}
We show that $|\overline{\gamma}_{n+1}-\overline{\gamma}_n|=\epsilon_n$ holds if we give the
deformation of the curves by \eqref{def:isoperi0} and $|\gamma_{n+1}-\gamma_n|=\epsilon_n$. Since we
have from direct computation that
\begin{align*}
|\overline{\gamma}_{n+1}-\overline{\gamma}_n|^2 &
= \epsilon_n^2 + 2\delta^2\left[1-\cos(w_{n+1} + \kappa_{n+1})\cos w_n 
- \cos\nu_{n+1}\sin(w_{n+1} + \kappa_{n+1})\sin w_n\right]\\
 &+ 2\epsilon_n\delta\left[ \cos(w_{n+1}+\kappa_{n+1})-\cos w_n\right],
\end{align*}
it suffices to show
\begin{equation*}
\begin{split}
\delta&\left[1 - \cos(w_{n+1} + \kappa_{n+1})\cos w_n
- \cos\nu_{n+1}\sin(w_{n+1} + \kappa_{n+1})\sin w_n\right]\\
&+ \epsilon_n\left[ \cos(w_{n+1} + \kappa_{n+1})-\cos w_n\right]=0 .
\end{split}
\end{equation*}
Noticing that 
\begin{displaymath}
\epsilon_n = \frac{a_n}{1+\frac{a_n^2\lambda^2}{4}},\quad
\delta = \frac{b}{1+\frac{b^2\lambda^2}{4}},\quad
\lambda=\frac{2}{a_n}\tan\frac{\nu_{n+1}}{2},
\end{displaymath}
the above equation can be rewritten as
\begin{equation}\label{discrete_isoperi}
\begin{split}
&\left(
\frac{\lambda^2a_nb}{4}
\sin\frac{w_{n+1} + \kappa_{n+1} + w_n}{2}
-
\sin\frac{w_{n+1} + \kappa_{n+1} - w_n}{2}
\right)\\
&\times
 \left(a_n \sin\frac{w_{n+1} + \kappa_{n+1} + w_n}{2} 
- b \sin\frac{w_{n+1} + \kappa_{n+1} - w_n}{2}\right)=0,
\end{split}
\end{equation}
which holds since the second factor becomes $0$ by \eqref{def:isoperi-w0}.
\begin{rem}\rm
The deformation such that the first factor of \eqref{discrete_isoperi} becomes $0$ is also
consistent with \eqref{def:isoperi0} and \eqref{def:isoperi-c0}.  See Remark
\ref{rem:alternate_deformation} in the end of this section.
\end{rem}
\subsubsection{Preservation of the Torsion}\label{section:torsion_preserving}
We next show that $\nu$ is invariant by the deformation. Since $\nu_{n+1}\in[-\pi,\pi)$, we have to
show that $\cos\overline{\nu}_{n+1}=\cos\nu_{n+1}$ and $\sin\overline{\nu}_{n+1}=\sin\nu_{n+1}$.
Noticing that $\cos\nu_{n+1}=\langle B_{n},B_{n+1}\rangle$ and $\sin\nu_{n+1}=\langle
N_{n},B_{n+1}\rangle$, we start from the computation of $\overline{T}_n$. From \eqref{def:isoperi0},
we introduce the displacement vector $D_n$ by
\begin{equation}\label{S}
D_n=\frac{\overline{\gamma}_{n}-\gamma_n}{\delta} = \varPhi_n \left[\begin{array}{c}\cos w_n \\ \sin w_n \\ 0  \end{array}\right],
\end{equation}
then it follows by definition that 
\begin{equation}
 \delta\, D_n + \epsilon_n\, \overline{T}_n = \epsilon_n\, T_n + \delta\, D_{n+1}.
\end{equation}
We have by noticing $\lambda=\frac{2}{a_n}\tan\frac{\nu_{n+1}}{2}$
\begin{align}
&\overline{T}_n
=\frac{\delta}{\epsilon_n}D_{n+1}
- \frac{\delta}{\epsilon_n}D_n
+T_n
=\frac{\delta}{\epsilon_n}\varPhi_{n+1}
\left[\begin{array}{c}\cos w_{n+1}\\\sin w_{n+1}\\0\end{array}\right]
- \frac{\delta}{\epsilon_n}\varPhi_{n}
\left[\begin{array}{c}\cos w_{n}\\\sin w_{n}\\0\end{array}\right]
+ \varPhi_{n}\left[\begin{array}{c}1\\0\\0\end{array}\right] \nonumber\\
&= \varPhi_n^m\left\{
\frac{\delta}{\epsilon_n}\left(L_n \left[\begin{array}{c}\cos w_{n+1}\\\sin w_{n+1}\\0\end{array}\right]
- \left[\begin{array}{c}\cos w_{n}\\\sin w_{n}\\0\end{array}\right]
\right) 
+ \left[\begin{array}{c}1\\0\\0\end{array}\right]
\right\} \nonumber\\
&=\frac{1}{1+\frac{b^2\lambda^2}{4}}\varPhi_n
\left[\begin{array}{c}
\medskip
\cos 2U_n
+ \dfrac{b^2\lambda^2}{4}\cos 2V_n\\
\medskip
\sin 2U_n
- \dfrac{b^2\lambda^2}{4}\sin 2V_n\\
-b\lambda\sin (U_n+V_n)
\end{array}\right]. \label{T_m+1}
\end{align}
Here, we put 
\begin{equation}\label{UV}
 U_n = \frac{w_{n+1} + \kappa_{n+1} + w_n}{2},\quad
 V_n = \frac{w_{n+1} + \kappa_{n+1} - w_n}{2}
\end{equation}
and used 
\begin{equation}
 a_n\sin U_n = b\sin V_n,
\end{equation}
which follows from \eqref{def:isoperi-w0}.
Moreover, from
\begin{align*}
 \overline{T}_{n-1} &= \frac{1}{1+\frac{b^2\lambda^2}{4}}\varPhi_{n-1}
\left[\begin{array}{c}
\medskip
\cos 2U_{n-1}
+ \dfrac{b^2\lambda^2}{4}\cos 2V_{n-1}\\
\medskip
\sin 2U_{n-1}
- \dfrac{b^2\lambda^2}{4}\sin 2V_{n-1}\\
-b\lambda\sin (U_{n-1}+V_{n-1})
\end{array}\right]\\
&= \frac{1}{1+\frac{b^2\lambda^2}{4}}\varPhi_{n}
\left(L_{n-1}\right)^{-1}
\left[\begin{array}{c}
\medskip
\cos 2U_{n-1}
+ \dfrac{b^2\lambda^2}{4}\cos 2V_{n-1}\\
\medskip
\sin 2U_{n-1}
- \dfrac{b^2\lambda^2}{4}\sin 2V_{n-1}\\
-b\lambda\sin (U_{n-1}+V_{n-1})
\end{array}\right],
\end{align*}
a tedious but straightforward calculation  by using \eqref{def:isoperi-w0} yields
\begin{align}
 \overline{T}_{n-1}\times \overline{T}_n &=
\frac{\sin(w_{n+1} + \kappa_{n+1} - w_{n-1})}{1+\frac{b^2\lambda^2}{4}}
\varPhi_n \left[\begin{array}{c}
-b\lambda\sin w_n \\
b\lambda\cos w_n\\ 
1 - \frac{b^2\lambda^2}{4}   \end{array}\right],
\end{align}
from which we obtain
\begin{equation}
 \overline{B}_{n}  =
\frac{\overline{T}_{n-1}\times \overline{T}_n}{\left|  \overline{T}_{n-1}\times \overline{T}_n \right|}
=\frac{\sgn(\sin(w_{n+1} + \kappa_{n+1} - w_{n-1}))}{1+\frac{b^2\lambda^2}{4}}\varPhi_n 
\left[\begin{array}{c}
-b\lambda\sin w_n \\
b\lambda\cos w_n\\ 
1 - \frac{b^2\lambda^2}{4}   \end{array}\right],
\label{B_m+1} 
\end{equation}
\begin{align}
  \overline{B}_{n+1}& =
\frac{\sgn(\sin(w_{n+2}+\kappa_{n+2}-w_{n}))}{1+\frac{b^2\lambda^2}{4}}\varPhi_{n} 
L_n\left[\begin{array}{c}
-b\lambda\sin w_{n+1} \\
b\lambda\cos w_{n+1}\\ 
1 - \frac{b^2\lambda^2}{4}
\end{array}\right] \nonumber\\
&= \frac{\sgn(\sin(w_{n+2}+\kappa_{n+2}-w_{n}))}{1+\frac{b^2\lambda^2}{4}}\varPhi_{n} 
\left[\begin{array}{c}
\medskip
-b\lambda\sin(\kappa_{n+1}+w_{n+1}) \\
\medskip
\frac{\lambda\left\{
b(1-\frac{a_n^2\lambda^2}{4}) \cos(\kappa_{n+1}+w_{n+1})
+ a_n (1-\frac{b^2\lambda^2}{4})\right\}}{1 + \frac{a_n^2\lambda^2}{4} }\\ 
-\frac{a_nb\lambda^2\cos(\kappa_{n+1}+w_{n+1}) 
- (1-\frac{a_n^2\lambda^2}{4})(1-\frac{b^2\lambda^2}{4})}
{1 + \frac{a_n^2\lambda^2}{4} }
\end{array}\right] .
\end{align}
Moreover, from $\overline{N}_n=\overline{B}_n\times \overline{T}_n$ we have
\begin{equation}\label{N_m+1}
 \overline{N}_n
=\frac{\sgn(\sin(w_{n+1} + \kappa_{n+1} - w_{n-1}))}{1+\frac{b^2\lambda^2}{4}}\varPhi_n
\left[\begin{array}{c}
\medskip
 -\sin 2U_n - \dfrac{b^2\lambda^2}{4}\sin 2V_n\\
\medskip
  \cos 2U_n - \dfrac{b^2\lambda^2}{4}\cos 2V_n\\
-b\lambda\cos(U_n+V_n)
      \end{array}\right].
\end{equation}
Therefore, if $\sgn(\sin(w_{n+2}+\kappa_{n+2}-w_{n}))=\sgn(\sin(w_{n+1}+\kappa_{n+1}-w_{n-1}))$ holds, then we have
by using \eqref{def:isoperi-w0} that 
\begin{align}
&\cos\overline{\nu}_{n+1} =
\langle \overline{B}_n, \overline{B}_{n+1} \rangle
=\frac{1-\frac{a_n^2\lambda^2}{4}}{1+\frac{a_n^2\lambda^2}{4}}
=\frac{1-\tan^2\frac{\nu_{n+1}}{2}}{1+\tan^2\frac{\nu_{n+1}}{2}}
=\cos\nu_{n+1},\\
&\sin\overline{\nu}_{n+1} =
\langle \overline{B}_{n+1}, \overline{N}_{n} \rangle
=\frac{a_n\lambda}{1+\frac{a_n^2\lambda^2}{4}}
=\frac{2\tan\frac{\nu_{n+1}}{2}}{1+\tan^2\frac{\nu_{n+1}}{2}}
=\sin\nu_{n+1},
\end{align}
which implies $\overline{\nu}_{n+1}=\nu_{n+1}$.
\qed
\begin{rem}\rm
The above discussion shows that the condition that $\sgn(\sin(w_{n+1}+\kappa_{n+1}-w_{n-1}))$ is
constant with respect to $n$ is the necessary and sufficient condition for the deformation being
torsion-preserving. Since we see from \eqref{S} and \eqref{B_m+1} that
$\sgn(\sin(w_{n+1}+\kappa_{n+1}-w_{n-1}))=\sgn(\langle D_n\times B_n, \overline{B}_n\rangle)$, the
geometric meaning of this condition is that $\overline{B}_n$ lie on the same side with respect to the plane
spanned by $D_n$ and $B_n$ for all $n$.
\end{rem}
\subsubsection{Deformation of the Frenet Frame}\label{subsection:time_evolution}
For $\varPhi_n=[T_n,N_n,B_n]\in{\rm SO}(3)$, we construct $M_n\in{\rm SO}(3)$ satisfying
\begin{equation}
\varPhi_{n+1} = \varPhi_n L_n,\quad \overline{\varPhi}_{n} = \varPhi_n M_n.
\end{equation}
In the calculations in the previous section, since the column vectors of 
$M_n$ are given by the right-hand side of \eqref{T_m+1}, \eqref{N_m+1} and \eqref{B_m+1}, respectively, 
we have for $\sin(w_{n+1}+\kappa_{n+1}-w_{n-1})>0$
\begin{align}
M_n &= \frac{1}{1+\frac{b^2\lambda^2}{4}}
\begin{small}
\left[
\begin{array}{ccc}
\cos 2U_n + \dfrac{b^2\lambda^2}{4}\cos 2V_n 
& - \sin 2U_n - \dfrac{b^2\lambda^2}{4}\sin 2V_n
& - b\lambda\sin w_n \\
\sin 2U_n - \dfrac{b^2\lambda^2}{4}\sin 2V_n 
&   \cos 2U_n - \dfrac{b^2\lambda^2}{4}\cos 2V_n
& b\lambda\cos w_n\\
-b\lambda\sin (U_n+V_n)  
& - b\lambda\cos(U_n+V_n)
&  1 - \frac{b^2\lambda^2}{4}
\end{array}
\right]
\end{small}
\\
&=
R_3(w_n)R_1(\mu)R_3(w_{n+1}+\kappa_{n+1}),\quad
\mu = -2\arctan\frac{b\lambda}{2}.
\end{align}
Similarly, we have for $\sin(w_{n+1}+\kappa_{n+1}-w_{n-1})<0$
\begin{align}
M_n &= \frac{1}{1+\frac{b^2\lambda^2}{4}}
\begin{small}
\left[
\begin{array}{ccc}
\cos 2U_n + \dfrac{b^2\lambda^2}{4}\cos 2V_n 
& \sin 2U_n + \dfrac{b^2\lambda^2}{4}\sin 2V_n
& b\lambda\sin w_n \\
\sin 2U_n - \dfrac{b^2\lambda^2}{4}\sin 2V_n 
&  - \cos 2U_n + \dfrac{b^2\lambda^2}{4}\cos 2V_n
& - b\lambda\cos w_n\\
-b\lambda\sin (U_n+V_n)  
&  b\lambda\cos(U_n+V_n)
& -1 + \frac{b^2\lambda^2}{4}
\end{array}
\right] 
\end{small}\\
&=
R_3(w_n)R_1(\mu)R_3(-w_{n+1} - \kappa_{n+1}),\quad
\mu = 2\arctan\frac{2}{b\lambda}.
\end{align}
\qed
\subsection{Proof of Theorem \ref{thm:dmKdV_dsG}}
If the Frenet frame $\varPhi= \left[T, N, B\right]\in\mathrm{SO}(3)$ satisfies \eqref{Phi}, \eqref{L} and \eqref{M},
the corresponding $\mathrm{SU} (2)$-valued function $\phi$ satisfy \eqref{akns.dmkdv.n} and \eqref{akns.dmkdv.m}.
In particular, \eqref{akns.dmkdv.m} is rewritten as
\begin{equation}\label{akns.dmkdv.m_1}
\phi_n^{m+1}
=
\phi_n^m M_n^m,\ 
M_n^m
=
\dfrac{\pm 1}{\sqrt{1+\frac{b_m^2\lambda^2}{4}}}
\left[\begin{array}{cc}\medskip
\cos U_n^m - \frac{\sqrt{-1}b_m\lambda}{2}\cos V_n^m 
& -\sin U_n^m + \frac{\sqrt{-1}b_m\lambda}{2}\sin V_n^m \\
\sin U_n^m + \frac{\sqrt{-1}b_m\lambda}{2}\sin V_n^m 
& \cos U_n^m +\frac{\sqrt{-1}b_m\lambda}{2}\cos V_n^m 
\end{array}\right],
\end{equation}
where $U_n^m$ and $V_n^m$ are given by \eqref{UV}.
Putting
\begin{equation}
T^m_n = \phi_n^m e_1 \left(\phi_n^m\right)^{-1},\quad
N_n^m = \phi_n^m e_2 \left(\phi_n^m\right)^{-1},\quad
B_n^m = \phi_n^m e_3 \left(\phi_n^m\right)^{-1},
\end{equation}
$S_n^m=-(\phi_n^m)_\lambda(\phi_n^m)^{-1}$ satisfies
\begin{align}
S_n^{m+1} - S_n^m 
&= -\phi_n^m(M_n^m)_\lambda(M_n^m)^{-1}(\phi_n^m)^{-1} 
= -\phi_n^m\left\{\frac{-b_m}{1+\frac{b_m^2\lambda^2}{4}}
(\cos w_n^m e_1 + \sin w_n^m e_2)\right\}(\phi_n^m)^{-1} \nonumber\\
&= \delta_m\left(\cos w_n^m T_n^m + \sin w_n^m N_n^m\right),\label{discrete:S_deformation}
\end{align}
which coincides with the definition of the deformation \eqref{def:isoperi}. Similarly, if $\varPhi$
satisfies \eqref{Phi}, \eqref{L} and \eqref{M_sG}, then it follows that $\phi$ satisfies
\eqref{akns.dmkdv.n} and \eqref{akns.dsG.m}. Then \eqref{discrete:S_deformation} is derived in a similar manner.
\qed
%
\begin{rem}\label{rem:alternate_deformation}\rm
The deformation satisfying the condition
\begin{equation}
 \frac{\lambda^2a_nb}{4}
\sin\frac{w_{n+1} + \kappa_{n+1} + w_n}{2}
-
\sin\frac{w_{n+1} + \kappa_{n+1} - w_n}{2}=0
\end{equation}
or
\begin{equation}
w_{n + 1} 
=\label{def:isoperi-w-2}
-  \kappa_{n + 1} 
+ 2\arctan\dfrac{\hat{b} + a_n}{\hat{b} - a_n}\tan \dfrac{w_n}{2},\quad 
\hat{b} =\frac{4}{b\lambda^2}
\end{equation}
which eliminates the first factor of \eqref{discrete_isoperi} is consistent with
\eqref{def:isoperi0} and \eqref{def:isoperi-c0}. This deformation is obtained from the invariance of
$\delta$ with respect to the transformation $b\mapsto \hat{b}$. We do not discuss this deformation
further since all of the above results can be rewritten as those of this deformation by the
redefinition of the parameter $b\mapsto \hat{b}$.
\end{rem}
\subsection{Proof of Proposition \ref{prop:torsion_preservation}}
We fix $m$. Noticing \eqref{def:isoperi-w}, 
$\sin(w_{n+1}^m+\kappa_{n+1}^m-w_{n-1}^m)$ can be rewritten as 
\begin{equation}\label{app:sin}
\scalebox{0.9}{$\displaystyle
 \sin(w_{n+1}^m+\kappa_{n+1}^m-w_{n-1}^m)
=\sin\left(2\arctan\left(\frac{b_m+a_n}{b_m-a_n}\tan\frac{w_n^m}{2}\right)
-2\arctan\left(\frac{b_m-a_{n-1}}{b_m+a_{n-1}}\tan\frac{w_{n}^m+\kappa_n^m}{2}\right)\right).$}
\end{equation}
For simplicity, we put 
\begin{equation}
 \tan\frac{w_n^m}{2}=W,\quad \tan\frac{\kappa_n^m}{2}=K,\quad \quad c=\frac{b_m-a_n}{b_m+a_n},\quad
\underline{c}=\frac{b_m-a_{n-1}}{b_m+a_{n-1}}.
\end{equation}
Here since we have $a_n,b_m>0$, $\kappa_n^m\in(0,\pi)$ and $w_n^m\in[-\pi,\pi)$, it follows that
\begin{equation}\label{eqn:range_params}
 -1<c<1,\quad -1<\underline{c}<1,\quad K>0,\quad -\infty<W<\infty.
\end{equation}
Noticing
\begin{displaymath}
 \sin(2\arctan x - 2\arctan y) = \frac{2(1+xy)(x-y)}{(1+x^2)(1+y^2)},
\end{displaymath}
\eqref{app:sin} can be rewritten as
\begin{equation}\label{app:sin2}
 \sin(w_{n+1}^m+\kappa_{n+1}^m-w_{n-1}^m)=
\frac{-\left\{\underline{c}W^2+K(\underline{c}-c)W+c\right\}
\left\{KW^2+(c\underline{c}-1)W+Kc\underline{c}\right\}}{(1+\frac{W^2}{c^2})
\left[1+\left(\underline{c}\frac{W+K}{1-WK}\right)^2\right]
(1-KW)^2c^2}.
\end{equation}
Now we require that for arbitrary $W$ the sign of \eqref{app:sin2} is constant for all $n$.
To this end, since the denominator of the right-hand side does not affect the sign, 
it is sufficient that discriminants of the two factors of the numerator which are quadratic in $W$ are negative.
Therefore, we determine $b_m$ such that both of the inequalities
\begin{equation}\label{discriminant}
 \begin{split}
& K^2(c-\underline{c})^2-4c\underline{c}<0,\\
& (c\underline{c}-1)^2-4K^2c\underline{c} <0
 \end{split}
\end{equation}
are satisfied simultaneously. Solving the inequalities \eqref{discriminant}, we have
\begin{align}
&\left(\sqrt{K^2+1}-K\right)^2 < c\underline{c} < \left(\sqrt{K^2+1}+K\right)^2,\label{cc1}\\
& \left(\frac{\sqrt{K^2+1}-1}{K}\right)^2 < \frac{c}{\underline{c}} 
< \left(\frac{\sqrt{K^2+1}+1}{K}\right)^2. \label{cc2}
\end{align}
\eqref{cc1} and \eqref{cc2} imply that the contiguous $c_n$ with respect to $n$ must have the same sign, namely, 
for each $m$, $c_n$ must have the same sign for all $n$. So we put
\begin{equation}\label{amax}
a_{\rm max} = \max_n a_n ,\ a_{\rm min} = \min_n a_n,\ 
c_{\rm max} = \max_n c_n = \frac{b_m-a_{\rm min}}{b_m+a_{\rm min}},\ 
c_{\rm min} = \min_n c_n = \frac{b_m-a_{\rm max}}{b_m+a_{\rm max}},
\end{equation}
and we divide the discussion into two cases; (i) $c_{\rm min},c_{\rm max}>0$ ($b_m>a_{\rm max}$) and
(ii) $c_{\rm min},c_{\max}<0$ ($b_m < a_{\rm min}$).
%

\noindent(i) The case of $c_{\rm min},c_{\rm max}>0$ ($b_m>a_{\rm max}$). From \eqref{cc1},
\eqref{cc2}, it is sufficient to find the condition such that the following inequalities
\begin{align}
& \left(\sqrt{K^2+1}-K\right)^2 < c_{\rm min}^2\leq c_{\rm max}^2 < \left(\sqrt{K^2+1}+K\right)^2,
\label{cc3}\\
& \left(\frac{\sqrt{K^2+1}-1}{K}\right)^2 < \frac{c_{\rm min}}{c_{\rm max}}\leq
\frac{c_{\rm max}}{c_{\rm min}} < \left(\frac{\sqrt{K^2+1}+1}{K}\right)^2\label{cc4}
\end{align}
are satisfied simultaneously. Solving \eqref{cc3} and \eqref{cc4} in terms of $b_m$ by noticing
\eqref{eqn:range_params}, \eqref{amax} and
\begin{equation}
 \frac{\sqrt{K^2+1}-1}{K} = \tan\frac{\kappa_n^m}{4},\quad \frac{\sqrt{K^2+1}+1}{K} = 
\frac{1}{\tan\frac{\kappa_n^m}{4}},
\end{equation}
we have
\begin{equation}
b_m > \frac{a_{\rm max}}{\tan\frac{\kappa_n^m}{4}},\quad
 b_m > \frac{\Delta a}{2\cos^2\frac{\kappa_n^m}{2}}
\left(1+\sqrt{1+\frac{4a_{\rm min}a_{\max}}{(\Delta a)^2}\cos^2\frac{\kappa_n^m}{2}}\right),
\end{equation}
respectively. Note that the right-hand sides of the first and the second inequalities are monotonic
decreasing and monotonic increasing with respect to $\kappa_n^m$ respectively. Therefore it is
sufficient to choose $b_m$ as
\begin{equation}
 b_m > \max\left\{
\frac{a_{\rm max}}{\tan\frac{\kappa_{\rm min}^m}{4}},
\frac{\Delta a}{2\cos \frac{\kappa_{\rm max}^m}{2}}
\left(1+\sqrt{1+\frac{4a_{\rm min}a_{\max}}{(\Delta a)^2}\cos^2\frac{\kappa_{\rm max}^m}{2}}\right)
\right\},
\end{equation}
in order for those inequalities to hold for all $n$.

%

\noindent
(ii) The case of $c_{\rm min},c_{\rm max}<0$ ($b_m < a_{\rm max}$). From \eqref{cc1}, \eqref{cc2}
we have the following inequalities:
\begin{align}
& \left(\sqrt{K^2+1}-K\right)^2 < c_{\rm max}^2\leq c_{\rm min}^2 < \left(\sqrt{K^2+1}+K\right)^2,
\label{cc5}\\
& \left(\frac{\sqrt{K^2+1}-1}{K}\right)^2 < \frac{c_{\rm max}}{c_{\rm min}}\leq
\frac{c_{\rm min}}{c_{\rm max}} < \left(\frac{\sqrt{K^2+1}+1}{K}\right)^2.\label{cc6}
\end{align}
Solving these inequalities in a similar manner to the case of (i), we find that 
it is sufficient to choose $b_m$ as
\begin{equation}
 b_m < \min\left\{
a_{\rm min}\tan\frac{\kappa_{\rm min}^m}{4},
\frac{\Delta a}{2\cos^2\frac{\kappa_{\rm max}^m}{2}}
\left(-1+\sqrt{1+\frac{4a_{\rm min}a_{\max}}{(\Delta a)^2}\cos^2\frac{\kappa_{\rm max}^m}{2}}\right)
\right\}.
\end{equation}
This proves (1). As to (2), noticing that $K>0$ in the right hand side of \eqref{app:sin2}, the case
of $c_n>0$ (namely the case of (i)) corresponds to $\sin(w^m_{n+1}+\kappa_{n+1}^m-w_{n+1}^m)<0$, and
the case of $c_n<0$ (namely the case of (ii)) corresponds to
$\sin(w^m_{n+1}+\kappa_{n+1}^m-w_{n+1}^m)>0$ respectively. Moreover, according to Section 
\ref{subsection:time_evolution}, we see that the former and the latter cases correspond to
the deformation described by the discrete sine-Gordon equation and that described by the 
discrete mKdV equation respectively. \qed
\section{Relation to Discrete $K$-Surfaces}
The sequences of deformed discrete curves described in Theorem \ref{thm:dmkdvflow} form discrete
surfaces in $\R^3$. We show that it is nothing but the discrete $K$-surfaces.
\begin{defn}\label{def:discrete-K}\rm
If a map $\gamma:~\mathbb{Z}^2\rightarrow \mathbb{R}^3$, $(n,m)\mapsto \gamma_n^m$ satisfies the following conditions, we
call it the {\it discrete $K$-surface}.
\begin{enumerate}
 \item The five points $\gamma_n^m$, $\gamma_{n\pm 1}^m$, $\gamma_n^{m\pm 1}$ are coplanar.
If this condition is satisfied, we say that $\gamma$ forms the {\em discrete asymptotic net}.
 \item It holds that $|\gamma_{n+1}^m-\gamma_n^m|=|\gamma_{n+1}^{m+1}-\gamma_n^{m+1}|$ and
$|\gamma_{n}^{m+1}-\gamma_n^m|=|\gamma_{n+1}^{m+1}-\gamma_{n+1}^{m}|$.
\end{enumerate}
\end{defn}
\begin{prop}\rm
The sequence of deformed discrete curves $\gamma$ described in Theorem \ref{thm:dmkdvflow} form 
a discrete $K$-surface in $\R^3$.
\end{prop}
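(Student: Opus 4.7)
The plan is to verify the two conditions of Definition \ref{def:discrete-K} directly from Theorem \ref{thm:dmkdvflow} together with the explicit formulas obtained in Section \ref{subsection_main1}.

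Condition (2) is essentially free. By Theorem \ref{thm:dmkdvflow} (1) (isoperimetricity), $|\gamma_{n+1}^m-\gamma_n^m|=\epsilon_n$ holds for every $m$, so in particular $|\gamma_{n+1}^m-\gamma_n^m|=|\gamma_{n+1}^{m+1}-\gamma_n^{m+1}|=\epsilon_n$. By construction \eqref{def:isoperi}--\eqref{def:isoperi-c}, $\gamma_n^{m+1}-\gamma_n^m=\delta_m(\cos w_n^m\,T_n^m+\sin w_n^m\,N_n^m)$ has length $\delta_m$ independently of $n$, so $|\gamma_n^{m+1}-\gamma_n^m|=|\gamma_{n+1}^{m+1}-\gamma_{n+1}^{m}|=\delta_m$.

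For condition (1), I would show that all four difference vectors $\gamma_{n\pm 1}^m-\gamma_n^m$ and $\gamma_n^{m\pm 1}-\gamma_n^m$ lie in the plane through $\gamma_n^m$ orthogonal to the binormal $B_n^m$, i.e., the plane spanned by $T_n^m$ and $N_n^m$. Three of these are immediate: $\gamma_{n+1}^m-\gamma_n^m=\epsilon_n T_n^m$ is orthogonal to $B_n^m$ by the definition of the Frenet frame; $\gamma_{n-1}^m-\gamma_n^m=-\epsilon_{n-1}T_{n-1}^m$ is orthogonal to $B_n^m$ because $B_n^m\parallel T_{n-1}^m\times T_n^m$; and $\gamma_n^{m+1}-\gamma_n^m=\delta_m(\cos w_n^m\,T_n^m+\sin w_n^m\,N_n^m)$ lies in the span of $T_n^m,N_n^m$ by construction.

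The key remaining step is to show $\gamma_n^{m-1}-\gamma_n^m\perp B_n^m$, i.e., that the displacement produced by the transformation $\gamma^{m-1}\mapsto\gamma^m$ also lies in the osculating plane of the \emph{new} curve at $\gamma_n^m$. This is where the explicit form of the transformed binormal comes in. Setting $D_n^{m-1}=\cos w_n^{m-1} T_n^{m-1}+\sin w_n^{m-1} N_n^{m-1}$, formula \eqref{B_m+1} (applied with the roles of $\gamma,\overline\gamma$ played by $\gamma^{m-1},\gamma^m$) gives
\begin{equation*}
B_n^m=\frac{\sgn(\sin(w_{n+1}^{m-1}+\kappa_{n+1}^{m-1}-w_{n-1}^{m-1}))}{1+\frac{b_{m-1}^2\lambda^2}{4}}\,\varPhi_n^{m-1}\begin{bmatrix}-b_{m-1}\lambda\sin w_n^{m-1}\\ b_{m-1}\lambda\cos w_n^{m-1}\\ 1-\frac{b_{m-1}^2\lambda^2}{4}\end{bmatrix},
\end{equation*}
while in the same frame $D_n^{m-1}$ has coordinates $(\cos w_n^{m-1},\sin w_n^{m-1},0)^T$. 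The inner product collapses to $-b_{m-1}\lambda\sin w_n^{m-1}\cos w_n^{m-1}+b_{m-1}\lambda\cos w_n^{m-1}\sin w_n^{m-1}=0$, giving $D_n^{m-1}\perp B_n^m$ as required.

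This closes the coplanarity condition, and therefore $\gamma\colon\mathbb Z^2\to\mathbb R^3$ is a discrete asymptotic net satisfying (2) as well, so it is a discrete $K$-surface. The only nontrivial bookkeeping is the last orthogonality, and it is already implicit in the calculation of $\overline B$ carried out in Section \ref{section:torsion_preserving}; no further computation is needed beyond quoting \eqref{B_m+1}.
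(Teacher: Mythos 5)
Your proposal is correct and follows essentially the same route as the paper: both verify the two conditions of Definition \ref{def:discrete-K} directly, with condition (2) immediate from isoperimetricity and equidistance, and the only nontrivial point being that $\gamma_n^{m-1}-\gamma_n^m$ lies in the osculating plane at $\gamma_n^m$. The paper establishes this by computing $(M_n^{m-1})^{-1}$ applied to the displacement to get the explicit $T_n^m,N_n^m$ coefficients, while you extract only the $B_n^m$-component via \eqref{B_m+1}; this is the same computation (the third column of $M_n^{m-1}$) read off in transposed form, so no substantive difference.
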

\begin{proof}
We first compute $\gamma_n^{m-1}$ in order to verify the condition (1) in Definition 
\ref{def:discrete-K}. From
\begin{displaymath}
 \frac{\gamma_n^{m+1}-\gamma_n^m}{\delta_m} =\cos w_n^m T_n^m + \sin w_n^m N_n^m
=\varPhi_n^m\left[\begin{array}{c}\cos w_n^m \\\sin w_n^m \\ 0\end{array}\right]
\end{displaymath}
and \eqref{Phi}, we have
\begin{displaymath}
 \frac{\gamma_n^{m}-\gamma_n^{m-1}}{\delta_{m-1}} =
\varPhi_n^{m-1}\left[\begin{array}{c}\cos w_n^{m-1} \\\sin w_n^{m-1} \\ 0 \end{array}\right] 
=
\varPhi_n^{m}(M_n^{m-1})^{-1}\left[\begin{array}{c}\cos w_n^{m-1} \\\sin w_n^{m-1} \\ 0 \end{array}\right] .
\end{displaymath}
Substituting \eqref{M} and \eqref{M_sG} as $M_n^m$, we get
\begin{equation*}
\gamma^{m - 1}_n
= \gamma^m_n - \delta_{m - 1}
\left(\cos \left(w^{m - 1}_{n + 1} + \kappa^{m - 1}_{n + 1}\right)
T^m_n
\mp \sin \left(w^{m - 1}_{n + 1} + \kappa^{m - 1}_{n + 1}\right)
N^m_n\right),
\end{equation*}
Here, the cases where $M_n^m$ is given by \eqref{M} and \eqref{M_sG} correspond to $-$ and $+$
respectively, in the right-hand side. This implies $\gamma_n^{m\pm 1}-\gamma_n^m\in{\rm
span}\{T_n^m,N_n^m\}$. Moreover, it is clear from the definition of $N_n^m$ that 
$\gamma_{n\pm 1}^{m}-\gamma_n^m\in{\rm span}\{T_n^m,N_n^m\}$. Therefore $\gamma$ forms the
discrete asymptotic net. Further, the condition (2) follows from 
$\left|\gamma^m_{n + 1} - \gamma^m_n\right| 
= \epsilon_n,\ \left|\gamma^{m + 1}_n - \gamma^m_n\right|= \delta_m$.
This proves that $\gamma$ forms a discrete $K$-surface.
\end{proof}
\begin{rem}\rm
Suppose that the Frenet frame $\varPhi$ satisfies \eqref{Phi}, \eqref{L} and \eqref{M_sG}.
Then putting 
\begin{equation*}
\varPsi^m_n
=
\varPhi^m_n
\left[\begin{array}{ccc}
0 & 0 & 1\\
1 & 0 & 0\\
0 & 1 & 0
\end{array}\right]
R_1 \left(\dfrac{- \theta^m_{n + 1} + \theta^m_n}{2}\right)
\end{equation*}
by using $\theta_n^m$ given in \eqref{theta_sG}, $\varPsi$ satisfies
\begin{align}
& \varPsi^m_{n + 1}
= \varPsi^m_n
R_1 \left(\dfrac{\theta^m_{n + 1} - \theta^m_n}{2}\right)
R_2 \left(- \nu_{n + 1}\right)
R_1 \left(\dfrac{\theta^m_{n + 1} - \theta^m_n}{2}\right),\\
& \varPsi^{m + 1}_n
= \varPsi^m_n
R_1 \left(- \dfrac{\theta^{m + 1}_n + \theta^m_n}{2}\right)
R_2 \left(\mu_m\right)
R_1 \left(\dfrac{\theta^{m + 1}_n + \theta^m_n}{2}\right).
\end{align}
Here, we have used 
\begin{equation}
 \kappa_n^m = \frac{\theta_{n+1}^m-\theta_{n-1}^m}{2}
\end{equation}
which follows from the compatibility condition \eqref{compatibility_sG} and \eqref{theta_sG}.
Lifting $\varPsi \in \SO (3)$ to $\psi \in \SU (2)$, 
$\psi$ satisfies
\begin{align}
& \psi^m_{n + 1}
= \psi^m_n
\left[\begin{array}{cc}\medskip
\cos \dfrac{\nu_{n + 1}}{2}
e^{\frac{\sqrt{- 1}}{2} (\theta^m_{n + 1} - \theta^m_n)} &
\sqrt{- 1} \sin \dfrac{\nu_{n + 1}}{2}\\
\sqrt{- 1} \sin \dfrac{\nu_{n + 1}}{2} &
\cos \dfrac{\nu_{n + 1}}{2}
e^{-\frac{\sqrt{- 1}}{2} (\theta^m_{n + 1} - \theta^m_n)}
\end{array}\right],\\[2mm]
&\psi^{m + 1}_n
= \psi^m_n
\left[\begin{array}{cc}\medskip
\cos \dfrac{\mu_m}{2} &
- \sqrt{- 1} \sin \dfrac{\mu_m}{2}
e^{-\frac{\sqrt{- 1}}{2} (\theta^{m+1}_{n} + \theta^m_n)}\\
- \sqrt{- 1} \sin \dfrac{\mu_m}{2}
e^{\frac{\sqrt{- 1}}{2} (\theta^{m+1}_{n} + \theta^m_n)}&
\cos \dfrac{\mu_m}{2}
\end{array}\right].
\end{align}
Noticing 
\begin{equation*}
\cos \dfrac{\nu_{n + 1}}{2}
=
\dfrac{1}{\sqrt{ 1+ \frac{a_n^2\lambda^2 }{4}}},\ 
\sin \dfrac{\nu_{n + 1}}{2}
=
\dfrac{\frac{a_n\lambda }{2}}{\sqrt{1 + \frac{a_n^2\lambda^2 }{4}}},\ 
\cos \dfrac{\mu_m}{2}
=
\dfrac{\frac{b_m\lambda}{2}}{\sqrt{1 + \frac{b_m^2\lambda^2 }{4}}},\ 
\sin \dfrac{\mu_m}{2}
=
\dfrac{1}{\sqrt{1  + \frac{b_m^2\lambda^2}{4}}},
\end{equation*}
we find that this $\SU (2)$ representation coincides with
the Lax pair of the discrete sine-Gordon equation constructed by
Bobenko and Pinkall\cite{Bobenko-Pinkall} , and that
$\varPsi$ gives the orthonormal frame of the discrete $K$-surface
presented in \cite{Bobenko-Pinkall}.
\end{rem}
\par\bigskip

\noindent\textbf{Acknowledgements.}\quad The authors would like to thank Professor Shimpei Kobayashi for
fruitful discussions. This work is partially supported by JSPS KAKENHI no. 22656026, 23340037,
24340029, 24540063 and 24540103.  \appendix
\section{Correspondence between Curves on Sphere and Space Curves with Constant Torsion}\label{section:app_A}
\begin{prop}\rm
Let $x$ be an arc-length parameter, and for $\lambda\in\mathbb{R}$ let 
$\Gamma(x)\in S^2(\frac{1}{|\lambda|})$ be a curve on $S^2(\frac{1}{|\lambda|})$.
Then
\begin{equation}\label{A:def_gamma}
 \gamma = \lambda\int \Gamma\times \Gamma'~dx\in\mathbb{R}^3
\end{equation}
is a space curve with the constant torsion $\lambda$. Conversely,
for a space curve $\gamma\in\mathbb{R}^3$ with the constant torsion $\lambda$,
let $B$ be its binormal vector. Then 
\begin{equation}\label{A:def_Gamma}
 \Gamma = \pm \frac{1}{\lambda}\,B
\end{equation}
is a curve on $S^2(\frac{1}{|\lambda|})$
\end{prop}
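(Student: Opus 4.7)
For the forward direction, the plan is to work in the orthogonal frame $\{\Gamma,\Gamma',\Gamma\times\Gamma'\}$ attached to the spherical curve. Writing $r=1/|\lambda|$, the sphere condition $\langle\Gamma,\Gamma\rangle=r^2$ differentiated once gives $\langle\Gamma,\Gamma'\rangle=0$, while arc-length parametrization gives $\langle\Gamma',\Gamma'\rangle=1$. Hence $\Gamma\perp\Gamma'$, $|\Gamma\times\Gamma'|=r$, and these three vectors are mutually orthogonal. Differentiating $\langle\Gamma,\Gamma'\rangle=0$ and $\langle\Gamma',\Gamma'\rangle=1$ once more yields $\langle\Gamma,\Gamma''\rangle=-1$ and $\langle\Gamma',\Gamma''\rangle=0$, so the expansion of $\Gamma''$ in the above frame takes the form
\begin{equation*}
\Gamma''=-\tfrac{1}{r^{2}}\Gamma+\tfrac{\kappa_g}{r}(\Gamma\times\Gamma'),
\end{equation*}
where $\kappa_g=\langle\Gamma'',\Gamma\times\Gamma'\rangle/r$ is the geodesic curvature of the spherical curve.

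Next I differentiate \eqref{A:def_gamma} to obtain $T:=\gamma'=\lambda\,\Gamma\times\Gamma'$; the relation $|\Gamma\times\Gamma'|=r$ immediately gives $|T|=|\lambda|r=1$, so $x$ is also the arc-length for $\gamma$. Differentiating once more and using the standard identity $\Gamma\times(\Gamma\times\Gamma')=-r^{2}\Gamma'$ together with the expansion of $\Gamma''$ above, one obtains $T'=-\lambda r\kappa_g\,\Gamma'$. Reading off the curvature $\kappa=|\kappa_g|$ and normal $N=-\sgn(\lambda\kappa_g)\Gamma'$, the binormal becomes $B=T\times N=\pm\lambda\,\Gamma$ (the sign depending on those of $\lambda$ and $\kappa_g$). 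Differentiating this expression for $B$ gives $B'=\pm\lambda\,\Gamma'=-\lambda N$, so by the Frenet--Serret formula \eqref{Frenet-Serret:continuous} the torsion of $\gamma$ is the constant $\lambda$, as required.

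The converse direction is essentially immediate once we recall that the binormal vector $B$ is a unit vector, so $\Gamma=\pm B/\lambda$ satisfies $|\Gamma|=1/|\lambda|$ and hence lies on $S^{2}(1/|\lambda|)$; regularity of $\Gamma$ follows from $\Gamma'=\pm B'/\lambda=\mp N$, which is a unit vector. The only mildly delicate point throughout is bookkeeping of the $\pm$ signs arising from the orientation conventions for $N,B$ and from $\sgn(\lambda)$ — everything else is routine differentiation plus the single vector identity $\Gamma\times(\Gamma\times\Gamma')=\langle\Gamma,\Gamma'\rangle\Gamma-\langle\Gamma,\Gamma\rangle\Gamma'$. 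That sign bookkeeping is what accounts for the $\pm$ in \eqref{A:def_Gamma}, and I would handle it by fixing orientations (say $\kappa_g>0$) in the forward direction and then remarking that the reverse direction recovers $\Gamma$ up to the antipodal map on the sphere.
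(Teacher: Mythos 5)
Your argument is correct and follows essentially the same route as the paper's proof: both expand $\Gamma''$ in the orthogonal frame built from $\Gamma$, $\Gamma'$, $\Gamma\times\Gamma'$ (introducing the geodesic curvature as the only unknown coefficient), compute $T=\lambda\,\Gamma\times\Gamma'$ and $T'$ via the triple-product identity to get $N=\pm\Gamma'$ and $B=\pm\lambda\Gamma$, and then read off the torsion from $-\langle N,B'\rangle=\lambda$, with the converse being immediate from $|B|=1$. Your added remarks that $|T|=1$ (so $x$ remains arc-length for $\gamma$) and that $\Gamma'=\mp N$ in the converse are small but welcome clarifications of points the paper leaves implicit.
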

\begin{proof}
We first show that the torsion of $\gamma$ in \eqref{A:def_gamma} is $\lambda$.
By definition we have
\begin{equation}\label{A:Gamma1}
 \langle \Gamma,\Gamma\rangle = \frac{1}{\lambda^2},\quad
 \langle \Gamma',\Gamma'\rangle = 1.
\end{equation}
We define the Frenet frame $F$ of $\Gamma$ by
\begin{equation}
 F = \left[\lambda\,\Gamma\times \Gamma', \Gamma', -\lambda\,\Gamma\right]\in\SO(3).
\end{equation}
We compute $\Gamma''$ by noticing that $\langle \Gamma'',\Gamma\rangle=-1$, $\langle
\Gamma'',\Gamma'\rangle=0$ from \eqref{A:Gamma1}. Then we find that there exists a function $\kappa_G$
such that 
\begin{equation}
 \Gamma'' = \kappa_G (\lambda\,\Gamma\times\Gamma') + \lambda(-\lambda\Gamma)
\end{equation}
is satisfied. The tangent vector $T$ of $\gamma$ and $T'$ is computed as
\begin{align}
& T = \gamma'=\lambda\,\Gamma\times \Gamma'\\
& T' = \lambda\,\Gamma\times \Gamma'' = \lambda^2\kappa_G\Gamma\times(\Gamma\times \Gamma')
=\lambda^2\kappa_G(\langle \Gamma,\Gamma'\rangle \Gamma - \langle\Gamma,\Gamma\rangle \Gamma')
=-\kappa_G\Gamma',
\end{align}
respectively. Then the curvature of $\gamma$ is given by
\begin{equation}
 \kappa=|T'| = \left|\kappa_G\right|.
\end{equation}
Moreover, the normal vector and the binormal vector of $\gamma$ is computed as
\begin{align}
& N=\frac{T'}{\kappa} = \pm \Gamma'\\
& B=T\times N = \lambda(\Gamma\times \Gamma')\times \left(\pm \Gamma'\right)
=\mp \lambda\Gamma'\times(\Gamma\times \Gamma')=\mp \lambda\,\Gamma,
\end{align}
respectively. Therefore, the torsion $-\langle N,B'\rangle$ of $\gamma$ is given by
\begin{equation}
 -\langle N,B'\rangle = -\langle \pm \Gamma',\mp\lambda\,\Gamma'\rangle
=\lambda\langle \Gamma',\Gamma'\rangle
=\lambda,
\end{equation}
which proves the first half of the statement. The second half is obvious.
\end{proof}
\begin{prop}\rm
For $\lambda\in\mathbb{R}$, let $\Gamma_n\in S^2(\frac{1}{|\lambda|})$ be a discrete curve on
$S^2(\frac{1}{|\lambda|})$. Then, if the sign of $\langle \Gamma_{n-1}\times \Gamma_n,\Gamma_{n+1}\rangle$
is constant with respect to $n$, 
\begin{equation}\label{A:def_gamma_n}
 \gamma_n =\lambda\sum_{k}^{n-1} \Gamma_{k+1}\times \Gamma_{k}\in\mathbb{R}^3
\end{equation}
is a discrete space curve with the constant torsion $\lambda$.
Conversely, for a discrete space curve $\gamma_n\in\mathbb{R}^3$ with the constant torsion $\lambda$
let $B_n$ be its binormal vector. Then 
\begin{equation}\label{A:def_Gamma_n}
 \Gamma_n = \pm \frac{1}{\lambda}\,B_n
\end{equation}
is a discrete curve on $S^2(\frac{1}{|\lambda|})$.
\end{prop}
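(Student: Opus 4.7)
The plan is to mirror the proof of the smooth version just above, replacing the integral by the telescoping sum. Starting from \eqref{A:def_gamma_n} I obtain the edge vector $\gamma_{n+1}-\gamma_n=\lambda\,\Gamma_{n+1}\times\Gamma_n$, from which I read off
\[
\epsilon_n=|\lambda|\,|\Gamma_{n+1}\times\Gamma_n|,\qquad
T_n=\sgn(\lambda)\,\frac{\Gamma_{n+1}\times\Gamma_n}{|\Gamma_{n+1}\times\Gamma_n|}.
\]

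Next I compute $T_{n-1}\times T_n$ using the vector triple-product identity $(u\times v)\times(p\times q)=\langle u\times v,q\rangle p-\langle u\times v,p\rangle q$. With $u=\Gamma_n$, $v=\Gamma_{n-1}$, $p=\Gamma_{n+1}$, $q=\Gamma_n$, the first term vanishes (repeated $\Gamma_n$) and what remains is a nonzero multiple of $\Gamma_n$ whose scalar coefficient is proportional to $\langle\Gamma_{n-1}\times\Gamma_n,\Gamma_{n+1}\rangle$. Combining this with the sphere constraint $|\Gamma_n|=1/|\lambda|$, I arrive at
\[
B_n=\frac{T_{n-1}\times T_n}{|T_{n-1}\times T_n|}=\sigma\lambda\,\Gamma_n,\qquad \sigma\in\{-1,+1\}.
\]
The constant-sign hypothesis on $\langle\Gamma_{n-1}\times\Gamma_n,\Gamma_{n+1}\rangle$ plays a double role: it guarantees $T_{n-1}\times T_n\neq 0$ (so the three consecutive points $\gamma_{n-1},\gamma_n,\gamma_{n+1}$ are non-colinear and $\gamma$ is a bona fide discrete space curve with a well-defined binormal), and it ensures that $\sigma$ is independent of $n$.

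For the torsion, I then compute
\[
\sin\nu_{n+1}=\langle N_n,B_{n+1}\rangle=\langle B_{n+1}\times B_n,T_n\rangle=\lambda^2\sgn(\lambda)\,|\Gamma_{n+1}\times\Gamma_n|,
\]
using $B_k=\sigma\lambda\Gamma_k$ (note that $\sigma^2=1$ removes the sign ambiguity at this step) and the expression for $T_n$ above. Dividing by $\epsilon_n=|\lambda|\,|\Gamma_{n+1}\times\Gamma_n|$ yields $\lambda_n=\sin\nu_{n+1}/\epsilon_n=\lambda$, as required. The converse direction is then immediate from $B_n=\sigma\lambda\,\Gamma_n$: given $\gamma$ with constant torsion $\lambda$ and unit binormal $B_n$, setting $\Gamma_n=\pm B_n/\lambda$ gives $|\Gamma_n|=1/|\lambda|$, placing $\Gamma_n$ on $S^2(1/|\lambda|)$.

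I expect the main obstacle to be not any single calculation but rather the careful bookkeeping of signs — $\sgn(\lambda)$, the choice $\sigma=\pm 1$, and the orientations of the successive cross products — together with a clean explanation of why the constant-sign hypothesis is precisely the discrete compatibility condition needed to make the lift from $S^2(1/|\lambda|)$ to $\mathbb{R}^3$ well defined. In the smooth case this consistency holds automatically; the discrete setting forces us to impose it as an explicit condition, and identifying it with the geometric statement that the $\gamma$-curve has well-defined binormals of a single orientation is the conceptual content of the argument.
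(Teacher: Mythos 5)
Your proposal is correct and follows essentially the same route as the paper's proof: read off $T_n$ from the telescoping edge vector $\gamma_{n+1}-\gamma_n=\lambda\,\Gamma_{n+1}\times\Gamma_n$, use the vector triple-product identity to show $B_n$ is a multiple of $\Gamma_n$ whose sign is $\sgn\langle\Gamma_{n-1}\times\Gamma_n,\Gamma_{n+1}\rangle$, and then check $\sin\nu_{n+1}/\epsilon_n=\lambda$ under the constant-sign hypothesis (your $\sigma$ is the paper's $s_n$ up to a factor $\sgn\lambda$, so the bookkeeping agrees). The only cosmetic differences are that the paper parameterizes $|\Gamma_{n+1}\times\Gamma_n|$ by the angle $\xi_{n+1}$ and computes $N_n$ explicitly before taking $\langle B_{n+1},N_n\rangle$, whereas you shortcut this with the cyclic invariance of the scalar triple product.
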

\begin{proof}
We only show the first half of the statement, since the second half is obvious.
We show that the torsion of $\gamma_n$ given in \eqref{A:def_gamma_n} is $\lambda$.
For $\Gamma_{n-1}, \Gamma_{n}\in S^2(\frac{1}{|\lambda|})$ let $\xi_{n}\in(0,\pi)$ be
the angle between $\Gamma_{n-1}$ and $\Gamma_{n}$ (see Fig. \ref{fig:sphere}). 
Since we have 
\begin{displaymath}
\gamma_{n+1}-\gamma_n = \lambda\,\Gamma_{n+1}\times \Gamma_{n},\quad
\left|\gamma_{n+1}-\gamma_n\right|=
|\lambda|\,|\Gamma_{n+1}|\,|\Gamma_n|\,\sin \xi_{n+1}
=\frac{\sin\xi_{n+1}}{|\lambda|},
\end{displaymath}
the tangent vector $T_n$ of $\gamma_n$ is given by
\begin{equation}
 T_n = \frac{\gamma_{n+1}-\gamma_n}{|\gamma_{n+1}-\gamma_n|}
=\frac{\lambda|\lambda|}{\sin\xi_{n+1}}\Gamma_{n+1}\times \Gamma_{n}.
\end{equation}
The binormal vector $B_n=\frac{T_{n-1}\times T_n}{|T_{n-1}\times T_n|}$ of $\gamma_n$ is computed as follows.
First, we have
\begin{equation}
 T_{n-1}\times T_n = \frac{\lambda^4}{\sin\xi_{n+1}\sin\xi_n}
\left\{(\Gamma_{n}\times \Gamma_{n-1})\times(\Gamma_{n+1}\times \Gamma_{n})\right\}.
\end{equation}
Since we have
\begin{displaymath}
 (\Gamma_{n}\times \Gamma_{n-1})\times(\Gamma_{n+1}\times \Gamma_{n})
=\langle \Gamma_{n}\times \Gamma_{n-1},\Gamma_{n}\rangle\Gamma_{n+1}
- \langle \Gamma_{n}\times \Gamma_{n-1},\Gamma_{n+1}\rangle\Gamma_{n}
=-\langle \Gamma_{n}\times \Gamma_{n-1},\Gamma_{n+1}\rangle\Gamma_{n},
\end{displaymath}
we find that $B_n$ is proportional to $\Gamma_n$. Noticing that $\sin\xi_n>0$ for all $n$, we get by normalization
\begin{equation}
 B_n = s_n\,|\lambda|\,\Gamma_n,\quad
s_n = \sgn(\langle \Gamma_{n-1}\times \Gamma_{n},\Gamma_{n+1}\rangle).
\end{equation}
Moreover, the principal normal vector $N_n$ is computed as
\begin{align}
 N_n &= 
B_n\times T_n = 
\frac{\lambda^3s_n}{\sin\xi_{n+1}}
\Gamma_n\times(\Gamma_{n+1}\times \Gamma_{n})
=\frac{\lambda^3s_n}{\sin\xi_{n+1}}
\left(\langle \Gamma_n,\Gamma_n\rangle\Gamma_{n+1} - \langle \Gamma_n,\Gamma_{n+1}\rangle\Gamma_{n}\right)
\nonumber\\
& =\frac{\lambda s_n}{\sin\xi_{n+1}}
\left(\Gamma_{n+1} - \cos\xi_{n+1}\Gamma_{n}\right).
\end{align}
Therefore, from 
\begin{equation}
 \sin\nu_{n+1}=\langle B_{n+1},N_n\rangle
=\langle  |\lambda|\,s_{n+1}\,\Gamma_{n+1},
\frac{\lambda s_n}{\sin\xi_{n+1}}
\left(\Gamma_{n+1} - \cos\xi_{n+1}\Gamma_{n}\right)\rangle
=\frac{|\lambda|}{\lambda}s_ns_{n+1}\sin\xi_{n+1},
\end{equation}
the torsion $\frac{\sin\nu_{n+1}}{|\gamma_{n+1}-\gamma_n|}$ is given by
\begin{equation}
\frac{\sin\nu_{n+1}}{|\gamma_{n+1}-\gamma_n|}=s_{n}s_{n+1}\lambda
\end{equation}
Therefore, if $s_n$ is constant with respect to $n$, $\gamma$ is a discrete space curve with the
constant torsion $\lambda$.
 \end{proof}
%
For the curves on sphere ,it is easy to lift the formulation of the deformation for the Frenet frame
to that for the curve $\Gamma_n$. We define the Frenet frame $F_n$ of the curve $\Gamma_n$
by (see Fig. \ref{fig:sphere})
\begin{equation}\label{A:Frenet}
 F_n = \left[
\frac{\lambda^2}{\sin\xi_{n+1}}\Gamma_n\times \Gamma_{n+1},
\frac{\lambda}{\sin\xi_{n+1}}\left(\Gamma_{n+1}-\cos\xi_{n+1}\Gamma_n\right),
-\lambda\Gamma_n
\right]=[\hat{T}_n,\hat{N}_n,\hat{B}_n].
\end{equation}
Then noticing that $\hat N_n$, $\hat B_n$ and $\hat B_{n+1}$ are coplanar, we find that there
exists a function $\hat\kappa_n$ such that $F_n$ satisfies the Frenet-Serret formula
\begin{equation}\label{A:Frenet-Serret}
 F_{n+1}=F_{n}\,R_1(\xi_{n+1})R_3(\hat\kappa_{n+1}).
\end{equation}
%
 \begin{figure}[ht]
\begin{center}
  \includegraphics[scale=0.5]{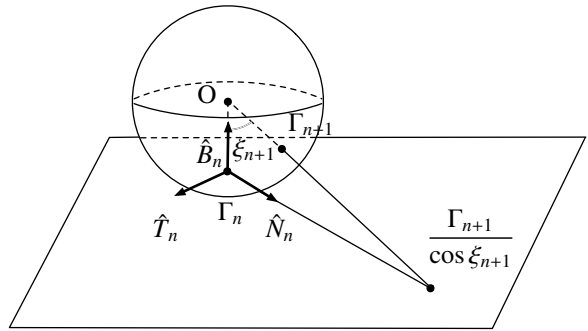}
\caption{The Frenet frame for the curves on sphere.}\label{fig:sphere}
\end{center}
 \end{figure}

For example, let $\xi_n=\xi$ and $a>0$ be constants, and we define the deformation
of the Frenet frame by
\begin{equation}\label{A:M}
\dot{F}_{n} = F_n M_n,\quad 
M_n=\frac{1}{a}
 \left[\begin{array}{ccc}\smallskip
0&-\cos\xi\tan\frac{\hat\kappa_n}{2} - \tan\frac{\hat\kappa_{n+1}}{2} & -\sin\xi\tan\frac{\hat\kappa_n}{2}\\
\smallskip
\cos\xi\tan\frac{\hat\kappa_n}{2} + \tan\frac{\hat\kappa_{n+1}}{2} & 0 & -\sin \xi\\
\sin\xi\tan\frac{\hat\kappa_n}{2} & \sin \xi & 0
	    \end{array}\right],
\end{equation}
then it follows from the compatibility condition of \eqref{A:Frenet-Serret} and \eqref{A:M} the
semi-discrete mKdV equation for $\hat\kappa_n$
\begin{equation}
\dot{\hat\kappa}_n
=\label{A:sdmkdv}
\dfrac{\,1\,}{a}
\left(\tan \dfrac{\hat\kappa_{n + 1}}{2}
- \tan \dfrac{\hat\kappa_{n - 1}}{2}\right).
\end{equation}
To derive the deformation for the curve $\Gamma_n$, since we have $\hat
B_n=-\lambda\Gamma_n$ from \eqref{A:Frenet}, the third column of the matrices in both sides of \eqref{A:M} immediately yields
\begin{equation}
 \dot\Gamma_n = \frac{1}{a\lambda}\left(\sin\xi\tan\frac{\hat\kappa_n}{2}\,\hat T_n
+\sin\xi\,\hat N_n\right).
\end{equation}
This is the deformation of the discrete curves on sphere which is analogous to the deformation of
the discrete space curves discussed in Section \ref{section:semi-discrete}. In order to transform it
to the deformation of the curve $\gamma_n$ in $\R^3$ by using the correspondence in
\eqref{A:def_gamma_n}, it is necessary to perform the summation, which is not a trivial procedure.


\end{document}